\documentclass[a4paper,10pt]{tlp}

\usepackage{tikz}
\usetikzlibrary{trees,decorations.pathmorphing}

\usepackage{float}
\usetikzlibrary{arrows.meta,shapes,positioning,calc,automata,quotes,patterns}
\tikzset{
	>={To[scale length=1.5]},
	intuitionistic relation/.style={->},
	successor relation/.style={->,dashed},
	model state/.style={circle, inner sep=0pt, minimum size=.3cm},
	p model state/.style={model state, color=gray!30!black, fill, text=white},
	not p model state/.style={model state, draw, fill=white},
	big model state/.style={circle, inner sep=0pt, minimum size=.42cm},
	p big model state/.style={big model state, color=gray!30!black, fill, text=white},
	not p big model state/.style={big model state, draw, fill=white}
}

\usepackage[misc,geometry]{ifsym}
\usepackage{amsmath}
\usepackage{amssymb}
\usepackage{graphicx}
\usepackage{caption}
\usepackage{subcaption}
\usepackage{scalerel}
\usepackage{multicol}

\let\next\relax
\RequirePackage{bm}
\RequirePackage{textcomp}
\RequirePackage{upgreek}

\RequirePackage{xcolor}
\makeatletter

\def\starttrans{\xdef\endtrans{\catcode`\noexpand\@=\the\catcode`\@
  \catcode`\noexpand\:=\the\catcode`\:
 }\catcode`\@=11 \catcode`\:=11 }

\starttrans

\newbox\transbox

\newcount\transfactor
\newdimen\trans:dim
\newdimen\trans:dim:a
\newdimen\trans:dim:b
\newdimen\trans:dim:c
\newdimen\trans:dim:d

\newcount\trans:count

\def\trans:def{}
\def\trans:def:a{}
\def\trans:def:b{}
\def\trans:def:c{}
\def\trans:def:d{}

\def\transboxini{\afterassignment\transboxcheck
 \setbox\transbox}

\def\transboxcheck{\ifvoid\transbox
  \expandafter\aftergroup
 \fi\transboxtodo}

\def\transhboxini{\afterassignment\transhboxcheck
 \setbox\transbox}

\def\transhboxcheck{\ifvoid\transbox
  \expandafter\aftergroup\expandafter\transhboxwrap
 \else
  \expandafter\transhboxwrap
 \fi}

\def\transhboxwrap{\ifvbox\transbox
  \setbox\transbox\hbox{\box\transbox}\fi
 \transboxtodo}

\long\def\transboxdef#1\transboxend{\bgroup\def\transboxtodo{#1\egroup}\transboxini}

\long\def\transhboxdef#1\transboxend{\bgroup\def\transboxtodo{#1\egroup}\transhboxini}

\newif\iftransbbox
\newif\iftranscbox
\transbboxtrue
\transcboxtrue

\long\def\transbcboxdef#1\transbboxdef#2\transcboxdef#3\transboxend{\bgroup\edef\transboxtodo{\unexpanded{#1}\iftransbbox\unexpanded{#2}\fi
  \iftranscbox\unexpanded{#3}\else\box\transbox\fi
 \egroup}\transhboxini}

\def\bboxtranson{\hbox\bgroup\transbboxtrue
  \def\transboxtodo{\box\transbox\egroup}\transboxini}

\def\bboxtransoff{\hbox\bgroup\transbboxfalse
  \def\transboxtodo{\box\transbox\egroup}\transboxini}

\def\cboxtranson{\hbox\bgroup\transcboxtrue
  \def\transboxtodo{\box\transbox\egroup}\transboxini}

\def\cboxtransoff{\hbox\bgroup\transcboxfalse
  \def\transboxtodo{\box\transbox\egroup}\transboxini}

\def\boxresizeto#1#2#3{\hbox\transboxdef
  \ifx\relax#1\relax\else \wd\transbox=\dimexpr#1\relax \fi
  \ifx\relax#2\relax\else \ht\transbox=\dimexpr#2\relax \fi
  \ifx\relax#3\relax\else \dp\transbox=\dimexpr#3\relax \fi
  \box\transbox
 \transboxend}

\def\boxresize#1#2#3{\hbox\transboxdef
  \ifx\relax#1\relax\else \wd\transbox=\dimexpr\wd\transbox+#1\relax \fi
  \ifx\relax#2\relax\else \ht\transbox=\dimexpr\ht\transbox+#2\relax \fi
  \ifx\relax#3\relax\else \dp\transbox=\dimexpr\dp\transbox+#3\relax \fi
  \box\transbox
 \transboxend}

\def\boxextents#1#2#3#4{\hbox\transboxdef
  \kern\dimexpr#1\relax
   \vbox{\kern\dimexpr#3\relax
    \box\transbox
    \kern\dimexpr#4\relax
   }\kern\dimexpr#2\relax
 \transboxend}

\def\boxhextent#1#2{\boxextents{#1}{#2}\z@\z@}
\def\boxvextent#1#2{\boxextents\z@\z@{#1}{#2}}

\def\boxexts#1#2#3#4{\hbox\transboxdef
  \trans:dim:a=\wd\transbox
  \trans:dim:b=\dimexpr\trans:dim:a+#1+#2\relax
  \trans:dim:c=\dimexpr\ht\transbox+\dp\transbox\relax
  \trans:dim:d=\dimexpr\trans:dim:c+#3+#4\relax
  \edef\trans:def:a{\fdivide\trans:dim:b\trans:dim:a}\edef\trans:def:b{\fdivide\trans:dim:d\trans:dim:c}\savebp\trans:def:c-\dimexpr#1\relax
  \savebp\trans:def:d\dimexpr-#4+(#3+#4)*\dp\transbox/\trans:dim:c\relax
  \pdfliteral{q \trans:def:a\space 0 0 \trans:def:b\space
                \trans:def:c\space \trans:def:d\space cm}\savebp\trans:def\wd\transbox
  \box\transbox
  \pdfliteral{Q 1 0 0 1 \trans:def\space 0 cm}\transboxend}

\def\boxhext#1#2{\boxexts{#1}{#2}\z@\z@}
\def\boxvext#1#2{\boxexts\z@\z@{#1}{#2}}

\def\boxrevolveleft{\hbox\transbcboxdef
  \trans:dim:a=\wd\transbox
  \trans:dim:b=\ht\transbox
 \transbboxdef
  \wd\transbox=\dimexpr\ht\transbox+\dp\transbox\relax
  \ht\transbox=\trans:dim:a
  \dp\transbox=\z@
 \transcboxdef
  \pdfliteral{q 0 1 -1 0 \tobp{\trans:dim:b} 0 cm}\savebp\trans:def\wd\transbox
  \box\transbox
  \pdfliteral{Q 1 0 0 1 \trans:def\space 0 cm}\transboxend}

\def\boxrevolveright{\hbox\transbcboxdef
  \trans:dim:a=\wd\transbox
  \trans:dim:b=\dp\transbox
 \transbboxdef
  \wd\transbox=\dimexpr\ht\transbox+\dp\transbox\relax
  \ht\transbox=\trans:dim:a
  \dp\transbox=\z@
 \transcboxdef
  \pdfliteral{q 0 -1 1 0 \tobp{\trans:dim:b} \tobp{\trans:dim:a} cm}\savebp\trans:def\wd\transbox
  \box\transbox
  \pdfliteral{Q 1 0 0 1 \trans:def\space 0 cm}\transboxend}

\def\boxrotatexy#1#2#3{\hbox\transboxdef
  \floatsincos\trans:def:a\trans:def:b{#1}\savebp\trans:def:c\dimexpr#2\relax
  \savebp\trans:def:d\dimexpr#3\relax
  \pdfliteral{q \trans:def:b\space
                \negbp\trans:def:a\space
                \trans:def:a\space
                \trans:def:b\space
                \trans:def:c\space
                \trans:def:d\space cm
                1 0 0 1 \negbp\trans:def:c\space
                        \negbp\trans:def:d\space cm}\savebp\trans:def\wd\transbox
  \box\transbox
  \pdfliteral{Q 1 0 0 1 \trans:def\space 0 cm}\transboxend}

\def\boxrotatell#1{\boxrotatexy{#1}\z@{-\dp\transbox}}

\def\boxrotateul#1{\boxrotatexy{#1}\z@{\ht\transbox}}

\newif\ifbboxright

\def\box:rotate:bb#1{\trans:dim:a=\wd\transbox
 \trans:dim:b=\ht\transbox
 \trans:dim:c=\dp\transbox
 \trans:dim:d=\dimexpr\ht\transbox+\dp\transbox\relax
 \trans:count=\reducetrigangle{#1}\fractperiod\relax
 \ifcase\fracttrigfourth\trans:count\relax
  \fr@ct:sin:cos:i\trans:def:a\trans:def:b\trans:count
  \wd\transbox=\dimexpr\fr@ct:mul\trans:dim:a\trans:def:b
                      +\fr@ct:mul\trans:dim:d\trans:def:a\relax
  \ht\transbox=\dimexpr\fr@ct:mul\trans:dim:b\trans:def:b\relax
  \dp\transbox=\dimexpr\fr@ct:mul\trans:dim:a\trans:def:a
                      +\fr@ct:mul\trans:dim:c\trans:def:b\relax
  \savebp\trans:def:c=\dimexpr\fr@ct:mul\trans:dim:c\trans:def:a\relax
 \or
  \fr@ct:sin:cos:ii\trans:def:a\trans:def:b\trans:count
  \wd\transbox=\dimexpr-\fr@ct:mul\trans:dim:a\trans:def:b
                       +\fr@ct:mul\trans:dim:d\trans:def:a\relax
  \ht\transbox=\dimexpr-\fr@ct:mul\trans:dim:c\trans:def:b\relax
  \dp\transbox=\dimexpr-\fr@ct:mul\trans:dim:b\trans:def:b
                        +\fr@ct:mul\trans:dim:a\trans:def:a\relax
  \savebp\trans:def:c=\dimexpr-\fr@ct:mul\trans:dim:a\trans:def:b
                              +\fr@ct:mul\trans:dim:c\trans:def:a\relax
 \or
  \fr@ct:sin:cos:iii\trans:def:a\trans:def:b\trans:count
  \wd\transbox=\dimexpr-\fr@ct:mul\trans:dim:a\trans:def:b
                       -\fr@ct:mul\trans:dim:d\trans:def:a\relax
  \ht\transbox=\dimexpr-\fr@ct:mul\trans:dim:a\trans:def:a
                       -\fr@ct:mul\trans:dim:c\trans:def:b\relax
  \dp\transbox=\dimexpr-\fr@ct:mul\trans:dim:b\trans:def:b\relax
  \savebp\trans:def:c=\dimexpr-\fr@ct:mul\trans:dim:a\trans:def:b
                              -\fr@ct:mul\trans:dim:b\trans:def:a\relax
 \or
  \fr@ct:sin:cos:iv\trans:def:a\trans:def:b\trans:count
  \wd\transbox=\dimexpr\fr@ct:mul\trans:dim:a\trans:def:b
                      -\fr@ct:mul\trans:dim:d\trans:def:a\relax
  \ht\transbox=\dimexpr\fr@ct:mul\trans:dim:b\trans:def:b
                      -\fr@ct:mul\trans:dim:a\trans:def:a\relax
  \dp\transbox=\dimexpr\fr@ct:mul\trans:dim:c\trans:def:b\relax
  \savebp\trans:def:c=\dimexpr-\fr@ct:mul\trans:dim:b\trans:def:a\relax
 \fi
 \ifbboxright
  \trans:dim:d=\dimexpr\fr@ct:mul\trans:dim:a\trans:def:a\relax
  \ht\transbox=\dimexpr\ht\transbox+\trans:dim:d\relax
  \dp\transbox=\dimexpr\dp\transbox-\trans:dim:d\relax
  \savebp\trans:def:d\trans:dim:d
 \else
  \def\trans:def:d{0}\fi
 \edef\trans:def:a{\fr@ct:div\trans:def:a}\edef\trans:def:b{\fr@ct:div\trans:def:b}\pdfliteral{q \trans:def:b\space
               \negbp\trans:def:a\space
               \trans:def:a\space
               \trans:def:b\space
               \trans:def:c\space
               \trans:def:d\space cm}\savebp\trans:def=\wd\transbox
 \box\transbox
 \pdfliteral{Q 1 0 0 1 \trans:def\space 0 cm}}

\bboxrightfalse

\def\box:slant:bb#1#2{\trans:dim:a=\wd\transbox
 \trans:dim:b=\ht\transbox
 \trans:dim:c=\dp\transbox
 \trans:dim:d=\dimexpr\ht\transbox+\dp\transbox\relax
 \trans:count=\reducetrigangle{#1}{2*\fractfourth}\relax
 \ifcase\fracttrigfourth\trans:count\relax
  \fr@ct:sin:cos:i\trans:def:a\trans:def:b\trans:count
  \wd\transbox=\dimexpr\trans:dim:a+\trans:dim:d*\trans:def:a/\trans:def:b\relax
  \savebp\trans:def:c=\dimexpr\trans:dim:c*\trans:def:a/\trans:def:b\relax
 \or
  \fr@ct:sin:cos:ii\trans:def:a\trans:def:b\trans:count
  \wd\transbox=\dimexpr\trans:dim:a-\trans:dim:d*\trans:def:a/\trans:def:b\relax
  \savebp\trans:def:c=\dimexpr-\trans:dim:b*\trans:def:a/\trans:def:b\relax
 \fi
 \edef\trans:def{\fdivide\trans:def:a\trans:def:b}\trans:count=\reducetrigangle{#2}{2*\fractfourth}\relax
 \ifcase\fracttrigfourth\trans:count\relax
  \fr@ct:sin:cos:i\trans:def:a\trans:def:b\trans:count
  \ht\transbox=\dimexpr\trans:dim:b+\trans:dim:a*\trans:def:a/\trans:def:b\relax
 \or
  \fr@ct:sin:cos:ii\trans:def:a\trans:def:b\trans:count
  \dp\transbox=\dimexpr\trans:dim:c-\trans:dim:a*\trans:def:a/\trans:def:b\relax
 \fi
 \ifbboxright
  \trans:dim:d=\dimexpr-\trans:dim:a*\trans:def:a/\trans:def:b\relax
  \ht\transbox=\dimexpr\ht\transbox+\trans:dim:d\relax
  \dp\transbox=\dimexpr\dp\transbox-\trans:dim:d\relax
  \savebp\trans:def:d\trans:dim:d
 \else
  \def\trans:def:d{0}\fi
 \edef\trans:def:a{\fdivide\trans:def:a\trans:def:b}\pdfliteral{q 1
               \trans:def:a\space
               \trans:def\space
               1
               \trans:def:c\space
               \trans:def:d\space cm}\savebp\trans:def=\wd\transbox
 \box\transbox
 \pdfliteral{Q 1 0 0 1 \trans:def\space 0 cm}}

\def\boxxform{\hbox\transboxdef
  \immediate\pdfxform\transbox
  \pdfrefxform\pdflastxform
 \transboxend}

\def\boxxformspec#1\boxxform{\hbox\transboxdef
  \immediate\pdfxform#1\transbox
  \pdfrefxform\pdflastxform
 \transboxend}

\def\boxraise#1{\hbox\transboxdef
  \raise\dimexpr#1\relax\box\transbox
 \transboxend}

\def\boxlower#1{\hbox\transboxdef
  \lower\dimexpr#1\relax\box\transbox
 \transboxend}

\def\boxbaselineat#1{\hbox\transboxdef
  \lower\dimexpr(\ht\transbox+\dp\transbox)*(#1)/\transfactor-\dp\transbox\relax
  \box\transbox
 \transboxend}

\def\boxmoveleft#1{\vbox\transboxdef
  \moveleft\dimexpr#1\relax\box\transbox
 \transboxend}

\def\boxmoveright#1{\vbox\transboxdef
  \moveright\dimexpr#1\relax\box\transbox
 \transboxend}

\def\b@x:rule#1#2#{#1\transboxdef
  \setbox0\hbox{\vrule width\wd\transbox height\ht\transbox depth\dp\transbox #2}\wd\transbox=\wd0
  \ht\transbox=\ht0
  \dp\transbox=\dp0
  \box\transbox
 \transboxend#1}

\def\hboxr{\b@x:rule\hbox}
\def\vboxr{\b@x:rule\vbox}
\def\vtopr{\b@x:rule\vtop}

\def\boxgs#1#2{\hbox\transboxdef
  \pdfliteral{q #1}\savebp\trans:def\wd\transbox
  \box\transbox
  \pdfliteral{#2 Q 1 0 0 1 \trans:def\space 0 cm}\transboxend}

\def\boxmarkers#1#2#3{\hbox\transboxdef
  \copy\transbox
  \trans:dim:a=\dimexpr#1\relax
  \trans:dim:b=\dimexpr#2\relax
  \pdfliteral{q #3}\savebp\trans:def-\dp\transbox
  \box:markers:h
  \savebp\trans:def\ht\transbox
  \box:markers:h
  \savebp\trans:def-\wd\transbox
  \box:markers:v
  \savebp\trans:def\z@
  \box:markers:v
  \pdfliteral{S Q}\setbox\transbox\box\voidb@x
 \transboxend}

\def\box:markers:h{\savebp\trans:def:a\trans:dim:a
  \savebp\trans:def:b\trans:dim:b
  \pdfliteral{\trans:def:a\space\trans:def\space m \trans:def:b\space\trans:def\space l}\savebp\trans:def:a\dimexpr-\wd\transbox-\trans:dim:a\relax
  \savebp\trans:def:b\dimexpr-\wd\transbox-\trans:dim:b\relax
  \pdfliteral{\trans:def:a\space\trans:def\space m \trans:def:b\space\trans:def\space l}}

\def\box:markers:v{\savebp\trans:def:a\dimexpr-\dp\transbox-\trans:dim:a\relax
  \savebp\trans:def:b\dimexpr-\dp\transbox-\trans:dim:b\relax
  \pdfliteral{\trans:def\space \trans:def:a\space m \trans:def\space \trans:def:b\space l}\savebp\trans:def:a\dimexpr\ht\transbox+\trans:dim:a\relax
  \savebp\trans:def:b\dimexpr\ht\transbox+\trans:dim:b\relax
  \pdfliteral{\trans:def\space \trans:def:a\space m \trans:def\space \trans:def:b\space l}}

\def\boxphantom{\hbox\transboxdef
  \hbox to\wd\transbox
   {\vrule width\z@ height\ht\transbox depth\dp\transbox\hss}\transboxend}

\def\boxsmash{\hbox\transhboxdef
  \wd\transbox=\z@
  \ht\transbox=\z@
  \dp\transbox=\z@
  \box\transbox
 \transboxend}

\def\hboxsmash{\hbox\transhboxdef
  \wd\transbox=\z@
  \box\transbox
 \transboxend}

\def\vboxsmash{\vbox\transhboxdef
  \ht\transbox=\z@
  \dp\transbox=\z@
  \box\transbox
 \transboxend}

\def\box:about#1{\hbox\bgroup
  \def\transboxtodo{\trans:dim:a=\wd\transbox
   \trans:dim:b=\ht\transbox
   \trans:dim:c=\dp\transbox
   \box\transbox
   \hbox to\z@{\hss
    \hbox to\trans:dim:a{\hss
     \lower\trans:dim:c\vbox to\z@{\vss
      \vbox to\dimexpr\trans:dim:b+\trans:dim:c{\vss\tt
       #1\vbox{\vskip1ex
        \halign{\hskip1ex plus 1fil####&####\hskip1ex plus 1fil\cr
         \trans:def\span\cr
         wd & \the\trans:dim:a\cr
         ht & \the\trans:dim:b\cr
         dp & \the\trans:dim:c\cr}\vskip1ex
       }\vss}}\hss}}\egroup}\def\trans:def{}\def\trans:def:a{}\box:@bout}

\def\box:@bout#1{\ifcase
  \ifx#1\hbox 0 \else
  \ifx#1\vbox 1 \else
  \ifx#1\vtop 2 \else
  \ifx#1\box  3 \else
  \ifx#1\copy 4 \else 5 \fi\fi\fi\fi\fi
 \edef\trans:def{\trans:def\string\hbox}\expandafter\transboxini\or
 \edef\trans:def{\trans:def\string\vbox}\expandafter\transboxini\or
 \edef\trans:def{\trans:def\string\vtop}\expandafter\transboxini\or
 \edef\trans:def{\trans:def\string\box}\expandafter\boxabout:register\or
 \edef\trans:def{\trans:def\string\copy}\expandafter\boxabout:register\or
 \ifx#1\trans:def:a\errmessage{`#1' is not a box}\fi\let\trans:def:a#1\edef\trans:def{\trans:def\string#1->}\expandafter\expandafter\expandafter\box:@bout\fi
 #1}

\def\boxabout:register#1{\let\trans:def:a#1\afterassignment\boxabout:r@gister\trans:count}

\def\boxabout:r@gister{\edef\trans:def{\trans:def\the\trans:count\space
  (\ifvoid\trans:count void\else
   \ifhbox\trans:count hbox\else
   \ifvbox\trans:count vbox\fi\fi\fi)}\afterassignment\transboxtodo
 \setbox\transbox\trans:def:a\trans:count}

\def\boxabout#1{\box:about{\boxgs{#1}{}}}

\def\expandnumberafter#1#2{\expandafter#1\expandafter{\number#2}}

\def\expandtwonumbersafter#1#2#3{\expandafter#1\expandafter
 {\number#2\expandafter}\expandafter
 {\number#3}}

\def\expandthreenumbersafter#1#2#3#4{\expandafter#1\expandafter
 {\number#2\expandafter}\expandafter
 {\number#3\expandafter}\expandafter
 {\number#4}}

\def\expandnumexprafter#1#2{\expandafter#1\expandafter{\number\numexpr#2}}

\def\expandtwonumexprafter#1#2#3{\expandafter#1\expandafter
 {\number\numexpr#2\expandafter}\expandafter
 {\number\numexpr#3}}

\def\expandthreenumexprafter#1#2#3#4{\expandafter#1\expandafter
 {\number\numexpr#2\expandafter}\expandafter
 {\number\numexpr#3\expandafter}\expandafter
 {\number\numexpr#4}}

\def\expanddimexprafter#1#2{\expandafter#1\expandafter{\the\dimexpr#2}}

\edef\pt:f@ctor{\number\dimexpr100pt} \edef\bp:f@ctor{\number\dimexpr100bp} 

\begingroup
 \catcode`\P=12
 \catcode`\T=12
 \lccode`P=`p
 \lccode`T=`t
 \lowercase{\gdef\with@ut:pt#1PT{#1}}
\endgroup

\def\withoutpt{\expandafter\with@ut:pt}
\def\negbp#1{\withoutpt\the\dimexpr-#1pt\relax}
\def\asbp#1{\withoutpt\the\dimexpr#1*\pt:f@ctor/\bp:f@ctor\relax}

\def\roundbp#1{\expandafter\r@undbp\the\dimexpr(#1)*\pt:f@ctor/\bp:f@ctor\relax0000\relax}
\def\r@undbp{\csname r@und:bp:\the\pdfdecimaldigits\expandafter\endcsname
	\with@ut:pt}

\expandafter\def\csname r@und:bp:0\endcsname #1.#2#3\relax{\number\numexpr#1#2/10\relax}
\expandafter\def\csname r@und:bp:1\endcsname #1.#2#3#4\relax{\round:bp:once{#1}{#2#3}\relax}
\expandafter\def\csname r@und:bp:2\endcsname #1.#2#3#4#5\relax{\round:bp:once{#1}{#2#3#4}\relax}
\expandafter\def\csname r@und:bp:3\endcsname #1.#2#3#4#5#6\relax{\round:bp:once{#1}{#2#3#4#5}\relax}
\expandafter\def\csname r@und:bp:4\endcsname #1.#2#3#4#5#6#7\relax{\round:bp:once{#1}{#2#3#4#5#6}\relax}

\def\round:bp:once#1#2{\ifnum#11<0-\number\numexpr-\else\number\numexpr\fi
 #1+(\m@ne+\expandafter\r@und:bp:once\number\numexpr1#2/10\relax}

\def\r@und:bp:once#1#2\relax{#1)\relax\ifnum#2>0.#2\fi}

\def\set:bp:rounder#1#2{\expandafter\edef\csname #1:\the\numexpr#2\relax\endcsname##1{\unexpanded{\expandafter\expandafter\expandafter}\expandafter\noexpand
  \csname r@und:bp:\the\numexpr#2\relax\endcsname
  \unexpanded{\expandafter\with@ut:pt\the}\dimexpr(##1)*\unexpanded{\pt:f@ctor/\bp:f@ctor}\relax0000\relax}}

\set:bp:rounder{roundbpto}{0}
\set:bp:rounder{roundbpto}{1}
\set:bp:rounder{roundbpto}{2}
\set:bp:rounder{roundbpto}{3}
\set:bp:rounder{roundbpto}{4}

\def\roundbpto#1{\csname roundbpto:#1\endcsname}

\def\enablebpround{\let\tobp\roundbp}
\def\disablebpround{\let\tobp\asbp}
\def\setbpround#1{\expandafter\let\expandafter\tobp\csname roundbpto:\the\numexpr#1\relax\endcsname}

\enablebpround

\def\savebp#1{\def\s@vebp{\edef#1{\tobp{\bp:dim@n}}}\afterassignment\s@vebp\bp:dim@n}

\newdimen\bp:dim@n

\def\absoluteint#1{\numexpr\ifnum#1<\z@-\fi#1}

\def\absolutedim#1{\dimexpr\ifdim#1<\z@-\fi#1}

\def\expanddivisionafter#1#2#3{\expandnumexprafter#1{#2/#3}{#2}{#3}}

\def\divfloor{\expandtwonumexprafter\dividefloor}
\def\dividefloor{\expanddivisionafter\divide:fl@@r}
\def\divide:fl@@r#1#2#3{\numexpr#1\ifcase\ifnum#2<0 \ifnum#3<0 1 \else 0 \fi
        \else      \ifnum#3<0 1 \else 0 \fi \fi
 \ifnum\numexpr#1*#3>#2-\@ne\fi\or
 \ifnum\numexpr#1*#3<#2-\@ne\fi\fi}

\def\divceil{\expandtwonumexprafter\divideceil}
\def\divideceil{\expanddivisionafter\divide:c@il}
\def\divide:c@il#1#2#3{\numexpr#1\ifcase\ifnum#2<0 \ifnum#3<0 1 \else 0 \fi
        \else      \ifnum#3<0 1 \else 0 \fi \fi
 \ifnum\numexpr#1*#3<#2+\@ne\fi\or
 \ifnum\numexpr#1*#3>#2+\@ne\fi\fi}

\def\divint{\expandtwonumexprafter\divideint}
\def\divideint{\expanddivisionafter\divide:int}
\def\divide:int#1#2#3{\numexpr#1\ifcase\ifnum#2<0 \ifnum#3<0 3 \else 1 \fi
        \else      \ifnum#3<0 2 \else 0 \fi \fi
 \ifnum\numexpr#1*#3>#2-\@ne\fi\or
 \ifnum\numexpr#1*#3<#2+\@ne\fi\or
 \ifnum\numexpr#1*#3>#2+\@ne\fi\or
 \ifnum\numexpr#1*#3<#2-\@ne\fi\fi}

\def\divnint{\expandtwonumexprafter\dividenint}
\def\dividenint#1#2{\numexpr#1/#2}

\def\modulo{\expanddivisionafter\do:m@dulo}
\def\do:m@dulo#1#2#3{\numexpr#2-#3*\divide:fl@@r{#1}{#2}{#3}\relax}

\def\dividefloorpos{\expanddivisionafter\divide:fl@@r:pos}
\def\divide:fl@@r:pos#1#2#3{\numexpr#1\ifnum\numexpr#1*#3>#2-\@ne\fi}

\def\divide:c@il:pos#1#2#3{\numexpr#1\ifnum\numexpr#1*#3<#2+\@ne\fi}

\def\modpos{\expandtwonumexprafter\modulopos}
\def\modulopos{\expanddivisionafter\modulo:p@s}
\def\modulo:p@s#1#2#3{\numexpr#2-#3*\divide:fl@@r:pos{#1}{#2}{#3}\relax}

\def\floatround#1{\divnint{\dimexpr#1pt}\p@}
\def\floatfloor#1{\divfloor{\dimexpr#1pt}\p@}
\def\floatceil#1{\divceil{\dimexpr#1pt}\p@}
\def\floatint#1{\divint{\dimexpr#1pt}\p@}
\def\floatnint#1{\divnint{\dimexpr#1pt}\p@}

\def\fdivide{\expandtwonumexprafter\flo@t:divide}
\def\flo@t:divide#1#2{\withoutpt\the\dimexpr\numexpr#1*\p@/#2\relax sp\relax}

\def\divfloat{\expandthreenumexprafter\dividefloat}

\def\dividefloat#1#2#3{\expandnumberafter\divide:flo@t {\absoluteint{\divideint{#1}{#2}}}{#1}{#2}{#3}}

\def\divide:flo@t#1#2#3{\ifnum#2<0 \ifnum#3>0 -\fi\else
 \ifnum#2>0 \ifnum#3<0 -\fi\fi\fi
 #1.\expandthreenumexprafter\divide:fl@@t
 {#1}{\absoluteint{#2}}{\absoluteint{#3}}}

\def\divide:fl@@t#1#2#3{\expandnumexprafter\divide:flo@t:modulo{#2-#1*#3}{#3}}

\def\divide:flo@t:modulo#1#2{\ifnum#1<214748365
  \expandtwonumbersafter\divide:flo@t:result{#10}{#2\expandafter}\else
  \expandtwonumexprafter\divide:flo@t:modulo{#1/2}{#2/2\expandafter}\fi}

\def\divide:flo@t:result#1#2#3{\ifnum#3>1
  \expandtwonumexprafter\divide:flo@t:repeat
  {#3-\@ne}{\dividefloorpos{#1}{#2}\expandafter}\else
  \number\divide:flo@t:last{#1}{#2}\relax
  \expandafter\gobbletwo
 \fi{#1}{#2}}

\def\divide:flo@t:repeat#1#2#3#4{#2\expandnumexprafter\divide:flo@t:modulo{#3-#2*#4}{#4}{#1}}

\newcount\floatprecision
\def\roundlast{\let\divide:flo@t:last\dividenint}
\def\floorlast{\let\divide:flo@t:last\divideint}
\roundlast

\def\tofixedbp#1{\divfloat{\dimexpr#1}\b@\floatprecision}
\def\roundfixedbp#1{\divfloat{\dimexpr#1}\b@\pdfdecimaldigits}

\def\fractdegree#1{\numexpr16*\dimexpr#1pt}           \edef\fractfactor{\number\numexpr\maxdimen+\@ne}      \edef\fractfourth{\number\numexpr90*\fractdegree\@ne} \edef\fractperiod{\number\numexpr4*\fractfourth}      

\def\reducefractangle#1{\expandnumberafter\reduce:fr@ct:angle{\fractdegree{#1}}}

\def\reduce:fr@ct:angle#1#2{\ifnum#1<0
  \numexpr#2-\modpos{-#1}{#2}\relax
 \else
  \modpos{#1}{#2}\fi}

\def\reduceintangle#1#2{\expandtwonumexprafter\reduce:int:@ngle{#1}{#2/\fractdegree\@ne}}

\def\reduce:int:@ngle#1#2{\fractdegree{\reduce:fr@ct:angle{#1}{#2}}}

\def\enablefractangle{\let\reducetrigangle\reducefractangle}
\def\disablefractangle{\let\reducetrigangle\reduceintangle}
\enablefractangle

\def\fracttrigfourth#1{\dividefloorpos{#1}\fractfourth}

\def\fr@ct:mul#1#2{#1*#2/\fractfactor}
\def\fr@ct:div#1{\fdivide{#1}\fractfactor}

\def\fr@ct:angle#1{\ifcase\numexpr#1\relax
62914560\or 47185920\or 31457280\or 16777216\or 8388608\or 4194304\or 2097152\or 1048576\or 524288\or 262144\or 131072\or 65536\or 32768\or 16384\or 8192\or 4096\or 2048\or 1024\or 512\or 256\or 128\or 64\or 32\or 16\or 8\or 4\or 2\or 1\fi}

\def\fr@ct:sin#1{\ifcase\numexpr#1\relax
929887697\or 759250125\or 536870912\or 295963357\or 149435979\or 74900443\or 37473049\or 18739379\or 9370046\or 4685068\or 2342539\or 1171270\or 585635\or 292818\or 146409\or 73204\or 36602\or 18301\or 9151\or 4575\or 2288\or 1144\or 572\or 286\or 143\or 71\or 36\or 18\fi}

\def\fr@ct:cos#1{\ifcase\numexpr#1\relax
 536870912\or 759250125\or 929887697\or 1032146887\or 1063292242\or 1071126243\or 1073087729\or 1073578288\or 1073700939\or 1073731603\or 1073739269\or 1073741185\or 1073741664\or 1073741784\or 1073741814\or 1073741822\or 1073741823\or 1073741824\or 1073741824\or 1073741824\or 1073741824\or 1073741824\or 1073741824\or 1073741824\or 1073741824\or 1073741824\or 1073741824\or 1073741824\fi}

\trans:count=0
\loop
 \expandafter\edef\csname
   fractangle:\the\trans:count\endcsname{\fr@ct:angle\trans:count}
 \expandafter\edef\csname
   fractsinvalue:\the\trans:count\endcsname{\fr@ct:sin\trans:count}
 \expandafter\edef\csname
   fractcosvalue:\the\trans:count\endcsname{\fr@ct:cos\trans:count}
 \ifnum\trans:count<27
 \advance\trans:count by1
\repeat
\def\fr@ct:angle#1{\csname fractangle:\number#1\endcsname}
\def\fr@ct:sin#1{\csname fractsinvalue:\number#1\endcsname}
\def\fr@ct:cos#1{\csname fractcosvalue:\number#1\endcsname}

\def\fracttrig#1{\expandnumexprafter\fr@ct:trig{\reducetrigangle{#1}\fractperiod}}

\def\fr@ct:trig#1{\csname fr@ct:trig:\romannumeral\fracttrigfourth{#1}+\@ne\endcsname
 {#1}}

\def\fr@ct:trig:i#1#2#3{\expandthreenumexprafter\fr@ct:trig:cont{#1}{#2}{#3}\z@}

\def\fr@ct:trig:ii#1#2#3{\expandthreenumexprafter\fr@ct:trig:cont{#1-\fractfourth}{#3}{-#2}\z@}

\def\fr@ct:trig:iii#1#2#3{\expandthreenumexprafter\fr@ct:trig:cont{#1-2*\fractfourth}{-#2}{-#3}\z@}

\def\fr@ct:trig:iv#1#2#3{\expandthreenumexprafter\fr@ct:trig:cont{#1-3*\fractfourth}{-#3}{#2}\z@}

\def\fr@ct:trig:cont#1#2#3#4{\ifcase
  \ifnum#1>0 \ifnum#1<\fr@ct:angle{#4} 0 \else 1 \fi \else 2 \fi
  \expandafter\fr@ct:trig:cont\expandafter
  {\number#1\expandafter}\expandafter
  {\number#2\expandafter}\expandafter
  {\number#3\expandafter}\expandafter
  {\number\numexpr#4+\@ne\expandafter}\or
  \expandafter\fr@ct:trig:cont\expandafter
  {\number\numexpr#1-\fr@ct:angle{#4}\expandafter}\expandafter
  {\number\numexpr\fr@ct:mul{#2}{\fr@ct:cos{#4}}+\fr@ct:mul{#3}{\fr@ct:sin{#4}}\expandafter}\expandafter
  {\number\numexpr\fr@ct:mul{#3}{\fr@ct:cos{#4}}-\fr@ct:mul{#2}{\fr@ct:sin{#4}}\expandafter}\expandafter
  {\number\numexpr#4+\@ne\expandafter}\or
  \fracttrigend{#2}{#3}\fi}

\def\fracttrigend#1#2\fi#3{\fi#3{#1}{#2}}

\def\fractsincos#1#2#3{\fracttrig{#3}\z@\fractfactor\fr@ct:sin:cos#1#2}
\def\fr@ct:sin:cos#1#2#3#4{\def#3{#1}\def#4{#2}}

\def\fr@ct:sin:cos:i#1#2#3{\fr@ct:trig:i{#3}\z@\fractfactor\fr@ct:sin:cos#1#2}
\def\fr@ct:sin:cos:ii#1#2#3{\fr@ct:trig:ii{#3}\z@\fractfactor\fr@ct:sin:cos#1#2}
\def\fr@ct:sin:cos:iii#1#2#3{\fr@ct:trig:iii{#3}\z@\fractfactor\fr@ct:sin:cos#1#2}
\def\fr@ct:sin:cos:iv#1#2#3{\fr@ct:trig:iv{#3}\z@\fractfactor\fr@ct:sin:cos#1#2}

\def\floatsincos#1#2#3{\fracttrig{#3}\z@\fractfactor\flo@t:sin:c@s#1#2}
\def\flo@t:sin:c@s#1#2#3#4{\edef#3{\fr@ct:div{#1}}\edef#4{\fr@ct:div{#2}}}

\endtrans
 \newbox\qbox
\def\usecolor#1{\csname\string\color@#1\endcsname\space}
\newcommand\bordercolor[1]{\colsplit{1}{#1}}
\newcommand\fillcolor[1]{\colsplit{0}{#1}}
\newcommand\outline[1]{\leavevmode \def\maltext{#1}\setbox\qbox=\hbox{\maltext}\boxgs{Q q 2 Tr \thickness\space w \fillcol\space \bordercol\space}{}\copy\qbox }
\makeatother
\newcommand\colsplit[2]{\colorlet{tmpcolor}{#2}\edef\tmp{\usecolor{tmpcolor}}\def\tmpB{}\expandafter\colsplithelp\tmp\relax \ifnum0=#1\relax\edef\fillcol{\tmpB}\else\edef\bordercol{\tmpC}\fi}
\def\colsplithelp#1#2 #3\relax{\edef\tmpB{\tmpB#1#2 }\ifnum `#1>`9\relax\def\tmpC{#3}\else\colsplithelp#3\relax\fi
}
\bordercolor{black}
\fillcolor{white}
\def\thickness{.3}

\newcommand{\next}{\text{\rm \raisebox{-.5pt}{\Large\textopenbullet}}}    

\newcommand{\alwaysF}{\ensuremath{\square}}

\newcommand{\eventuallyF}{\ensuremath{\Diamond}}

\newcommand{\until}{\ensuremath{\mathbin{\mbox{\outline{$\bm{\mathsf{U}}$}}}}}
\newcommand{\release}{\ensuremath{\mathbin{\mbox{\outline{$\bm{\mathsf{R}}$}}}}}

\mathchardef\mhyphen="2D

\newcommand{\intervcc}[2]{\ensuremath{[#1..#2]}}
\newcommand{\intervco}[2]{\ensuremath{[#1..#2)}}

\newcommand{\rangecc}[3]{\ensuremath{#1 \in \intervcc{#2}{#3}}}
\newcommand{\rangeco}[3]{\ensuremath{#1 \in \intervco{#2}{#3}}}

 \providecommand{\logfont}{\textrm}

\newcommand{\HT}{\ensuremath{\logfont{HT}}}
\newcommand{\EL}{\ensuremath{\logfont{EL}}}

\newcommand{\LTL}{\ensuremath{\logfont{LTL}}}

\newcommand{\THT}{\ensuremath{\logfont{THT}}}

\newcommand{\TEL}{\ensuremath{\logfont{TEL}}}

\newcommand{\tuple}[1]{\ensuremath{\langle #1 \rangle}}

\newcommand{\M}{\ensuremath{\mathbf{M}}}

\newcommand{\model}{{\mathfrak M}}

\newcommand{\parts}[1]{\ensuremath{2^{#1}}} \newcommand{\deffun}[3]{#1 : #2 \rightarrow #3} \newcommand{\Th}[1]{\ensuremath{\mathit{Th}(#1)}}
\newcommand{\msuccfct}[1][]{S#1}

\newcommand{\irel}{\preccurlyeq}            
\newcommand{\sfun}{\msuccfct}
\newcommand{\peq}{\preccurlyeq}
\newcommand{\langP}{\ensuremath{\mathcal{L}_{p}}} \newcommand{\langT}{\ensuremath{\mathcal{L}_{t}}} 

\newenvironment{proofof}[1]{\noindent {\bf Proof of #1.}}{\bigskip}
\newcommand{\PV}{\ensuremath{\mathbb{P}}}
\newcommand{\myatom}{\ensuremath{p}}

\newcommand{\eqdef}{\ensuremath{:=}}

\def\I{\logfont{INT}}
\def\C{\logfont{CL}}
\def\KC{\logfont{KC}}
\newcommand{\logic}[1]{\ensuremath{\logfont{#1}}}
\newcommand{\iltl}{\ensuremath{\mathrm{ITL^e}}}
\newcommand{\itlb}{\ensuremath{\mathrm{ITL^p}}}

\newcommand{\itlbd}[1]{\ensuremath{\mathrm{ITL^{\BD{#1}}}}}

\newcommand{\Inf}[1]{\ensuremath{\models_{\scaleto{#1}{3pt}}}} \newcommand{\TPV}{\ensuremath{\PV^{\next}}}

\newcommand{\qed}{\ignorespaces} \newtheorem{definition}{Definition}
\newtheorem{theorem}{Theorem}
\newtheorem{proposition}{Proposition}
\newtheorem{corollary}{Corollary}
\newtheorem{lemma}{Lemma}
\newtheorem{observation}{Observation}

\newcommand{\cn}[2]{\ensuremath{Cn_{#1}(#2)}}

\newcommand{\bd}[1]{\ensuremath{\bm{bd_{#1}}}}
\newcommand{\BD}[1]{\ensuremath{\mathrm{BD_{#1}}}}
\newcommand{\depth}[1]{\ensuremath{\mathit{depth}(#1)}} \newcommand{\FName}[1]{\ensuremath{\mathfrak{#1}}}

\newcommand{\bisim}[0]{\ensuremath{\mathcal{Z}}}

\begin{document}

\lefttitle{Cambridge Author}
\jnlPage{1}{8}
\jnlDoiYr{2021}
\doival{10.1017/xxxxx}

\title[Intuitionistic Temporal Logic and Temporal ASP]{Applications of Intuitionistic Temporal Logic to Temporal Answer Set Programming\footnote{This paper is an extended version of~\citep{CabalarDLSS24}, presented at the 17th International Conference on Logic Programming and Nonmonotonic Reasoning (LPNMR 2024).}}

\begin{authgrp}
	\author{\sn{Pedro} \gn{Cabalar}}
	\affiliation{University of Corunna, Spain}
	\author{\sn{Mart\'{\i}n} \gn{Di\'eguez}}
	\affiliation{University of Angers, France}
	\author{\sn{David} \gn{Fern\'andez-Duque}}
	\affiliation{University of Barcelona, Spain}
	\author{\sn{Fran\c{c}ois} \gn{Laferri\`ere}}
        \affiliation{Potassco Solutions, Germany}
	\author{\sn{Torsten} \gn{Schaub}}
	\affiliation{University of Potsdam, Germany}\affiliation{Potassco Solutions, Germany}
	\author{\sn{Igor} \gn{St\'ephan}}
	\affiliation{University of Angers, France}
\end{authgrp}

\maketitle

\begin{abstract}
  The relationship between intuitionistic or intermediate logics and logic programming has been extensively studied,
  prominently featuring Pearce's equilibrium logic and Osorio's safe beliefs.
  Equilibrium logic admits a fixpoint characterization based on the logic of here-and-there,
  akin to theory completion in default and autoepistemic logics.
  Safe beliefs are similarly defined via a fixpoint operator,
  albeit under the semantics of intuitionistic or other intermediate logics.

  In this paper,
  we investigate the logical foundations of Temporal Answer Set Programming through the lens of Temporal Equilibrium Logic,
  a formalism combining equilibrium logic with linear-time temporal operators.
  We lift the seminal approaches of Pearce and Osorio to the temporal setting,
  establishing a formal correspondence between temporal intuitionistic logic and temporal logic programming.
  Our results deepen the theoretical underpinnings of Temporal Answer Set Programming and
  provide new avenues for research in temporal reasoning.

\vspace{5pt}
Under consideration in Theory and Practice of Logic Programming (TPLP).
\end{abstract}
\begin{keywords}
  Temporal answer set programming, Temporal equilibrium logic, Fixpoint characterization
\end{keywords}
 
\section{Introduction}
Temporal logic programming, introduced in the late 1980s~\citep{abadimanna89},
augments logic programming with temporal modal operators,
primarily those from linear-time temporal logic~(\LTL;~\citealp{pnueli77a}).
Although this field experienced substantial research activity throughout the 1980s and 1990s, its momentum eventually waned.
More recently, the advent of answer set programming (ASP; \citealp{lifschitz19a}),
and particularly its demonstrated efficacy in modeling and resolving temporal scenarios,
has sparked renewed interest in these foundational approaches to temporal logic programming.

Early approaches to time representation in ASP~\citep{gellif93a} relied on variables
ranging over finite subsets of the natural numbers.
Although straightforward,
this methodology lacked the dedicated language constructs and specialized inference mechanisms characteristic of \LTL.
Consequently, it remains infeasible to represent or reason about properties of reactive systems over infinite traces,
such as
safety (e.g., ``Is a specific state reachable?'') or
liveness (e.g., ``Does a condition hold infinitely often?'').
Furthermore, establishing the unsolvability of planning problems becomes substantially more difficult.

To overcome these limitations, several extensions of ASP with temporal operators have been investigated.
For instance,
\cite{eitsim09a} extended logic programs with function symbols to model both past and future temporal references.
Other modal-inspired approaches typically adopt a temporal or dynamic modal logic~\citep{pnueli77a,hatiko00a} as a monotonic basis,
subsequently introducing nonmonotonicity via an established ASP semantics~\citep{lifschitz10a}.
As an example,
\cite{gimath13a} generalized the traditional reduct-based semantics~\citep{gellif88b} to a logic programming fragment
equipped with dynamic logic operators~\citep{hatiko00a}.
Similarly,
\cite{agcadipevi13a} integrated \LTL\ with equilibrium logic (\EL;~\citealp{pearce06a}),
the predominant logical characterization of stable models and answer sets.
This latter framework was subsequently adapted to finite traces in~\citep{agcadipescscvi20a}.

Equilibrium logic builds upon here-and-there logic (\HT;~\citealp{heyting30a})
by imposing a minimal model selection criterion to capture answer sets.
In his seminal work, \cite{pearce06a} provided an alternative formulation of equilibrium logic based on fixpoints.
Analogous to the treatment of default and autoepistemic logics~\citep{martru93a},
this characterization relies on theory extensions, or completions, rather than direct semantic minimization.

A related fixpoint characterization, termed safe beliefs, was introduced by~\cite{osnaar05a}.
Instead of \HT, their approach employs intuitionistic logic (\I;~\citealp{mints20a}) as its monotonic foundation.
Crucially, they demonstrated that \I\ can be substituted with any (intermediate) logic $X$
satisfying $\I \subseteq \logic{X} \subseteq \HT$
without altering the resulting safe beliefs.
Consequently, safe beliefs provide a robust framework
for investigating properties of (temporal) logic programs from a broader logical standpoint,
facilitating novel and insightful program transformations.

In this paper, we extend both Pearce's and Osorio's fixpoint-based characterizations to the temporal domain.
Regarding Pearce's approach, we demonstrate that theory completions coincide with temporal equilibrium models
when \HT\ is superseded by Temporal Here-and-There logic (\THT; \citep{baldie16}).

Extending Osorio's approach to the temporal setting presents several notable challenges.

First, Osorio's work heavily relies on fundamental properties of propositional intuitionistic and intermediate logics~\citep{Gabbay81}.
One pivotal property is that satisfiability (or consistency) in intuitionistic logic is preserved across all intermediate logics.
Furthermore, if a formula is satisfiable, it is guaranteed to hold in a finite model.
Unfortunately, these properties generally fail in the temporal case.
To circumvent this, we identify an intuitionistic temporal logic that preserves these characteristics,
serving as the ``weakest'' intuitionistic base logic in our framework.

Second, Osorio's approach depends on a syntactic consequence relation and Hilbert-style axiomatic systems for intuitionistic logic.
However, to the best of our knowledge, no sound and complete axiomatic system currently exists for an intuitionistic version of
\LTL.\footnote{In~\citep{bodifero19a},
  four axiomatic systems for intuitionistic temporal logics are studied.
  It is shown that each generates a different logic, inducing distinct confluence properties.
  However, their completeness remains unaddressed.
  A sound and complete axiomatic system for the $\alwaysF$-free fragment of \iltl\ is presented in~\citep{DieguezF18}.}
Alternative attempts introduce the co-implication connective~\citep{Rauszer1974-RAUAFO} into the
language~\citep{femcze24,ADFM25,0001DFM22}.
Additionally, Osorio's method employs syntactic transformations to eliminate propositional variables.
These transformations cannot be directly lifted to the temporal setting,
as the truth values of propositional variables vary dynamically over time.

In light of these challenges, we adopt a strictly semantic approach.
We reformulate Osorio's fundamental results using a semantic entailment relation,
which we subsequently generalize to the temporal case.
A critical component of this strategy is the application of bisimulations for both
intuitionistic~\citep{patterson97a} and intuitionistic temporal logics~\citep{BalbianiBDF20}.

Beyond reformulating Osorio's results semantically, we formally define the notion of an $\logic{X}$-temporal safe
belief set, where $\logic{X}$ is any intermediate temporal logic.
We establish two primary results:
first, that \THT-temporal safe belief sets exactly correspond to temporal equilibrium models; and
second, that substituting \THT\ with any weaker intermediate temporal logic yields the identical set of safe beliefs.

The remainder of this paper is organized as follows.
Section~\ref{sec:il} provides the background on propositional intuitionistic and equilibrium logics.
Section~\ref{sec:fixpoint} introduces Pearce's theory completions alongside our semantic reformulation of Osorio's safe beliefs.
Section~\ref{sec:itl} reviews intuitionistic and intermediate temporal logics,
detailing the specific technical results employed in our framework.
Section~\ref{sec:main-contribution} presents our primary contribution:
lifting Pearce's theory completions and Osorio's safe beliefs to the temporal domain.
Finally, we conclude the paper with a brief discussion and directions for future research.

 \section{Intuitionistic and Intermediate Logics}\label{sec:il}
Given a countable, possibly infinite set \PV\ of atoms, also called \emph{alphabet},
our basic language $\langP$ consists of formulas generated by the following grammar:
\begin{displaymath}
\varphi ::= \myatom\in\PV \mid \bot \mid \varphi\wedge\varphi \mid \varphi\vee\varphi \mid \varphi\to\varphi
\end{displaymath}
The negation connective is defined in terms of implication as $\neg \varphi \eqdef \varphi \to \bot$.
A \emph{(propositional) theory} is a possibly infinite set of propositional formulas.

Formulas of $\langP$ are interpreted over partially ordered sets.
An \emph{intuitionistic frame} is a tuple $\FName{F}=(W,\peq)$,
where $W$ is a non-empty set of (Kripke) worlds and
${\irel}\subseteq {W\times W}$ is a partial order.
Given a frame $\FName{F}=(W,\peq)$,
a world $w\in W$ is \emph{$\peq$-maximal} if there is no $v\in W$ such that $w\not=v$ and $w \peq v$.
Because maximality is exclusively associated with the relation $\peq$ throughout this work, we simply use the term \emph{maximal}.

Given a frame $\FName{F} = (W,\peq)$, we say that a subset $U\subseteq W$ is an \emph{upset} of $\FName{F}$
if for every $w,v\in W$ we have that if both $w\in U$ and $w \peq v$ then $v \in U$.
Moreover, a frame $\FName{F'}=(U,\peq')$ is called a \emph{generated subframe} of $\FName{F}$
if $U \subseteq W$ is an upset of $\FName{F}$ and $\peq'$ is the restriction of $\peq$ to $U$, that is,
${\peq'} = {\peq \cap {(U \times U)}}$.
Finally, given $w \in W$, we define the subframe generated by $w$ as the
generated subframe $\FName{F'} = (\lbrace u \in W\mid w \peq u \rbrace, \peq')$.

We say that an intuitionistic frame $\FName{F} = (W,\peq)$ is of \emph{depth} $n$, $\depth{\FName{F}}=n$ in symbols,
if there is a chain of $n$ worlds in $\FName{F}$ and no chain of more than $n$ worlds.
Whenever $\FName{F}$ contains an $n$-world chain for every $n<\omega$,
we say that $\FName{F}$ is of \emph{infinite depth} $\infty$.
Given a frame $\FName{F} =(W,\peq)$ and $w \in W$,
we denote the \emph{depth} of the subframe generated by $w$ as $\depth{(\FName{F},w)}$.

An \emph{intuitionistic model}, or simply \emph{model},
is a tuple $\model= \tuple{(W,\peq),V}$ consisting of a frame $(W,\peq)$ equipped with a monotone valuation function
\(
\deffun V   W {\parts\PV}
\).
That is, if $w \irel v$, then $V(w) \subseteq V(v)$ for all $w, v \in W$.
The satisfaction relation (denoted by $\models$) of a formula $\varphi$ at $w \in W$ is defined inductively by:
\begin{enumerate}
\item $\model, w \models p $  iff $p \in V(w) $
\item $\model, w \not \models \bot$
\item $\model, w \models \varphi\wedge \psi$  iff $  \model, w \models \varphi $ and $\model, w \models \psi$
\item $\model, w \models \varphi \vee \psi $  iff  $ \model, w \models \varphi $ or $\model, w \models \psi$
\item $\model, w \models \varphi \rightarrow \psi $ iff for all $v \succcurlyeq w $, if $\model, v \models \varphi$, then $\model, v \models \psi $
\end{enumerate}
A formula $\varphi$ is satisfied in an intuitionistic model $\model = \tuple{\FName{F},V}$,
in symbols $\model \models \varphi$,
if $\model, w \models \varphi$ for some $w \in \FName{F}$.
A formula $\varphi$ is satisfied on an intuitionistic frame $\FName{F}$,
if there exists a model $\model=\tuple{\FName{F},V}$ such that $\model \models \varphi$.
A formula $\varphi$ is valid on an intuitionistic frame $\FName{F}$,
in symbols $\FName{F} \models \varphi$,
if for all models $\model=\tuple{\FName{F},V}$, we have $\model\models \varphi$.
In the case of a theory $\Gamma$, we say that $\model, w \models \Gamma$ if $\model, w \models \varphi$
for all $\varphi \in \Gamma$.
Similarly, $\Gamma$ is said to be \emph{consistent},
if there is a model $\model$ and a world $w$ such that $\model, w \models \Gamma$.
Finally we define the \emph{intuitionistic logic} as
\begin{equation*}
  \I := \lbrace  \varphi \in \langP \mid \FName{F}\models \varphi\rbrace,
\end{equation*}
where $\FName{F}$ is an intuitionistic frame.

\subsection{Intermediate Logics}
An \emph{intermediate logic}\footnote{\cite{chagrov} made a distinction between \emph{super-intuitionistic} and intermediate logics but they also
  mention that in the propositional case, these two notions are practically identical.}
in the language \langP\ is any set of formulas $\logic{X}$ satisfying the following conditions:
\begin{enumerate}[label=\arabic*)]
\item $\I \subseteq \logic{X}\subseteq \C$, where $\C$ stands for \emph{classical logic},
\item $\logic{X}$ is closed under \emph{modus ponens},
  i.e., $\varphi, \varphi \to \psi \in \logic{X}$ implies $\psi \in \logic{X}$,
\item $\logic{X}$ is closed under \emph{uniform substitution},
  i.e., $\varphi \in \logic{X}$ implies $\varphi\mathbf{s} \in \logic{X}$ for any $\varphi \in \langP$ and substitution
  $\mathbf{s}$.\footnote{A \emph{substitution} $\mathbf{s}$ is a mapping $\mathbf{s}:\PV \to \langP$ and
    $\varphi\mathbf{s}$ is defined by induction on the construction of $\varphi$: $\varphi \mathbf{s} = \mathbf{s}(p)$,
    $\bot\mathbf{s} = \bot$ and $(\varphi \odot \psi)\mathbf{s} = \varphi\mathbf{s} \odot \psi \mathbf{s}$
    for $\odot \in \lbrace \wedge, \vee, \rightarrow \rbrace$.}
\end{enumerate}

A \emph{proper intermediate logic} is an intermediate logic different from $\C$. Broadly speaking, intermediate logics
are obtained by adding formulas (that are classically valid) to $\I$ as axiom schemas~\cite[Chapter 2]{Gabbay81}.
In this way, they impose restrictions on $\peq$.
Therefore, given an intermediate logic $\logic{X}$,
we define the \emph{class} of $\logic{X}$-frames as the set of all intuitionistic frames
$(W,\peq)$ where $\peq$ satisfies the restriction induced by the schemas used to generate $\logic{X}$.
As in the intuitionistic case, an intermediate logic $\logic{X}$ can be defined as
\begin{equation*}
	\logic{X} := \lbrace  \varphi \in \langP \mid \FName{F}\models \varphi\rbrace,
\end{equation*}
where $\FName{F}$ is an $\logic{X}$-frame.

To give an example,
the logic of the \emph{weak exclude middle}~\citep{Jankov68,Gabbay81} ($\KC$) is obtained by
adding the axiom $\neg p \vee \neg \neg p$ to $\I$ and it is
characterized by intuitionistic frames $(W,\peq)$ satisfying the following frame condition:
there exists $u\in W$ such that $v \peq u$ for all $v\in W$.\footnote{In the literature, this frame condition is usually called \emph{topwidth 1}~\citep{Gabbay81}.}
Another family of intermediate logics, denoted by $\BD{n}$, are obtained by adding an instance of the axiom schema $\bd{n}$ to $\I$.
For a given $n \ge 1$, such a family of axioms is recursively defined as follows:
\begin{eqnarray*}
  \bd{1} &:=& p_1\vee \neg p_1\\
  \bd{n+1} & := & p_{n+1}\vee \left(p_{n+1}\to \bd{n}\right).
\end{eqnarray*}
The axiom $\bd{n}$ induces the following property on intuitionistic frames.
\begin{theorem}[\citealp{chagrov}]\label{depth:axiom}
  An intuitionistic frame $\FName{F}=(W,\peq)$ validates $\bd{n}$ iff $\depth{\FName{F}}\le n$,
  i.e, iff $\FName{F}$ satisfies the following condition
  \begin{equation}
    \forall w_0,\cdots, \forall w_n \left( \left(\bigwedge\limits_{i=0}^{n-1} w_i \peq w_{i+1}\right) \to \bigvee\limits_{i\not=j}\left(w_i = w_j\right)\right).\label{eq:int:depth}
  \end{equation}
\end{theorem}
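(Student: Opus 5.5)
I will prove, by induction on $n\ge 1$, the following sharper statement: for every frame $\FName{F}=(W,\peq)$ and every $w\in W$, $\bd{n}$ holds at $w$ under every valuation iff $\depth{(\FName{F},w)}\le n$. Here $\bd{n}$ is read with the fresh atoms $p_1,\dots,p_n$. The theorem then follows, because $\depth{\FName{F}}\le n$ iff every generated subframe has depth at most $n$. The equivalence with the first-order condition (\ref{eq:int:depth}) is just a restatement: any chain $w_0\peq\dots\peq w_n$ with pairwise distinct elements is a chain of $n+1$ worlds.

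\emph{Soundness} ($\depth\le n\Rightarrow$ validity), by induction. For $n=1$, a generated subframe of depth $1$ is a single point. There $p_1\vee\neg p_1$ holds trivially, since $\neg p_1$ at $w$ only concerns $w$ itself. For $n+1$, fix a valuation and suppose $w\not\models p_{n+1}$. We must show $w\models p_{n+1}\to\bd{n}$. Take $v\succcurlyeq w$ with $v\models p_{n+1}$. Then $v\neq w$, so by monotonicity of the valuation every chain starting at $v$ extends by $w$ to a longer chain. Hence $\depth{(\FName{F},v)}\le n$, and the induction hypothesis gives $v\models\bd{n}$.

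\emph{Completeness} (depth $>n\Rightarrow$ some valuation refutes $\bd{n}$ at the root of a chain). Take distinct $w_0\prec w_1\prec\dots\prec w_n$. Define $V$ by putting $p_k$ true exactly on the upset $\{u\mid w_{n-k+1}\peq u\}$, for $k=1,\dots,n$; each such set is an upset, so $V$ is monotone. I then claim by induction on $k$ that $w_{n-k}\not\models\bd{k}$.
\begin{itemize}
\item For $k=1$: $w_{n-1}\not\models p_1$, since $w_n\not\peq w_{n-1}$ by antisymmetry and distinctness. Also $w_{n-1}\not\models\neg p_1$, since $w_n\succcurlyeq w_{n-1}$ satisfies $p_1$.
\item For $k+1$: $w_{n-k-1}\not\models p_{k+1}$, because $w_{n-k}\not\peq w_{n-k-1}$. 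Moreover $w_{n-k}\succcurlyeq w_{n-k-1}$ satisfies $p_{k+1}$ but refutes $\bd{k}$ by the hypothesis, so $w_{n-k-1}\not\models p_{k+1}\to\bd{k}$.
\end{itemize}
The subtlety here is that $\bd{k}$ only mentions $p_1,\dots,p_k$, so the hypothesis applies under the same valuation. Taking $k=n$ gives $w_0\not\models\bd{n}$, so $\FName{F}\not\models\bd{n}$.

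The main obstacle is the soundness step's use of $v\neq w$ to shrink the depth. For infinite-depth frames the generated depth may be $\infty$, and this is handled by completeness, which only needs a finite chain of $n+1$ worlds. One more point needs care. Validity of a formula on a frame is defined in this paper via "satisfied in a model", meaning true at some world. I will use the standard reading, truth at all worlds. The refutation argument shows $\bd{n}$ fails at $w_0$. The soundness argument shows $\bd{n}$ is true at every world, so the equivalence holds under either reading once we note that truth at every world is what the frame condition guarantees.
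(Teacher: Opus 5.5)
Your proof is correct for the standard notion of frame validity (truth at every world under every valuation). It is the usual argument for this result; the paper gives no proof of its own and only cites it. The soundness induction, the valuation making $p_k$ true on the upset of $w_{n-k+1}$, and the downward induction along the chain all check out.

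One claim in your last paragraph is wrong: the equivalence does \emph{not} hold under either reading.
\begin{itemize}
\item $\bd{n}$ has the form $p_n\vee(p_n\to\chi)$, with $\chi=\bot$ when $n=1$. So in every model it is true wherever $p_n$ is true, and true everywhere if $p_n$ is true nowhere.
\item Hence, under the paper's literal definition (for every valuation, true at \emph{some} world), every frame validates $\bd{n}$. For example, a two-element chain would validate $\bd{1}$, and the theorem is false in that reading.
\item Your refutation at $w_0$ only defeats the all-worlds reading. That is exactly why this reading is the one the statement needs. You should say that the paper's definition must be read this way, not that the choice is immaterial.
\end{itemize}

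Two small repairs:
\begin{itemize}
\item In the soundness step, chains above $v$ extend by $w$ because $w\prec v$, and antisymmetry keeps $w$ out of ${\peq}(v)$. Monotonicity of the valuation plays no role there.
\item For the local ``iff'' you announce, you need to transfer the refutation from $w_0$ down to any $w\peq w_0$. This follows from persistence of truth.
\end{itemize}
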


The strongest proper intermediate logic is the logic of here-and-there (\HT),
which has been studied in the literature by different authors~\citep{heyting30a,goedel32a,smetanich60}.
This logic is obtained by adding the axiom schema~\citep{Hosoi66}
\begin{equation}
p \vee \left(p \to q\right) \vee \neg q \label{eq:hosoi}
\end{equation}
to \I\ and it is characterized by frames of the form $(\lbrace 0,1\rbrace,\peq)$
where ${\peq}=\lbrace (0,0), (1,1), (0,1)\rbrace$.
Broadly speaking, the world $0$ (resp.\ $1$) refers to the world ``here'' (resp.\ ``there'').

We write $\Inf{\logic{X}}$ to specify the satisfaction relation in a concrete intermediate logic $\logic{X}$.
When the underlying logic is clear from the context, we omit $\logic{X}$ from the satisfaction relation.
The notions of satisfiability, validity and consistency are defined in an analogous way as with \I{}.
However, we prefix them with the logic when necessary.
For instance, when $\Gamma$ is consistent in the logic $\logic{X}$, we say that $\Gamma$ is $\logic{X}$-consistent.

Consistency in any intermediate logic $\logic{X}$ is equivalent to consistency in classical logic.
This is because, if a theory $\Gamma$ is classically satisfiable, it is trivially satisfiable in a one-world model;
conversely, if $\Gamma$ is $\logic{X}$-consistent,
there exists a finite model satisfying all formulas in $\Gamma$ at a world $x$.
Hence, a maximal world of the submodel generated by $x$ is a classical world that, in addition, satisfies $\Gamma$.
\begin{lemma}[\citealp{osnaar05a};~\citealp{dalen}]\label{lem:consistency}
  Let $\logic{X}$ and $\logic{Y}$ be two intermediate logics, and let $\Gamma$ be a theory.
  Then $\Gamma$ is $\logic{X}$-consistent iff $\Gamma$ is $\logic{Y}$-consistent.
\end{lemma}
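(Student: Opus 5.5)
The plan is to show that, for any intermediate logic $\logic{X}$, $\Gamma$ is $\logic{X}$-consistent iff $\Gamma$ is classically satisfiable, i.e.\ satisfied by a single valuation $V\subseteq\PV$. The lemma then follows by comparing $\logic{X}$ and $\logic{Y}$ through classical logic, so no direct comparison between them is needed.

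\emph{From classical to $\logic{X}$-consistency.} Suppose $\Gamma$ is classically satisfied by a valuation $V$. Take the one-world frame $(\{w\},\{(w,w)\})$ with $V(w)=V$. On a single reflexive world the intuitionistic clause for $\to$ coincides with the classical one, so by induction on formulas $\model,w\models\varphi$ iff $V$ classically satisfies $\varphi$. Hence $\model,w\models\Gamma$. It remains to check that this frame is an $\logic{X}$-frame. Since $\logic{X}\subseteq\C$, every formula of $\logic{X}$ is a classical tautology, and a tautology is valid on the one-world frame. So the frame validates $\logic{X}$ and is one of its frames; for the concrete conditions of the paper (e.g.\ topwidth $1$, or depth $\le n$ via Theorem~\ref{depth:axiom}) this is immediate.

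\emph{From $\logic{X}$-consistency to classical satisfiability.} Let $\model,x\models\Gamma$ on some $\logic{X}$-frame. By persistence (monotonicity of truth along $\peq$, proved by a routine induction), every $v\succcurlyeq x$ also satisfies $\Gamma$. The goal is to find a maximal world $m\succcurlyeq x$. At such an $m$ the only successor is $m$ itself, so the forcing clauses collapse to the classical ones and $V(m)$ is a classical model of $\Gamma$.

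\emph{The main obstacle: existence of a maximal world.} An arbitrary frame may have an infinite ascending chain above $x$. I would try two routes.

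First, one can restrict attention to Noetherian frames, in which every ascending chain stabilizes, or to finite frames. For a single formula this is justified by the finite model property of $\I$ (via filtration or finite canonical models). For an infinite $\Gamma$ we can argue only through $\I$-consistency, because $\logic{X}$-consistency implies $\I$-consistency, as $\logic{X}$-frames are intuitionistic frames. This is what the paper's remark about ``a finite model'' invokes.

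Second, and more robustly, one can take the ultrafilter-style or Zorn argument. The set $\{V(v)\mid v\succcurlyeq x\}$ is a chain-closed family of sets. Taking a maximal chain $C$ above $x$, define $V^*=\bigcup_{c\in C}V(c)$. One would then show that $V^*$ classically satisfies $\Gamma$, using an induction over formulas that shows $c\models\varphi$ eventually along $C$ iff $V^*\models\varphi$. The implication case needs maximality of $C$ and the absence of genuinely new successors, and I expect it to need care.

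If both routes fail in full generality, I would fall back on the cited result of \citep{dalen}, namely completeness of $\I$ with respect to finite (hence Noetherian) models together with compactness. That gives a classical model of $\Gamma$ directly: $\Gamma$ is $\I$-consistent, so no finite subset intuitionistically proves $\bot$, hence no finite subset classically proves $\bot$ (Glivenko's theorem), and by classical compactness $\Gamma$ is classically satisfiable. This last chain is the route I would actually commit to, since it avoids the maximal-world issue entirely.
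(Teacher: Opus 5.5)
Your committed argument is correct, but its hard direction differs from the paper's. Both proofs go through classical logic and use the one-world model for the easy direction. For the converse, the paper stays semantic: it takes a finite model of $\Gamma$ and a maximal world above $x$, which is classical and satisfies $\Gamma$ by persistence. You instead use three ingredients. Soundness of $\I$ gives that no finite $\Delta\subseteq\Gamma$ derives $\bot$ intuitionistically. Glivenko's theorem says $\neg\bigwedge\Delta$ is classically provable only if it is intuitionistically provable. Classical strong completeness then gives a classical model. Note that this chain needs soundness of $\I$, not its completeness for finite models.

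Your version is more careful for infinite $\Gamma$. The finite model property only gives finite countermodels to single formulas, so the paper's ``finite model satisfying all formulas in $\Gamma$'' tacitly needs a compactness step as well: the finite model property for each finite subset, a maximal world, then classical compactness. The paper's version, on the other hand, is the argument that transfers. It is exactly what Lemma~\ref{lem:itlbn2ltl-consistency} reuses, with bounded $\peq$-depth supplying the maximal world and Proposition~\ref{prop:maximal:next} propagating maximality along $S$. Your route has no temporal counterpart, because no complete calculus is known there and the consistency form of Glivenko's theorem fails. The paper's theories $\{\neg\next p,\neg\next\neg p\}$ and $\{\alwaysF\neg\neg p,\neg\alwaysF p\}$ are \iltl- resp.\ \itlb-consistent but \LTL-inconsistent.

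Your Zorn route, by contrast, fails as stated, because an arbitrary maximal chain need not decide formulas. Take $c_0\peq c_1\peq\cdots$ together with maximal worlds $d_i$ such that $c_j\peq d_i$ iff $j\le i$, with $p$ true exactly at the $d_i$. Then $c_0\models\neg\neg p$ and $\{c_i\mid i\ge 0\}$ is a maximal chain, yet the union valuation is $\emptyset$, which falsifies $\neg\neg p$. The fix, for countable $\PV$, is to build an $\omega$-chain from $x$ that meets each of the dense sets $\{v\mid v\models\varphi\text{ or }v\models\neg\varphi\}$. Along such a chain your induction goes through, including the implication case, and yields a purely semantic proof that also covers infinite $\Gamma$.
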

\begin{definition}[Local semantic consequence]\label{def:sem-consequence}
  Let $\logic{X}$ be any intermediate logic and
  let $\Gamma$ and $\psi$ be a propositional theory and a propositional formula, respectively.
  We define $\psi$ as a \emph{local semantic consequence} of $\Gamma$,
  written $\Gamma \Inf{\logic{X}} \psi$,
  if,
  for each $\logic{X}$-model $\model = \tuple{\FName{F},V}$ and for each world $w \in \FName{F}$,
  \begin{align*}
    \big(\model, w \models \varphi\text{ for all } \varphi\in\Gamma\big) \text{ implies } \model, w \models \psi.
  \end{align*}

  When $\psi$ is replaced by a theory $\Delta$, we say that $\Delta$ is \emph{local semantic consequence} of $\Gamma$,
  written $\Gamma \Inf{\logic{X}} \Delta$,
  if for each model $\logic{X}$-model $\model = \tuple{\FName{F},V}$ and for each world $w \in \FName{F}$,
  \begin{align*}
  	\big(\model, w \models \varphi\text{ for all } \varphi\in\Gamma\big) \text{ implies } \left( \model, w \models \psi, \hbox{ for all } \psi \in \Delta\right).
   \end{align*}
\end{definition}
\begin{proposition}\label{prop:consequence:intermediate}
  Let $\logic{X} \subseteq \logic{Y}$ be two intermediate logics.
  For all theories $\Gamma$ and all formulas $\varphi$,
  $\Gamma \Inf{\logic{X}} \varphi$  implies $\Gamma \Inf{\logic{Y}} \varphi$.
\end{proposition}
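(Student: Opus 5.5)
The proof is a direct inclusion of model classes. If $\logic{X}\subseteq\logic{Y}$, then every $\logic{Y}$-frame is an $\logic{X}$-frame. Local semantic consequence in $\logic{Y}$ quantifies over fewer models and worlds than in $\logic{X}$, so it is implied by consequence in $\logic{X}$.

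\textbf{Step 1: every $\logic{Y}$-frame is an $\logic{X}$-frame.} Let $\FName{F}$ be a $\logic{Y}$-frame. By the definition of $\logic{Y}$ as the set of formulas valid on all $\logic{Y}$-frames, $\FName{F}\models\chi$ for every $\chi\in\logic{Y}$, and in particular for every $\chi\in\logic{X}$. The axiom schemas that generate $\logic{X}$ belong to $\logic{X}$, so $\FName{F}$ validates all their instances. By the correspondence between schemas and frame restrictions used in the paper to define $\logic{X}$-frames, $\FName{F}$ then satisfies the restriction on $\peq$ induced by those schemas, and so $\FName{F}$ is an $\logic{X}$-frame.

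\textbf{Step 2: conclude the consequence claim.} Assume $\Gamma\Inf{\logic{X}}\varphi$. Take any $\logic{Y}$-model $\model=\tuple{\FName{F},V}$ and any world $w$ with $\model,w\models\Gamma$. By Step 1, $\model$ is also an $\logic{X}$-model. The forcing clauses do not depend on which logic we are in, so the hypothesis applies and gives $\model,w\models\varphi$. Hence $\Gamma\Inf{\logic{Y}}\varphi$.

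\textbf{Main obstacle.} The only delicate point is Step 1, which relies on the informal identification of ``$\logic{X}$-frames'' with the frames validating $\logic{X}$. The cleanest way to make this precise is to read ``$\logic{X}$-frame'' as ``a frame on which every formula of $\logic{X}$ is valid.'' This reading agrees with the paper's description, since the frame restriction is exactly the one induced by the generating schemas, and closure under modus ponens and substitution preserves frame validity. Under this reading, the inclusion of $\logic{Y}$-frames in $\logic{X}$-frames follows immediately from $\logic{X}\subseteq\logic{Y}$.
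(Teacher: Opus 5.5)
Your proof is correct and is essentially the paper's argument. The paper argues by contradiction that a $\logic{Y}$-countermodel $\model,w$ (with $\model,w\models\Gamma$ and $\model,w\not\models\varphi$) ``can be reconsidered within $\logic{X}$'' because $\logic{X}\subseteq\logic{Y}$, which is your inclusion of $\logic{Y}$-models among $\logic{X}$-models run contrapositively; your Step 1 makes explicit the frame inclusion the paper takes for granted, and reading ``$\logic{X}$-frame'' as ``frame on which every formula of $\logic{X}$ is valid'' is the right way to make it rigorous.
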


We denote by
\(
\cn{\logic{X}}{\Gamma} := \lbrace \varphi \in \langP \mid \Gamma \Inf{\logic{X}} \varphi \rbrace
\)
the set of consequences obtained from $\Gamma$ within the intermediate logic $\logic{X}$.

\subsection{Bisimulations for Intuitionistic Logic}

In intuitionistic and modal logic,
a world $w$ in a model $\model$ cannot see the entire Kripke model,
it can only explore it \textit{locally}, step by step, by means of the different accessibility relations
($\peq$ in the case of \I).
The concept of bisimulation states that,
if two worlds $w$ and $w'$ are \textit{bisimilar},
it does not matter how one can try to explore from $w$ and $w'$ using modal formulas,
since they are behaviorally indistinguishable with respect to their logical properties.
Bisimulation in intuitionistic propositional logic was studied by~\cite{patterson97a}.
Formally, given two models $\model_1 = \tuple{(W_1,\peq_1),V_1}$ and $\model_2= \tuple{(W_2,\peq_2),V_2}$,
a bisimulation $\bisim$ is a relation on $W_1\times W_2$ that satisfies the following properties:
\begin{enumerate}[label=\textbf{C\arabic*}]
\item\label{cond:atoms} if $w_1 \bisim w_2$ then $V_1(w_1)= V_2(w_2)$
\item\label{cond:forth:imp} if $w_1 \bisim w_2$ then for all $v_1 \in W_1$, if $w_1 \peq_1 v_1$,
  then there exists $v_2\in W_2$ such that $w_2\peq_2v_2$ and $v_1 \bisim v_2$
\item\label{cond:back:imp} if $w_1 \bisim w_2$ then for all $v_2 \in W_2$, if $w_2 \peq_2 v_2$,
  then there exists $v_1\in W_1$ such that $w_1\peq_1v_1$ and $v_1 \bisim v_2$
\end{enumerate}
Condition~\ref{cond:atoms} ensures that two bisimilar worlds satisfy the same atoms.
Condition~\ref{cond:forth:imp} expresses that if we move forward from $w_1$ to a new world $v_1$,
we can find a matching move from $w_2$ to a world $v_2$ that is (logically) indistinguishable from $v_1$.
In other words, the second model can \textit{imitate} the movements of the first.
Condition~\ref{cond:back:imp} is the mirror image of~\ref{cond:forth:imp}:
now the second model makes a move and the first one must be able to imitate it.

Given two models $\model_1 = \tuple{(W_1,\peq_1),V_1}$ and $\model_2= \tuple{(W_2,\peq_2),V_2}$ with $w_1\in W_1$ and $w_2 \in W_2$,
we say that $\model_1$ and $\model_2$ are \emph{bisimilar},
if there exists a bisimulation $\bisim$ between $W_1$ and $W_2$ such that $w_1 \bisim w_2$.
The following lemma states that two bisimilar Kripke worlds satisfy the same formulas.
\begin{lemma}[\citealp{patterson97a}]\label{lem:bisim:il}
  Given two models $\model_1 = \tuple{(W_1,\peq_1),V_1}$ and $\model_2= \tuple{(W_2,\peq_2),V_2}$ and
  a bisimulation $\bisim$ on $W_1\times W_2$,
  then for all $w_1\in W_1$ and for all $w_2 \in W_2$,
  if $w_1\bisim w_2$ then for all $\varphi \in \langP$, $\model_1,w_1 \models \varphi$ iff $\model_2, w_2 \models \varphi$.
\end{lemma}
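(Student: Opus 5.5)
The plan is to prove Lemma~\ref{lem:bisim:il} by structural induction on $\varphi \in \langP$, for all pairs at once. The induction hypothesis for $\varphi$ is: for every $w_1 \in W_1$ and $w_2 \in W_2$ with $w_1 \bisim w_2$, we have $\model_1,w_1\models\varphi$ iff $\model_2,w_2\models\varphi$. Quantifying over all bisimilar pairs, not just a fixed one, is essential, because the implication clause moves to new worlds.

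The base cases are immediate. For an atom $p$, condition~\ref{cond:atoms} gives $V_1(w_1)=V_2(w_2)$, so $p\in V_1(w_1)$ iff $p\in V_2(w_2)$. For $\bot$, both sides are false. For $\varphi\wedge\psi$ and $\varphi\vee\psi$, the satisfaction clauses are local to the current world, so the claim follows directly from the induction hypothesis for $\varphi$ and $\psi$ applied at the same pair $(w_1,w_2)$.

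The only substantive step is implication, where the forth and back conditions are used.

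\emph{Left to right.} Suppose $\model_1,w_1\models\varphi\to\psi$, and let $v_2\succcurlyeq_2 w_2$ with $\model_2,v_2\models\varphi$. By~\ref{cond:back:imp} there is $v_1$ with $w_1\peq_1 v_1$ and $v_1\bisim v_2$. The induction hypothesis for $\varphi$ at the pair $(v_1,v_2)$ gives $\model_1,v_1\models\varphi$. Since $w_1$ forces the implication, $\model_1,v_1\models\psi$. The induction hypothesis for $\psi$ at $(v_1,v_2)$ then gives $\model_2,v_2\models\psi$. As $v_2$ was arbitrary, $\model_2,w_2\models\varphi\to\psi$.

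\emph{Right to left.} This direction is symmetric, using~\ref{cond:forth:imp} in place of~\ref{cond:back:imp}.

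Negation needs no separate case, since it is defined as $\varphi\to\bot$.

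I expect no real obstacle. The one point to watch is the implication case: the induction hypothesis must be available at the new bisimilar pair $(v_1,v_2)$ produced by the forth or back step, which is exactly why it is stated uniformly over all pairs in $\bisim$.
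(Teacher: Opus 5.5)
Your proof is correct and follows the same route the paper indicates: structural induction over all bisimilar pairs at once, with \ref{cond:atoms} handling atoms and \ref{cond:back:imp} and \ref{cond:forth:imp} handling the two directions of implication. You also state explicitly that the induction hypothesis must range over every pair in $\bisim$, not just $(w_1,w_2)$, which the paper's one-line sketch leaves implicit.
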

The proof of the lemma is done by structural induction and it uses~\ref{cond:atoms} to prove the case of the propositional variables while conditions~\ref{cond:forth:imp} and~\ref{cond:back:imp} are used to prove the case of implication.

\cite{osorio} show that the notion of $\logic{X}$-safe beliefs is independent of the intermediate logic $\logic{X}$;
replacing $\logic{X}$ with any proper intermediate logic leaves the set of safe beliefs unchanged.
Their approach, however, relies on a syntactic entailment relation and a Hilbert-style axiomatization of \I.
Here, we establish the same result semantically via bisimulations.
Our first lemma demonstrates that if a world $w$ in an intuitionistic model $\model$ satisfies all instances of the axiom $\neg p \vee \neg \neg p$,
then all maximal worlds in the subframe generated by $w$ can be merged into a single world.
\begin{proposition}\label{prop:weakexmiddle}
Let $\model = \tuple{(W,\peq), V}$ be an intuitionistic model and let $w \in W$	be such that $\model, w \models \lbrace \neg p \vee \neg \neg p\mid p \in \PV \rbrace$.
It follows that all maximal $\peq$-worlds in the subframe generated by $x$ satisfy the same propositional variables.
\end{proposition}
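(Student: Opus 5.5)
The argument is a direct semantic one. The statement writes $x$, which I read as $w$. Fix a propositional variable $p$ and two maximal worlds $m_1, m_2$ of the subframe generated by $w$, so $w \peq m_1$ and $w \peq m_2$. The goal is to show that $p \in V(m_1)$ iff $p \in V(m_2)$; since $p$ is arbitrary, this gives $V(m_1)=V(m_2)$.

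The first step is a preliminary observation about maximal worlds. At a maximal world $m$, the only $\peq$-successor of $m$ is $m$ itself, so implication there behaves classically. Concretely, $\model, m \models \neg p$ iff $p \notin V(m)$, and $\model, m \models \neg\neg p$ iff $p \in V(m)$. Both facts follow immediately from the satisfaction clause for $\to$ applied to the singleton $\{m\}$.

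The main step uses the hypothesis $\model, w \models \neg p \vee \neg\neg p$ and splits on the disjunction clause.
\begin{enumerate}
\item If $\model, w \models \neg p$, then by the clause for implication every $v \succcurlyeq w$ satisfies $p \notin V(v)$. In particular $p \notin V(m_1)$ and $p \notin V(m_2)$.
\item If $\model, w \models \neg\neg p$, then every $v \succcurlyeq w$ fails $\neg p$. Applying this to $v = m_i$ and using the observation above, $m_i \not\models \neg p$ means $p \in V(m_i)$ for $i=1,2$.
\end{enumerate}
In either case $m_1$ and $m_2$ agree on $p$.

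Persistence (monotonicity of truth along $\peq$) is not needed, since the implication clauses quantify directly over all successors of $w$. I see no real obstacle. The only point requiring care is the reading of $\neg\neg p$ at a maximal world, which rests on maximality making $m$ its own unique successor.
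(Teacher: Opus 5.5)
Your proof is correct and uses the same two ingredients as the paper's: the disjunction $\neg p \vee \neg\neg p$ holding at $w$, and the classical behaviour of maximal worlds. The paper argues by contradiction, taking $p \in V(u)\setminus V(u')$ and using monotonicity to move $\neg p$ and $\neg\neg p$ between $w$ and the two maximal worlds, whereas you split on the disjunct that holds at $w$ and unfold the implication clause directly, which is the same argument in direct form.
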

\begin{lemma}\label{lem:bisim:topwidth1}
Let $\model=\tuple{(W,\peq),V}$ an intuitionistic model and let $w \in W$ be such that $\model, w \models \lbrace \neg p \vee \neg \neg p\mid p \in \PV \rbrace$.
There exists a model $\model'=\tuple{(W',\peq'),V'}$,  $w'\in W'$  and a bisimulation $\bisim \subseteq W\times W'$ such that $w \bisim w'$ and the subframe generated by $w'$ has a unique maximal world with respect to $\peq'$.
\end{lemma}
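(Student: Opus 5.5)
The plan is to merge all maximal worlds above $w$ into one and to show that the obvious relation is a bisimulation. Let $U=\lbrace u\in W\mid w\peq u\rbrace$ and let $M\subseteq U$ be the set of maximal worlds of the subframe generated by $w$. By Proposition~\ref{prop:weakexmiddle} all worlds in $M$ satisfy the same atoms; call this common set $A$.

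\emph{Construction.} Take a fresh world $t$ and set $W'=(W\setminus M)\cup\lbrace t\rbrace$ and $V'(t)=A$, with $V'(u)=V(u)$ for every other $u$. Define $u\peq' v$ iff one of the following holds:
\begin{itemize}
\item $u,v\in W\setminus M$ and $u\peq v$;
\item $v=t$, and either $u=t$ or $u\peq m$ for some $m\in M$.
\end{itemize}
Put $w'=w$; if $w\in M$, then $w'=t$.

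\emph{Verification.}
\begin{enumerate}
\item $\peq'$ is a partial order. Reflexivity and antisymmetry are immediate. For transitivity, note that $t$ has no proper successors, and that "lying below some element of $M$" is closed downwards in $\peq$.
\item $V'$ is monotone. If $u\peq m\in M$, then $V(u)\subseteq V(m)=A=V'(t)$.
\item The relation $\bisim=\lbrace (u,u)\mid u\in W\setminus M\rbrace\cup\lbrace (m,t)\mid m\in M\rbrace$ is a bisimulation.
\begin{itemize}
\item \ref{cond:atoms} holds by the choice of $A$.
\item \ref{cond:forth:imp}: a step $u\peq v$ with $v\in M$ is answered by $u\peq' t$. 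Any other step is answered by the same world.
\item \ref{cond:back:imp}: a step $u\peq' t$ from $u\notin M$ means $u\peq m$ for some $m\in M$, and $m\bisim t$. From the pair $(m,t)$, the only move on either side is the trivial one, and trivial moves match each other.
\end{itemize}
\end{enumerate}
Thus $w\bisim w'$. Moreover $t$ is the only maximal world of the subframe generated by $w'$: every maximal world of $U$ was replaced by $t$, and no world of $W\setminus M$ became maximal, because we kept all original successors.

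\emph{Main obstacle.} The delicate point is the case $M=\emptyset$, where the subframe generated by $w$ has no maximal world at all. A fresh top world $t$ would then need a bisimilar partner in $W$, i.e.\ a world all of whose successors carry the same valuation. When $\PV$ is infinite this can fail. For example, take the chain $w_0\peq w_1\peq\cdots$ with $V(w_n)=\lbrace p_1,\dots,p_n\rbrace$. Here $w_0\models\neg\neg p$ for every $p$, so the hypothesis holds, yet no bisimilar model has a maximal world above $w_0$.

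I would therefore carry out the argument above for the case $M\neq\emptyset$, which covers every finite model and every model of finite depth. I would then check that these are exactly the models to which the lemma is applied afterwards: consistent theories have finite models, as in the discussion preceding Lemma~\ref{lem:consistency}, and the later arguments work with such models. I would flag explicitly that the statement needs this tacit assumption in the infinite-depth case.
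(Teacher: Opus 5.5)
Your construction is the paper's: merge the maximal worlds above $w$ into one fresh world, keep the remaining worlds, and relate each maximal world to the fresh one and every other world to itself. Your version is more careful than the paper's. The paper sets $V'(u):=V(x)$ for some maximal $x\succcurlyeq w$ and puts $v\peq' u$ for \emph{every} non-maximal $v$ above $w$. Its proof therefore silently assumes that maximal worlds exist and that every world above $w$ reaches one, as in frames of finite depth. You add $u\peq' t$ only for worlds lying below some element of $M$, so your check of \ref{cond:atoms}--\ref{cond:back:imp} goes through whenever $M\neq\emptyset$.

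Your counterexample is correct. A maximal $t$ bisimilar to some $w_n$ would force $V(w_{n+1})=V'(t)=V(w_n)$. So the lemma as stated fails for infinite $\PV$ and models of infinite depth. Flagging this is right; it is a defect of the paper, not of your proof.

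The one claim I would not let stand is your closing justification. In Lemma~\ref{prop:intermediate2} this lemma is applied to an \emph{arbitrary} model satisfying the premises, because the goal is a local consequence over all \I-models. The fact that consistent theories have finite models says nothing about that universal quantifier.

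In fact the restriction cannot be dropped there. Take your own chain with $\PV=\lbrace p_1,p_2,\dots\rbrace$ and $\Gamma=\lbrace p_{i+1}\to p_i,\ (p_i\to p_{i+1})\to p_{i+1}\mid i\ge 1\rbrace$. At $w_0$, $\Gamma$ holds together with every $\neg\neg p_i$, yet $p_1$ fails. In \HT{}, however, once the there-world is $\PV$, these formulas force every $p_i$ into the here-world. So $T=\PV$ is an \HT-safe belief set that is not \I-safe under the semantic definition.

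The honest repair is therefore a finiteness assumption, not an appeal to consistency. One option is finite $\PV$ and finite $\Gamma$, where the finite model property lets you check the entailment on finite models. Another is to take consequence over finite-depth frames.
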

\begin{proof}
From $\model, w \models \lbrace \neg p \vee \neg \neg p\mid p \in \PV \rbrace$ and Proposition~\ref{prop:weakexmiddle}, all maximal worlds in the subframe generated by $w$ satisfy the same propositional variables.
Let us define now the model $\model' := \tuple{(W',\peq'),V'}$ as follows:
\begin{itemize}
	\item $W' = \lbrace v\in W, u \mid w \peq v \hbox{ and } v \hbox{ is not } \peq\hbox{-maximal }\rbrace$, where $u\not \in W$ is a fresh world.
	\item $v \peq' v'$ if $v, v'\in W$ and $v\peq v'$ or $v' = u$; $u \peq' u$.
	\item $V'(u) := V(x)$, where $x\in W$, $w \peq x$ and $x$ is $\peq$-maximal; $V'(v):=V(v)$, for all $v\in W'$ with $v \not=u$.
\end{itemize}

	\begin{figure}[h!]\centering
		\begin{tikzpicture}
			[level distance=1.5cm,
			level 1/.style={sibling distance=1.5cm},
			level 2/.style={sibling distance=0.4cm}]
			\node (w1) {$w$}
			child[->] {node (v1) {$v_1$}
				child[->] {node (u1) {$u_1$} edge from parent node[left] {$\peq$}}
				child[->] {node (u2) {$u_2$} edge from parent node[right] {$\peq$}}
				edge from parent node[left,pos=.2] {$\peq$}
			}
			child[->] {node (v2) {$v_2$}
				child[->] {node (u3) {$u_3$} edge from parent node[left] {$\peq$}}
				edge from parent node[left,pos=.2] {$\peq$}
			}
			child[->] {node (v3) {$v_3$}
				child[->] {node (u4) {$u_4$} edge from parent node[left] {$\peq$}}
				child[->] {node (u5) {$u_5$} edge from parent node[right] {$\peq$}}
				edge from parent node[right,pos=.2] {$\peq$}
			};

			\node[right of = w1, node distance=6cm] (ww1) {$w$}
			child[->] {node (vv1) {$v_1$} edge from parent node[left,pos=.2] {$\peq'$}}
			child[->] {node (vv2) {$v_2$}
				child[->] {node (uu1) {$u$}edge from parent node[left,pos=.2] {$\peq'$}}
				edge from parent node[right,pos=.2] {$\peq'$}
			}
			child[->] {node (vv3) {$v_3$} edge from parent node[right,pos=.2] {$\peq'$}};

			\draw[->] (vv1) -- node[right,pos=.2] {$\peq'$} (uu1);
			\draw[->] (vv3) -- node[right,pos=.2] {$\peq'$} (uu1);

			\node[below of=u3,node distance=2cm] (label1) {$\model$};
			\node[below of=uu1,node distance=2cm] (label2) {$\model'$};

			\path[dashed, color=red]
			(w1) edge[] (ww1)
			(v1) edge[bend left] (vv1)
			(v2) edge[bend left] (vv2)
			(v3) edge[bend left] (vv3)
			(u1) edge[bend right] (uu1)
			(u2) edge[bend right] (uu1)
			(u3) edge[bend right] (uu1)
			(u4) edge[bend left] (uu1)
			(u5) edge[] (uu1)
			;

		\end{tikzpicture}
		\caption{A bisimulation relation, represented in red dashed lines, among two intuitionistic models $\model$ (on the left) and $\model'$ (with a unique maximal world, on the right).
			We assume that, $\model, w \models \lbrace \neg p \vee \neg \neg p\mid p \in \PV \rbrace$ and we define $V'$ as $V'(w):=V(w)$, $V'(v_i):= V(v_i)$ for all $i \in \lbrace 1,2,3\rbrace$ and
			$V'(u)$ can be set (for instance) to $V(u_1)$. Reflexivity and transitivity of $\peq$ and $\peq'$ is not represented for the sake of readability.}
		\label{fig:bisimilar:il}
	\end{figure}

	It can be checked that $\model'$ is an intuitionistic model and, moreover, that there exists a relation $\bisim\subseteq W\times W'$, displayed in red dashed lines in Figure~\ref{fig:bisimilar:il}.
	In general, we map maximal worlds in $W$ to $u\in W'$ while the remaining worlds $v \in W$ that belong to the subframe generated by $w$ are mapped to themselves in $W'$ (where they also belong by construction). The reader can easily check that $\bisim$ is a bisimulation.
\end{proof}

The notion of bisimulation can also be used to contract intuitionistic models to \HT\ models.
In the following lemma, we identify the condition under which such contraction is possible.
Before presenting our result, we introduce the following notation.
\begin{definition}
	Let $\model = \tuple{(W,\peq),V}$ be an intuitionistic model and let $w\in W$.
	We define the sets
	\begin{align*}
		{\peq}(w)  &:= \lbrace v \in W \mid w \peq v\rbrace \\
		{\prec}(w) &:= \lbrace v \in W \mid w \peq v \hbox{ and } v \not = w\rbrace.
	\end{align*}
\end{definition}
Clearly, $w \in {\peq(w)}$ while $w \notin {\prec(w})$.
\begin{lemma}[Contraction lemma]\label{lem:contraction:il}
	Let $\model = \tuple{(W,\peq),V}$ be an intuitionistic model, let $T\subseteq \PV$ and let $w \in W$ satisfying the following conditions:
	\begin{enumerate}
		\item ${\peq}(w)$ has a unique maximal world, denoted by $u$, and
		\item $V(v)=T$, for all $v \in {\prec}(w)$.
	\end{enumerate}
	Then, there exists a \HT\ model $\model' = \tuple{(\lbrace 0,1 \rbrace,\peq'),V'}$ and a bisimulation $\bisim \subseteq W\times \lbrace 0,1\rbrace$ such that $w \bisim 0$
\end{lemma}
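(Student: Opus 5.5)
We construct the \HT\ model explicitly and check conditions \ref{cond:atoms}--\ref{cond:back:imp} directly. Let $u$ be the unique maximal world of ${\peq}(w)$. Put $V'(0) := V(w)$ and $V'(1) := T$, with $\peq'$ the standard \HT\ order. If ${\prec}(w)=\emptyset$, then $w=u$ is maximal. In that case we instead take $V'(0)=V'(1)=V(w)$ (equivalently $T:=V(w)$, since the second hypothesis is vacuous), and the bisimulation is simply $\{(w,0),(w,1)\}$. Assume from now on that ${\prec}(w)\neq\emptyset$. Then $u\neq w$, so $u\in{\prec}(w)$ and $V(u)=T$.

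First we check that $\model'$ is a legitimate \HT\ model, that is, $V'(0)\subseteq V'(1)$. This holds by monotonicity of $V$: $V(w)\subseteq V(u)=T$. Next define
\[
\bisim := \lbrace (w,0)\rbrace \cup \lbrace (v,1) \mid v \in {\prec}(w)\rbrace .
\]
Condition~\ref{cond:atoms} holds because $V(w)=V'(0)$ and $V(v)=T=V'(1)$ for every $v\in{\prec}(w)$.

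For the forth condition~\ref{cond:forth:imp}, consider first the pair $(w,0)$ and any $v\succcurlyeq w$. Either $v=w$, matched by $0$, or $v\in{\prec}(w)$, matched by $1$. For a pair $(v,1)$ with $v\in{\prec}(w)$, any $v'\succcurlyeq v$ satisfies $v'\succcurlyeq w$ and $v'\neq w$. The latter follows by antisymmetry: $v' = w$ would give $w\peq v\peq w$, hence $v=w$, a contradiction. So $v'\in{\prec}(w)$, and it is matched by $1\peq' 1$.

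For the back condition~\ref{cond:back:imp}, start with the pair $(w,0)$. The successor $0$ is matched by $w$ itself, and the successor $1$ is matched by $u$, which lies in ${\prec}(w)$. For a pair $(v,1)$, the only successor of $1$ is $1$, matched by $v\peq v$.

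The only real subtlety is the case split on whether $w$ is itself maximal. In that case $(w,1)$ must be added to $\bisim$, and the first hypothesis is what guarantees the existence of a world ($u$) to witness the back condition for the move $0\peq'1$. Otherwise the verification is routine, and Lemma~\ref{lem:bisim:il} then transfers formulas between $w$ and $0$.
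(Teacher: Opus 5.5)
Your proof is correct and is essentially the paper's construction: $V'(0)=V(w)$ and $V'(1)=V(u)$, which equals your $T$ whenever $\prec(w)\neq\emptyset$, with $w\bisim 0$ and every $v\in\prec(w)$ related to $1$. Your separate treatment of the case $\prec(w)=\emptyset$, adding $(w,1)$ to $\bisim$, repairs a small oversight in the paper: there the relation reduces to $\{(w,0)\}$ and fails the back condition for $0\peq' 1$. One minor point on your closing remark: the uniqueness of $u$ is never really needed. Any element of $\prec(w)$, or $w$ itself in the degenerate case, witnesses that back condition and also gives $V(w)\subseteq T$.
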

\begin{proof}
	Let us define $\model' = \tuple{(\lbrace 0,1\rbrace, \peq'),V'}$ as $V'(0) := V(w)$ and $V'(1):=V(u)$.
	Let us define the relation $\bisim \subseteq W\times \lbrace 0,1\rbrace$ as $\bisim := \lbrace ((w,0),(v,1))\mid v \in \prec(w)\rbrace$.
	It can be checked that $\bisim$ is a bisimulation among $\model$ and $\model'$.
	Figure~\ref{fig:ht:contraction} shows both $\model$ and $\model'$ together with $\bisim$  (in red dashed lines).
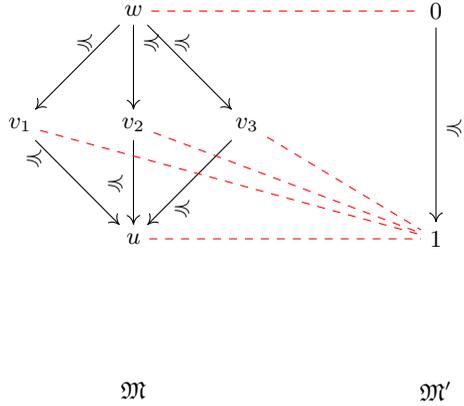
\begin{figure}[h!]\centering
		\begin{tikzpicture}
			[level distance=1.5cm,
			level 1/.style={sibling distance=1.5cm},
			level 2/.style={sibling distance=0.4cm}]

			\node[] (ww1) {$w$}
			child[->] {node (vv1) {$v_1$} edge from parent node[left,pos=.2] {$\peq$} }
			child[->] {node (vv2) {$v_2$} child {node (uu1) {$u$} edge from parent node[left] {$\peq$}} edge from parent node[right,pos=.2] {$\peq$}}
			child[->] {node (vv3) {$v_3$} edge from parent node[right,pos=.2] {$\peq$}};

			\draw[->] (vv1) -- node[left,pos=.2] {$\peq$} (uu1);
			\draw[->] (vv3) -- node[right,pos=.8] {$\peq$} (uu1);

			\node[right of = ww1, node distance=4cm] (www1) {$0$}
			child[->] {node[right of=uu1,node distance=4cm] (uuu1) {$1$}  edge from parent node[right] {$\peq'$}};

			\node[below of=uu1,node distance=2cm] (label2) {$\model$};
			\node[below of=uuu1,node distance=2cm] (label1) {$\model'$};

			\path[dashed, color=red]
			(ww1) edge[] (www1)
			(vv1) edge[] (uuu1)
			(vv2) edge[] (uuu1)
			(vv3) edge[] (uuu1)
			(uu1) edge[] (uuu1);

		\end{tikzpicture}
		\caption{An intuitionistic model $\model$, a \HT{} model $\model'$ and a bisimulations $\bisim$ (in red dashed lines) among them.
			As preconditions, $V(v) = T$ for all $v \in W$ with $v \not = w$, $V'(0) = V(w)$ and $V'(1) = V(u)$. Reflexivity and transitivity of $\peq$ and $\peq'$ are not represented for the sake of readability.}
			\label{fig:ht:contraction}
	\end{figure}
\end{proof}

\subsection{Equilibrium Logic}

The logic of here-and-there (\HT;~\citealp{heyting30a}) is fundamental in logic programming,
since it serves as the basis for equilibrium logic~(\EL;~\citealp{pearce96a,pearce06a}),
the most prominent logical characterization of \emph{stable models} and \emph{answer sets}~\citep{gellif88b}.
\EL\ extends \HT\ via a model selection criterion that induces nonmonotonicity.
For any two \HT\ models $\model_1 = (\FName{F},V_1)$ and $\model_2 = (\FName{F},V_2)$,
we define a partial order $\model_1\le \model_2$ holding when $V_1(1) = V_2(1)$ and $V_1(0)\subseteq V_2(0)$.
Strict inequality $\model_1 < \model_2$ holds if $\model_1 \le \model_2$ and $V_1 \neq V_2$.
A model $\model = (\FName{F},V)$ is called \emph{total} if $V(0) = V(1)$.

Equilibrium models are then defined as follows.
\begin{definition}[\citealp{pearce06a}]
  A total \HT\ model $\model = \tuple{\FName{F},V}$ is an equilibrium model of a formula $\varphi$ if
  \begin{enumerate}
  \item $\model, 0 \Inf{\HT} \varphi$, and
  \item $\model$ is $\le$-minimal, i.e., there is no $\model' < \model$ such that $\model', 0 \Inf{\HT} \varphi$.
  \end{enumerate}
  \qed
\end{definition}
 \section{Two Fixpoint Characterisations of Propositional Equilibrium Logic}\label{sec:fixpoint}

This section presents two characterizations of equilibrium models (and, consequently, answer sets) that
we extend to the temporal setting in Section~\ref{sec:main-contribution}.
The first characterization, originally defined by~\cite{pearce06a}, is based on the concept of \emph{theory completions},
which has also been used in autoepistemic and default logic~\citep{martru93a,besnard89}.

\subsection{Pearce's fixpoint characterization}

\begin{definition}[\citealp{pearce06a}]
  Let $\Gamma$ be a theory.
  A set $E$ of formulas extending $\Gamma$ is said to be a \emph{completion} of $\Gamma$ iff
  \begin{align*}
    E & = \cn{\HT}{\Gamma \cup \lbrace \neg \psi \mid \psi \not \in E \rbrace}.\tag*{\qed}
  \end{align*}
\end{definition}

Equilibrium models correspond precisely to completions in the propositional case.
For any model $\model$, we define
\begin{equation*}
\Th{\model}:= \lbrace \psi \mid   \model, 0 \Inf{\HT} \psi \rbrace.
\end{equation*}
The relation between equilibrium models and theory completions is made precise next.
\begin{proposition}[\citealp{pearce99b}]\label{prop:pearce}
For any theory $\Gamma$, there is a one-to-one correspondence between the equilibrium models of $\Gamma$ and the
completions of $\Gamma$.
In particular, $E = \Th{\model}$ for some equilibrium model $\model$ of $\Gamma$.
Similarly, any total \HT\ model $\model=\tuple{\FName{F}, V}$ is an equilibrium model of $\Gamma$ iff $V(0) =V(1)= E\cap\PV$
for some completion $E$ of $\Gamma$.
\end{proposition}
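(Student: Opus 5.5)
I will prove both directions of the correspondence, using $\Th{\cdot}$ in one direction and $E\cap\PV$ in the other. The guiding fact is that for a total \HT\ model $\model$ with $V(0)=V(1)=T$, the world $0$ behaves classically. Hence $\Th{\model}$ is a complete theory: for every $\psi$, either $\psi\in\Th{\model}$ or $\neg\psi\in\Th{\model}$.

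\textbf{From equilibrium models to completions.} Let $\model$ be an equilibrium model of $\Gamma$ and set $E:=\Th{\model}$. Clearly $\Gamma\subseteq E$. Put $N:=\lbrace\neg\psi\mid\psi\notin E\rbrace$. Since $\model$ is total and $\model,0\not\models\psi$ for each $\psi\notin E$, we get $\model,0\models N$, and thus $\cn{\HT}{\Gamma\cup N}\subseteq E$.

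For the converse inclusion, take $\varphi\in E$ and any \HT\ model $\model'$ and world $x$ with $\model',x\models\Gamma\cup N$. For every atom $p\notin T$ we have $\neg p\in N$. If $x=1$, or more generally at world $1$, $\model'$ therefore contains no atom outside $T$. For $p\in T$ we have $\neg p\notin E$, so $\neg\neg p\in N$, which forces $p$ into $V'(1)$. Hence $V'(1)=T$.

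If $x=1$, then $\model'$ at $1$ is the classical model $T$, which satisfies $\varphi$. If $x=0$, then the model $\tuple{\FName F,V'}$ with $V'(1)=T$ satisfies $\Gamma$ at $0$. Let $\model''$ be the model obtained from it by restricting $V'(0)$ to $V'(0)\cap T$. Monotonicity ensures $V'(0)\subseteq V'(1)=T$ anyway, so in fact $\model''=\model'$, and therefore $\model'\le\model$. By minimality, $\model'=\model$, so $\model',0\models\varphi$.

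\textbf{From completions to equilibrium models.} Let $E$ be a completion and $T:=E\cap\PV$. First, $E$ is consistent. If it were not, every formula would lie in $E$, so $N=\emptyset$ and $E=\cn{\HT}{\Gamma}$. Then $\Gamma$ would be \HT-inconsistent, and by Lemma~\ref{lem:consistency} classically inconsistent as well. In that case there are no equilibrium models and I would treat this degenerate case separately, noting that the statement implicitly concerns consistent completions.

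$E$ is also complete: if $\psi\notin E$ then $\neg\psi\in E$. Since $E$ is consistent, there is a model $\model',x\models E$, and by passing to a maximal world we may assume $x$ is classical. Completeness then forces that classical world to be exactly $T$. So the total model $\model$ with $V(0)=V(1)=T$ satisfies $\Gamma\cup N$, and by the completeness argument $\Th{\model}=E$.

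Minimality of $\model$: suppose $\model'<\model$ satisfies $\Gamma$ at $0$. Then $V'(1)=T$ and $V'(0)\subsetneq T$. Choose $p\in T\setminus V'(0)$. Every $\neg\psi\in N$ holds at $0$ in $\model'$, because $\neg\psi$ only depends on world $1$, which agrees with $\model$. Hence $\model',0\models\Gamma\cup N$. But $p\in E$, so $\Gamma\cup N\Inf{\HT}p$, contradicting $p\notin V'(0)$.

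\textbf{Bijectivity and the last clause.} Distinct equilibrium models differ on $T$, and therefore on atoms in $\Th{\cdot}$. Conversely, $E\mapsto T$ recovers $\model$ with $\Th{\model}=E$. So the two maps are mutually inverse, which also yields the stated characterization $V(0)=V(1)=E\cap\PV$.

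\textbf{Main obstacle.} I expect the delicate step to be showing $E\subseteq\cn{\HT}{\Gamma\cup N}$ in the first direction. The key observation there is that $N$ pins the world $1$ of any countermodel to exactly $T$ through the formulas $\neg p$ and $\neg\neg p$. This turns any countermodel into a model below $\model$ in $\le$, so that minimality of $\model$ can be applied.
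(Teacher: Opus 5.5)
Your argument is correct. The paper gives no proof of this proposition; it cites it from Pearce. Your reasoning is, however, exactly the propositional ($i=0$) version of the paper's proofs of Proposition~\ref{prop:total} and Lemma~\ref{lemma:fixpoint}. There, the formulas $\neg p$ and $\neg\neg p$ in $N$ pin world $1$ of any countermodel to $T$, and minimality then pins world $0$. In the other direction, $\neg\psi$ depends only on world $1$, so a smaller model still satisfies $N$ and must therefore contain every $p\in T$.

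You go further than the paper in the step from an arbitrary completion $E$ to a model. Lemma~\ref{lemma:fixpoint} only characterises sets of the form $\Th{\model}$ for a given total $\model$. You show that every consistent completion is complete and equals $\Th{\model}$ for the total model on $E\cap\PV$. That is what a bijection with \emph{all} completions actually requires.

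Your caveat about the degenerate case is also justified, not merely convenient. If $\Gamma$ is \HT-inconsistent, then $\langP=\cn{\HT}{\Gamma}$ is a completion (with $N=\emptyset$), yet $\Gamma$ has no equilibrium models. The proposition as stated therefore holds only for consistent completions. This includes its last clause, since the total model with $V(0)=V(1)=\PV$ does not satisfy $\Gamma$. By your own argument, whenever $\Gamma$ is consistent every completion is consistent, so the restriction costs nothing in that case.
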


\subsection{\I-safe beliefs}\label{sec:safe-beliefs}

A slightly different fixpoint characterization of equilibrium logic can be given in terms of \I-\emph{safe beliefs}~\citep{osnaar05a},
which are typically defined in terms of entailment in intuitionistic and intermediate logics.
In this section, we provide a semantic reformulation of \I-safe beliefs,
offering a different perspective that is useful for our subsequent development of temporal safe beliefs.
\begin{definition}\label{def:safe-belief}
  A set $T$ of atoms is said to be an \I-\emph{safe belief set} of a theory $\Gamma$ if
  \begin{itemize}
  \item $\Gamma \cup \lbrace \neg \neg p \mid p \in T \rbrace \cup \lbrace \neg p \mid p \not \in T\rbrace$ is $\I$-consistent and
  \item $\Gamma \cup \lbrace \neg \neg p \mid p \in T \rbrace \cup \lbrace \neg p \mid p \not \in T\rbrace\Inf{\I} T$.
  \end{itemize}
\end{definition}

Analogous definitions can be reproduced for intermediate logics by replacing \I\ by any intermediate logic $\logic{X}$.
In fact, $\HT$-safe beliefs correspond to equilibrium models, as stated in the following lemma.
\begin{lemma}[\citealp{osnaar05a,pearce06a}]\label{lem:equivalence:prop}
  There is a one-to-one correspondence between both the $\HT$-safe beliefs sets of a theory $\Gamma$ and its equilibrium models. That is, for each \HT-safe belief set $T$,
  we can construct an equilibrium model $\model=\tuple{\FName{F}, V}$ where $V(0)=V(1)=T$. Conversely, given $\model$, we can extract the $\HT$-safe belief set $T=V(0)$.
\end{lemma}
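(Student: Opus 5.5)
The plan is to prove both directions directly from the semantics of \HT, working throughout with the theory
\[
\Gamma_T := \Gamma \cup \lbrace \neg\neg p \mid p \in T\rbrace \cup \lbrace \neg p \mid p \notin T\rbrace .
\]
The first step is to understand what $\model, 0 \Inf{\HT} \Gamma_T$ means in an \HT\ model $\model=\tuple{(\lbrace 0,1\rbrace,\peq),V}$. Evaluate negations at $0$ by checking both worlds $0$ and $1$. Then $\model,0\models\neg p$ iff $p\notin V(1)$, and $\model,0\models\neg\neg p$ iff $p\in V(1)$. Hence the atom part of $\Gamma_T$ holds at $0$ exactly when $V(1)=T$. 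By persistence, $V(0)\subseteq V(1)=T$.

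\textbf{From safe belief sets to equilibrium models.} Let $T$ be an \HT-safe belief set.
\begin{enumerate}
\item By consistency (and Lemma~\ref{lem:consistency}, or directly), some \HT\ model satisfies $\Gamma_T$ at some world. If that world is $1$, the model is effectively total, so we may assume we have a model at $0$ with $V(1)=T$.
\item Since $\Gamma_T \Inf{\HT} T$, we get $T\subseteq V(0)$. Therefore $V(0)=V(1)=T$, and the total model $\model_T$ with $V(0)=V(1)=T$ satisfies $\Gamma$ at $0$.
\item For minimality, suppose $\model'<\model_T$ with $\model',0\models\Gamma$. Then $V'(1)=T$, so $\model'$ also satisfies $\Gamma_T$ at $0$ by the first observation.
\item The consequence condition then gives $T\subseteq V'(0)$, so $V'=V$, contradicting $\model'<\model_T$.
\end{enumerate}

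\textbf{From equilibrium models to safe belief sets.} Let $\model$ be an equilibrium model and set $T=V(0)=V(1)$.
\begin{enumerate}
\item $\model,0\models\Gamma_T$, which gives consistency.
\item For the consequence condition, take any \HT\ model $\model''$ and world $x$ with $\model'',x\models\Gamma_T$.
\item If $x=1$, then $\neg p$ and $\neg\neg p$ at a classical world force $V''(1)=T$, so $T$ holds at $x$.
\item If $x=0$, then $V''(1)=T$ by the first observation, and $\model'',0\models\Gamma$. Since $\model''\le\model$, minimality gives $V''(0)=T$, so $T$ holds at $0$.
\end{enumerate}

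The two maps $T\mapsto\model_T$ and $\model\mapsto V(0)$ are mutually inverse, which gives the one-to-one correspondence.

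The main obstacle is the case analysis on the world $x$ in the local consequence relation. The structural reason it goes through is the evaluation of negated atoms at world $0$, which pins down $V(1)=T$. An alternative route is to combine Proposition~\ref{prop:pearce} with completions, but the direct semantic argument above is shorter, so I would use it.
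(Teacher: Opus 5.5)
Your proof is correct. The paper gives no proof of this lemma and only cites it, but your argument is the propositional instance of the paper's own proof of the temporal counterpart (the proposition equating \THT-temporal safe beliefs with temporal equilibrium models): the negated and doubly negated atoms pin down the ``there'' valuation, the entailment condition forces here $=$ there, and minimality is played off against any model below the candidate. Your version is slightly more careful than that temporal proof, since you also treat world $1$ in the local consequence relation, check consistency explicitly, and apply the entailment directly to the consistency witness to see that it already is $\model_T$.
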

\cite{osnaar05a} show that safe beliefs are independent of the chosen intermediate logic, as stated in lemmas~\ref{lem:il:safe-beliefs:auxiliary1} (below) and~\ref{prop:intermediate2}.
\begin{lemma}[\citealp{osnaar05a}]\label{lem:il:safe-beliefs:auxiliary1}
	Let $T$ be a set of atoms and let $\logic{X}$ and $\logic{Y}$ be two proper intermediate logics such that $\logic{X} \subseteq \logic{Y}$.
	For any propositional theory $\Gamma$, if $T$ is a $\logic{X}$-safe belief set of $\Gamma$, then $T$ is a $\logic{Y}$-safe belief set of $\Gamma$.
\end{lemma}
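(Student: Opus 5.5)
The plan is to check the two clauses of Definition~\ref{def:safe-belief} for $\logic{Y}$ separately, writing $\Gamma' := \Gamma \cup \lbrace \neg\neg p \mid p \in T\rbrace \cup \lbrace \neg p \mid p\notin T\rbrace$ for the extended theory, which is the same for both logics.

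The first clause is immediate. Since $T$ is an $\logic{X}$-safe belief set, $\Gamma'$ is $\logic{X}$-consistent. By Lemma~\ref{lem:consistency}, consistency does not depend on the intermediate logic, so $\Gamma'$ is also $\logic{Y}$-consistent.

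For the second clause we know that $\Gamma' \Inf{\logic{X}} T$, that is, $\Gamma' \Inf{\logic{X}} p$ for every $p \in T$. Proposition~\ref{prop:consequence:intermediate} applies because $\logic{X}\subseteq\logic{Y}$, and it gives $\Gamma' \Inf{\logic{Y}} p$ for each $p\in T$. Hence $\Gamma' \Inf{\logic{Y}} T$, since entailment of a theory is defined pointwise in Definition~\ref{def:sem-consequence}.

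The only step that needs care is the soundness of Proposition~\ref{prop:consequence:intermediate}, which I take as given. Semantically, every $\logic{Y}$-frame is an $\logic{X}$-frame when $\logic{X}\subseteq\logic{Y}$, because $\logic{Y}$-frames validate all of $\logic{X}$. So a local consequence that holds at every world of every $\logic{X}$-model holds in particular at every world of every $\logic{Y}$-model. I do not expect any real obstacle. The hypothesis that the logics are proper is not needed for this direction; it would only matter for the converse, where more work via the bisimulation lemmas above would be required.
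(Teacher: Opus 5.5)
Your proposal is correct and follows the same route as the paper's proof: the first clause transfers via Lemma~\ref{lem:consistency}, and the second via Proposition~\ref{prop:consequence:intermediate}, which you apply pointwise to each $p \in T$. Your remark that properness of $\logic{X}$ and $\logic{Y}$ is not used in this direction is also accurate.
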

In \citep{osnaar05a}, Lemma~\ref{prop:intermediate2} is proved by using arguments from proof theory.
We provide a model-theoretic proof based on Lemma~\ref{lem:contraction:il}, which is easier to extend to temporal equilibrium logic.
\begin{lemma}\label{prop:intermediate2}
  Let $T$ be a set of atoms and let $\logic{X}$ and $\logic{Y}$ be two proper intermediate logics satisfying $\logic{X} \subseteq \logic{Y}$. For any propositional theory $\Gamma$,
  if $T$ is a $\logic{Y}$-safe belief set of $\Gamma$ then $T$ is a $\logic{X}$-safe belief set of $\Gamma$.
\end{lemma}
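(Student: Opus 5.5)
The plan is to treat the two clauses of Definition~\ref{def:safe-belief} separately, and to prove the entailment clause by contraposition. Throughout, write $\Gamma' := \Gamma \cup \lbrace \neg\neg p \mid p \in T\rbrace \cup \lbrace \neg p \mid p \notin T\rbrace$.

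Consistency is immediate. $\Gamma'$ is $\logic{Y}$-consistent by hypothesis, so it is $\logic{X}$-consistent by Lemma~\ref{lem:consistency}.

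For the entailment clause, first note that $\logic{Y}$ is proper and \HT\ is the strongest proper intermediate logic, so $\logic{Y} \subseteq \HT$. Proposition~\ref{prop:consequence:intermediate} then gives $\Gamma' \Inf{\HT} T$. It therefore suffices to show that any $\logic{X}$-counterexample to $\Gamma' \Inf{\logic{X}} T$ yields an \HT-counterexample. So suppose $\model,w \models \Gamma'$ in some $\logic{X}$-model, and $q \notin V(w)$ for some $q \in T$.

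\emph{Normalising the countermodel.} Every atom is decided at $w$: each atom $p$ has either $\neg\neg p$ or $\neg p$ in $\Gamma'$. Hence $\model,w \models \lbrace \neg p \vee \neg\neg p \mid p \in \PV\rbrace$. Lemma~\ref{lem:bisim:topwidth1} then gives a bisimilar model in which the subframe generated by $w$ has a unique maximal world $u$. Lemma~\ref{lem:bisim:il} ensures nothing changes at $w$. Since $\model,u \models \neg p$ for $p\notin T$ and $\model,u\models\neg\neg p$ for $p \in T$, and $u$ is maximal, we get $V(u) = T$. Also $V(w) \subseteq T$, since $p \in V(w)$ with $p \notin T$ would contradict $\neg p$.

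\emph{Contracting to an \HT\ model.} Next I would apply the Contraction Lemma~\ref{lem:contraction:il} with this $T$. This produces an \HT\ model $\model'$ with $V'(0)=V(w)$ and $V'(1)=T$, bisimilar at $w\bisim 0$. Lemma~\ref{lem:bisim:il} then transfers $\model',0\models\Gamma'$ and $q\notin V'(0)$, contradicting $\Gamma'\Inf{\HT}T$.

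\emph{Main obstacle.} The hypothesis of the contraction lemma requires $V(v)=T$ for \emph{every} $v \in {\prec}(w)$. In general, intermediate worlds between $w$ and $u$ may contain only part of $T$. My first attempt would be to replace $w$ by a better witness. Among the worlds $v \succcurlyeq w$ with $\model,v\models\Gamma'$ and $q\notin V(v)$, choose one whose strict successors all have valuation $T$. Any such $v$ automatically satisfies $\Gamma'$, because it sits above $w$ and the formulas are persistent. Such a choice is possible, for instance, if one can first pass to a finite countermodel. In that case take a $\peq$-maximal world not containing $q$: its strict successors all contain $q$.

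If instead some successor still misses part of $T$, I would try an induction on formulas. The claim would be that, for worlds $v\succ w$, truth of $\varphi$ in $\model$ coincides with classical truth of $\varphi$ under $T$. Here one uses that $\neg\neg$ commutes with the connectives in \I. This is the step I expect to require the most care.
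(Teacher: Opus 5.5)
\textbf{There is a gap at the step you yourself flag as the crux.} Your overall architecture is the paper's:
\begin{itemize}
\item consistency via Lemma~\ref{lem:consistency};
\item lifting the $\logic{Y}$-entailment to $\Gamma'\Inf{\HT}T$ via Proposition~\ref{prop:consequence:intermediate};
\item normalising with Lemma~\ref{lem:bisim:topwidth1};
\item contracting with Lemma~\ref{lem:contraction:il};
\item transferring with Lemma~\ref{lem:bisim:il}.
\end{itemize}
You also correctly isolate the crux: the contraction lemma needs $V(x)=T$ for \emph{every} strict successor $x$ of the world you contract at. But your way around it does not work. You fix the missing atom $q$ and take a $\peq$-maximal world $v\succcurlyeq w$ with $q\notin V(v)$. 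This only tells you that the strict successors of $v$ contain $q$, not that their valuation is all of $T$.

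Concretely, take $T=\{q,r\}$ and a chain $w\peq v_1\peq u$ with $V(w)=\emptyset$, $V(v_1)=\{q\}$, $V(u)=\{q,r\}$. The only world missing $q$ is $w$, and its successor $v_1$ has $V(v_1)\neq T$. So the world your selection step asks for does not exist, even though a contradiction is available at $v_1$, which misses $r$. Your fallback is false as stated: at $v_1$ the atom $r\in T$ fails, although it is classically true under $T$. Commuting $\neg\neg$ through the connectives only tells you about $\neg\neg\varphi$. That information is already contained in $\Gamma'$ and never yields the atoms of $T$ themselves.

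\textbf{The missing idea is to let the failing atom vary.} Choose $v\succcurlyeq w$ that is $\peq$-maximal among the worlds at which \emph{some} atom of $T$ fails, i.e.\ $V(v)\subsetneq T$. Then:
\begin{itemize}
\item every strict successor of $v$ satisfies all of $T$, and by persistence of $\neg p$ for $p\notin T$ has valuation exactly $T$;
\item $v\neq u$, because $V(u)=T$, and $u$ is still the unique maximal world of ${\peq}(v)$;
\item $\model,v\models\Gamma'$ by persistence.
\end{itemize}
Contracting at $v$ now gives an \HT\ model satisfying $\Gamma'$ at $0$ but missing some atom of $T$, contradicting $\Gamma'\Inf{\HT}T$.

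This is exactly the paper's argument read contrapositively. The paper proves $\model',v'\models T$ for all $v'\succcurlyeq w'$ by induction on $\depth{((W',\peq'),v')}$. Its induction hypothesis is what supplies that all strict successors satisfy $T$, and hence, with the $\neg p$ constraints, have valuation exactly $T$.

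Finally, the existence of such a maximal $v$, like the paper's induction on depth, presupposes that the worlds above $w$ have finite depth. The paper takes this for granted too, so your appeal to a finite countermodel is not the weak point. You have not actually established it, though.
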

\begin{proof}
Let us assume that $T$ is a $\logic{Y}$-safe belief of $\Gamma$. It holds that
\begin{enumerate}[label=(\alph*)]
	\item $\Gamma \cup \lbrace \neg \neg p \mid p \in T \rbrace \cup \lbrace \neg p \mid p \not \in T\rbrace$ is $\logic{Y}$-consistent and
	\item $\Gamma \cup \lbrace \neg \neg p \mid p \in T \rbrace \cup \lbrace \neg p \mid p \not \in T\rbrace\Inf{\logic{Y}} T$.
\end{enumerate}
From
\begin{equation}
\Gamma \cup \lbrace \neg \neg p \mid p \in T \rbrace \cup \lbrace \neg p \mid p \not \in T\rbrace \label{eq:consistency}
\end{equation}
being $\logic{Y}$-consistent and Lemma~\ref{lem:consistency}, it follows that~\eqref{eq:consistency} is both $\logic{X}$-consistent and $\I$-consistent.
From the second item, the fact that $\logic{Y}\subseteq\HT$ and Proposition~\ref{prop:consequence:intermediate} it follows
\begin{equation}
\Gamma \cup \lbrace \neg \neg p \mid p \in T \rbrace \cup \lbrace \neg p \mid p \not \in T\rbrace\Inf{\HT} T. \label{ht:cons}
\end{equation}

Let $\model=\tuple{(W,\peq),V}$ be any intuitionistic model and $w \in W$ satisfying

\begin{enumerate}[label=(\alph*),start=3]
	\item\label{prop:intermediate2:item:a} $\model, w \Inf{\I} \Gamma$,
	\item\label{prop:intermediate2:item:b} $\model, w \Inf{\I}  \lbrace \neg \neg p \mid p \in T \rbrace $ and
	\item\label{prop:intermediate2:item:c} $\model, w \Inf{\I}  \lbrace \neg p \mid p \not \in T\rbrace $.
\end{enumerate}
Items~\ref{prop:intermediate2:item:b} and~\ref{prop:intermediate2:item:c} imply that $\model, w \Inf{\I} \lbrace \neg p \vee \neg \neg p\mid p \in \PV\rbrace$.
By Lemma~\ref{lem:bisim:topwidth1} there exists an intuitionistic $\model' = \tuple{(W',\peq'),V'}$, $w'\in W'$ and a bisimulation $\bisim\subseteq W\times W'$ such that $w\bisim w'$
and the subframe generated by $w'$ has an unique maximal world.
By Lemma~\ref{lem:bisim:il}, $\model',w' \Inf{\I} \Gamma$, $\model', w' \Inf{\I}  \lbrace \neg \neg p \mid p \in T \rbrace$ and $\model', w'\Inf{\I}   \lbrace \neg p \mid p \not \in T\rbrace$.

Let us denote by $u'\in W'$ the (unique) maximal world in the subframe generated by $w'$.
By the monotonicity property of intuitionistic logic, $\model',u'\Inf{\I}  \lbrace \neg \neg p \mid p \in T \rbrace$ and $\model', u'\Inf{\I}   \lbrace \neg p \mid p \not \in T\rbrace$.
Since $u'$ is maximal $V'(u') = T$.
In addition, we prove by induction on $\depth{((W',\peq'),v')}$, that for all $v' \in \peq'(w')$ (which includes the case $w'$ as well), $\model', v' \Inf{\I}T$.
\begin{enumerate}
	\item If $depth((W',v'))=1$ then $v'$ is maximal so $v'=u'$ (and we reason as in the base case).
	\item For the inductive step, let us assume that $\depth{((W',\peq'),v')} = n+1$ and the claim holds for every $x\in \prec'(v')$, that is, $\depth{((W',\peq'),x)} \le n$.
	\begin{enumerate}
		\item If $\prec'(v')=\emptyset$, $v'$ is maximal so $V'(v')=V'(u')=T$.
		\item If $\prec'(v')\not=\emptyset$ then, by induction hypothesis, $\model', x \models T$ for all $x\in\prec'(v')$.
		From $\model', w' \Inf{\I}  \lbrace \neg p \mid p \not \in T\rbrace $ and $w' \peq' v'$,
		$\model', x \not \models p$ for all $x\in\prec'(v')$ and all $p \in \PV\setminus T$.
		Therefore, $V'(x) = T$ for all for all $x\in\prec'(v')$.
		By Lemma~\ref{lem:contraction:il}, there exists a \HT{} model $\model'' = \tuple{(\lbrace 0,1\rbrace,\peq''),V''}$ and a bisimulation $\bisim' \subseteq W' \times \lbrace 0,1 \rbrace$ such that $v' \bisim' 0$.
		By Lemma~\ref{lem:bisim:il}, $\model'',0 \Inf{\HT}  \Gamma$, $\model'', 0 \Inf{\HT}   \lbrace \neg \neg p \mid p \in T \rbrace$ and $\model'', 0\Inf{\HT}   \lbrace \neg p \mid p \not \in T\rbrace$.
		Thank to~\eqref{ht:cons} it follows $\model'', 0 \Inf{\HT}  T$.
From $v' \bisim' 0$ and Lemma~\ref{lem:bisim:il}, $\model', v' \Inf{\I} T$.
	\end{enumerate}
\end{enumerate}
Therefore, $\model', w' \Inf{\I} T$.
Since $w \bisim w'$,  $\model, w \Inf{\I}T$.
Since $\model$ was chosen arbitrarily, it follows that
$\Gamma \cup \lbrace \neg \neg p \mid p \in T \rbrace \cup \lbrace \neg p \mid p \not \in T\rbrace\Inf{\I} T$.
By Proposition~\ref{prop:consequence:intermediate}, $\Gamma \cup \lbrace \neg \neg p \mid p \in T \rbrace \cup \lbrace \neg p \mid p \not \in T\rbrace\Inf{\logic{X}} T$.
Therefore, $T$ is a $\logic{X}$-safe belief of $\Gamma$.
\end{proof}
Note that the combination of lemmas~\ref{lem:il:safe-beliefs:auxiliary1} and~\ref{prop:intermediate2} allows us to replace \HT\ by any proper intermediate logic without altering the set of equilibrium models.
 \section{Temporal Intuitionistic and Intermediate Logics}\label{sec:itl}

Given a countable, possibly infinite set \PV\ of atoms, also called \emph{alphabet},
our temporal language $\langT$ consists of formulas generated by the following grammar:
\begin{displaymath}
  \varphi ::= \myatom\in\PV \mid \bot \mid \varphi\wedge\varphi \mid \varphi\vee\varphi \mid \varphi\to\varphi \mid
  \next\varphi \mid \varphi\until\varphi \mid \varphi\release\varphi
\end{displaymath}
This extends our basic language with temporal modal operators $\next$, $\until$, and $\release$.
The intended meaning of these operators is the following:
$\next \varphi$ means that $\varphi$ is true at the next time point.
$\varphi \until \psi$ means that $\varphi$ is true until $\psi$ is true.
For $\varphi \release \psi$ the meaning is not as direct as for the previous operators.
That is, $\varphi \release \psi$ means that $\psi$ is true until both $\varphi$ and $\psi$ become true simultaneously or
$\psi$ is true forever.
We also define several common derived operators like the Boolean connectives
\(
\top \eqdef \neg \bot
\),
\(
\neg \varphi \eqdef  \varphi \to \bot
\),
\(
\varphi \leftrightarrow \psi \eqdef (\varphi \to \psi) \wedge (\psi \to \varphi)
\),
and the unary temporal operators
$\alwaysF \varphi  \eqdef  \bot \release \varphi$ (always afterwards) and
$\eventuallyF \varphi\eqdef  \top \until \varphi$ (eventually afterwards).
A \emph{(temporal) theory} is a possibly infinite set of temporal formulas.

Formulas of $\langT$ are interpreted over intuitionistic temporal frames.
An \emph{intuitionistic temporal frame} is a tuple $\FName{D} = (W,\peq,\sfun)$,
where
$W$ is a non-empty set of (Kripke) worlds,
$\irel$ is a partial order,
and
$\sfun$ is a function from $W$ to $W$ satisfying the \emph{forward confluence} condition:
If $w \irel v $ then $ \msuccfct(w) \irel \msuccfct(v)$ for all $w, v \in W$.
Conversely, the \emph{backward confluence} condition stipulates that
if $\msuccfct(w) = v$ and $v\peq u$,
then there exists $t \in W$ such that $w \peq t$ and $\msuccfct(t)=u$ for all $w, v, u \in W$.
If $\FName{D}$ satisfies both confluence conditions,
we call $\FName{D}$ a \emph{persistent intuitionistic temporal frame}.
Figure~\ref{fig:confluence:forward} (resp.\ Figure~\ref{fig:confluence:backward}) shows
a graphical version of the forward (resp.\ backward) confluence condition.
\begin{figure}[h!]
	\begin{center}
		\begin{subfigure}{.5\columnwidth}\centering
			\begin{tikzpicture}[x=2cm,y=2cm]
				\node[] (bl) at (0,0) {$w$};
				\node (br) at (1,0) {$S(w)$};
				\node (tl) at (0,1) {$v$};
				\node (tr) at (1,1) {$S(v)$};
				\path[solid]
				(bl) edge[->] node [below] {$S$} (br)
				(tl) edge[->] node [above] {$S$} (tr)
				(bl) edge[->] node [above,sloped] {$\peq$} (tl)
				;
				\path[dashed]
				(br) edge[->] node [below,sloped] {$\peq$} (tr)
				;
			\end{tikzpicture}
			\caption{Forward confluence}
			\label{fig:confluence:forward}
		\end{subfigure}\begin{subfigure}{.5\columnwidth}\centering
			\begin{tikzpicture}[x=2cm,y=2cm]
				\node[] (bl) at (0,0) {$w$};
				\node (br) at (1,0) {$v$};
				\node (tl) at (0,1) {$t$};
				\node (tr) at (1,1) {$u$};

				\path[solid]
				(bl) edge[->] node [below] {$S$} (br)
				(br) edge[->] node [below,sloped] {$\peq$} (tr)
				;
				\path[dashed]
				(tl) edge[->] node [above] {$S$} (tr)
				(bl) edge[->] node [above,sloped] {$\peq$} (tl)
				;
			\end{tikzpicture}
			\caption{Backward confluence}
			\label{fig:confluence:backward}
		\end{subfigure}

	\end{center}
	\caption{Diagrams associated to forward and backward confluence.
          The above diagrams can always be completed if $S$ is forward or backward confluent
          (represented by means of dashed arrows).}\label{FigCO}
\end{figure}
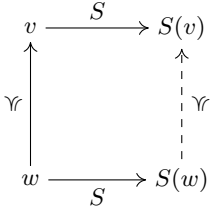
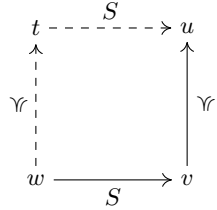

An \emph{intuitionistic temporal model}, or simply \emph{model}, is a tuple $\model=\tuple{\FName{D}, V}$ consisting of
an intuitionistic temporal frame $\FName{D}=(W,\peq,S)$ equipped with
a valuation function $\deffun V   W {\parts\PV} $ that is \emph{monotone} in the sense that
if $w \irel v $ then $ V(w) \subseteq V(v)$ for all $ w, v \in W$.
In the standard way, we define
$\msuccfct[^0](w) := w$ and
$\msuccfct[^{k+1}](w) := \msuccfct\left(\msuccfct[^{k}](w)\right)$ for $k \geq 0$.
Regarding the satisfaction relation,
the propositional connectives are satisfied as in \I\ (see Section~\ref{sec:il}).
The satisfaction of the temporal connectives is presented below.
\begin{enumerate}[start=6]
\item $\model, w \models \next \varphi $   iff $ \model, \msuccfct(w) \models \varphi $
\item  $\model, w \models \varphi \until \psi $ iff there exists $k \ge 0 $ such that $ \model, \msuccfct[^k](w) \models \psi$ and $\model, \msuccfct[^i](w) \models \varphi$ for all $\rangeco{i}{0}{k}$
\item 	$\model, w \models \varphi \release \psi $ iff for all $k \ge 0$, either $\model, \msuccfct[^k](w) \models \psi$
  or  $\model, \msuccfct[^i](w) \models \varphi$  for some $\rangeco{i}{0}{k}$.
\end{enumerate}

Figure~\ref{fig:iltl-frame} illustrates the satisfaction relation `$\models$'~\citep{BalbianiBDF20}.
Note that $\model, x\models \next p$ but $\model, x \not \models p$, while $\model, y \models p$ but
$\model, y \not \models \next p$.
From this, it follows that $\model, w \not \models (\next p\rightarrow p)\vee ( p \rightarrow \next p)$.
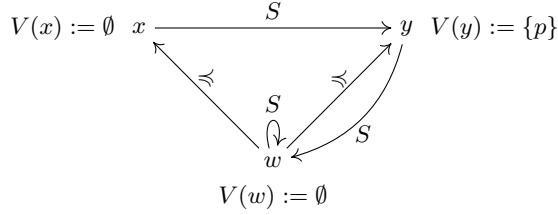
\begin{figure}[h!]\centering
	\begin{tikzpicture}[node distance=2.5cm]
		\node (a) [label=below:{$V(w) := \emptyset$}] {$w$};
		\node (b) [label=left:{$V(x) := \emptyset$}, above left of=a] {$x$};
		\node (c) [label=right:{$V(y) := \lbrace p \rbrace$}, above right of=a] {$y$};
		\path[->]
		(a) edge[] node[above] {$\peq$} (b)
		(a) edge[] node[above] {$\peq$} (c)
		;
		\path[->]

		(c) edge[bend left] node[below] {$S$} (a)
		(b) edge[] node[above] {$S$} (c)
		(a) edge[loop above] node[] {$S$} (a)
		;
	\end{tikzpicture}
	\caption{Example of an $\iltl$ model $\model = ((W,{\peq},S),V)$, where reflexivity and transitivity for $\peq$ are not represented.}
	\label{fig:iltl-frame}
\end{figure}
We refer to the intuitionistic temporal logic interpreted over the class of intuitionistic temporal frames as
\emph{intuitionistic temporal logic} (\iltl).
Formally, defined as
\begin{displaymath}
  \iltl := \lbrace \varphi \in \langT \mid \FName{D}\models \varphi \rbrace,
\end{displaymath}
where $\FName{D}$ is an intuitionistic temporal frame.
If in addition $\FName{D}$ is persistent, we denote by
\begin{displaymath}
  \itlb := \lbrace \varphi \in \langT \mid \FName{D}\models \varphi \rbrace
\end{displaymath}
the \emph{persistent intuitionistic temporal logic} (\itlb), i.e.,
the intuitionistic temporal logic interpreted over the class of the intuitionistic persistent frames.

The following proposition shows that $\iltl\not=\itlb$.
\begin{proposition}[\citealp{BalbianiBDF20}]\label{prop:valid:iltp}
The formulas $\left(\next p \to \next q\right) \to\next\left(p \to q\right)$ and $\left(\eventuallyF p \to \alwaysF q\right) \to\alwaysF\left(p \to q\right)$ are valid over the class of persistent intuitionistic temporal frames.
\end{proposition}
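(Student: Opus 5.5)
The plan is to prove validity directly from the semantics. Fix an arbitrary persistent intuitionistic temporal frame $\FName{D}=(W,\peq,\sfun)$, a monotone valuation $V$, and a world $w$. The key fact used throughout is backward confluence: every $\peq$-successor of $\sfun(v)$ is of the form $\sfun(t)$ for some $t \succcurlyeq v$. Iterating it by induction on $k$ gives the generalized form: if $\sfun^k(v)\peq u$ then there is $t\succcurlyeq v$ with $\sfun^k(t)=u$. Forward confluence iterates in the same way, so $v\peq t$ implies $\sfun^k(v)\peq \sfun^k(t)$ for all $k$.

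\emph{First formula.} Take $v\succcurlyeq w$ with $\model,v\models \next p\to\next q$. We must show $\model,\sfun(v)\models p\to q$, i.e.\ for every $u\succcurlyeq \sfun(v)$ with $\model,u\models p$ we have $\model,u\models q$.
\begin{enumerate}
\item By backward confluence there is $t\succcurlyeq v$ with $\sfun(t)=u$.
\item Then $\model,t\models \next p$.
\item Since $v\peq t$ and $v$ forces the implication, we get $\model,t\models\next q$, that is, $\model,u\models q$.
\end{enumerate}

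\emph{Second formula.} Take $v\succcurlyeq w$ with $\model,v\models \eventuallyF p\to\alwaysF q$. We must show $\model,\sfun^k(v)\models p\to q$ for every $k$, using the standard unfolding of $\alwaysF$ as $\bot\release\cdot$ and of $\eventuallyF$ as $\top\until\cdot$. So let $u\succcurlyeq\sfun^k(v)$ with $\model,u\models p$.
\begin{enumerate}
\item By iterated backward confluence there is $t\succcurlyeq v$ with $\sfun^k(t)=u$.
\item Then $\model,t\models\eventuallyF p$, witnessed at step $k$.
\item Since $v\peq t$, we get $\model,t\models\alwaysF q$, so in particular $\model,\sfun^k(t)\models q$, that is, $\model,u\models q$.
\end{enumerate}
Monotonicity of truth (persistence along $\peq$ for all formulas, proved by a routine induction using forward confluence for the temporal cases) is not even needed here, since the implication clause quantifies over all $\peq$-successors directly.

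The only real point is the iterated backward confluence lemma. Its inductive step is as follows. Suppose $\sfun^{k+1}(v)=\sfun(\sfun^k(v))\peq u$. Backward confluence applied at $\sfun^k(v)$ gives $t'\succcurlyeq \sfun^k(v)$ with $\sfun(t')=u$. The induction hypothesis applied to $t'$ then gives $t\succcurlyeq v$ with $\sfun^k(t)=t'$, hence $\sfun^{k+1}(t)=u$. This is where the persistence (backward confluence) assumption is essential, which matches the point that these formulas separate $\itlb$ from $\iltl$.
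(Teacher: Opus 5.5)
Your proof is correct and complete. The statement is proved by unfolding the outer implication at an arbitrary $v \succcurlyeq w$, pulling a $\peq$-successor $u$ of $\sfun^k(v)$ back to some $t \succcurlyeq v$ with $\sfun^k(t)=u$ by (iterated) backward confluence, and then applying the hypothesis at $v$ to $t$. The paper gives no proof of its own and imports the result from \citep{BalbianiBDF20}, but this is the natural argument, and the paper's appendix proofs of Propositions~\ref{prop:maximal:next} and~\ref{prop:depth:next} use backward confluence in the same way.
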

The formulas presented in the proposition above are valid in $\itlb$ but not in $\iltl$.	This leads to the following result.
\begin{corollary}[\citealp{BalbianiBDF20}] $\iltl \not= \itlb$. \end{corollary}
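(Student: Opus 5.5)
The corollary $\iltl \neq \itlb$ follows once we exhibit a formula that belongs to one logic but not the other. I will use the first formula of Proposition~\ref{prop:valid:iltp}, $\left(\next p \to \next q\right) \to \next\left(p \to q\right)$.

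First, the inclusion $\iltl \subseteq \itlb$ is immediate. Every persistent intuitionistic temporal frame is in particular an intuitionistic temporal frame, so a formula valid on all intuitionistic temporal frames is valid on all persistent ones. By Proposition~\ref{prop:valid:iltp}, the chosen formula lies in $\itlb$.

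It remains to show that the formula is not in $\iltl$. For this I will build a non-persistent intuitionistic temporal model $\model = \tuple{(W,\peq,\sfun),V}$ and a world $w$ with $\model, w \not\models \left(\next p \to \next q\right) \to \next\left(p \to q\right)$. The failure has to come from backward confluence, so that $\sfun(w)$ has a $\peq$-successor not reached by $\sfun$ from any $\peq$-successor of $w$. The construction is:
\begin{itemize}
\item take $W = \lbrace w, a, b\rbrace$ with $a \peq b$ (plus reflexivity) and $w$ incomparable to $a$ and $b$;
\item set $\sfun(w) = a$ and $\sfun(a) = \sfun(b) = b$;
\item set $V(w) = V(a) = \emptyset$ and $V(b) = \lbrace p \rbrace$.
\end{itemize}

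Next I check that this is a legitimate model. Forward confluence holds: $a \peq b$ gives $\sfun(a) = b \peq b = \sfun(b)$, and the reflexive cases are trivial. The valuation is monotone, since $V(a) = \emptyset \subseteq V(b)$.

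Then I evaluate the formula at $w$. The only $\peq$-successor of $w$ is $w$ itself, and $\sfun(w) = a$ with $V(a) = \emptyset$, so $\next p$ fails at $w$. Hence $\next p \to \next q$ holds at $w$. On the other hand, $\next(p \to q)$ at $w$ requires $p \to q$ at $a$. Since $a \peq b$, $p \in V(b)$ and $q \notin V(b)$, the implication $p \to q$ fails at $a$. The antecedent therefore holds at $w$ while the consequent fails, so the whole formula fails at $w$.

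So the formula is not valid on this frame and does not belong to $\iltl$, which gives $\iltl \neq \itlb$. The only step needing care is the choice of a frame satisfying forward but not backward confluence, and checking forward confluence in that frame.
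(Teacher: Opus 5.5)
Your proof is correct and follows the same route as the paper. The paper derives the corollary from the first formula of Proposition~\ref{prop:valid:iltp} being valid on persistent frames but refutable on a frame that is only forward confluent, and it defers the countermodel to Balbiani et al.; your countermodel is essentially the one drawn in Figure~\ref{FigIMLA}, and your checks of forward confluence, monotonicity of $V$, and the failure of the formula at $w$ are all right.
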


We remark that intuitionistic temporal frames impose minimal conditions on $\msuccfct$ and $\irel$ in order to
preserve the monotonicity of truth of formulas, in the sense that if $\model, w \Inf{\iltl} \varphi $ and $w\irel v$ then $\model, v \Inf{\iltl} \varphi$.
In the propositional case, the monotonicity property is guaranteed by the use of a monotone valuation.
In the temporal case, we additionally require the forward confluence property, which relates $\peq$ and $S$. Forward confluence ensures the satisfaction of temporal formulas is also monotone with respect to $\peq$.
The backward confluence property, while not required for monotonicity, allows us to show that maximal points are preserved under the temporal successor relation: if $w$ is a $\peq$-maximal point, then $S(w)$ is also maximal (see Proposition~\ref{prop:maximal:next}).
\begin{proposition}[monotonicity; \citealp{BalbianiBDF20}]\label{PropIntCond}
  Let $\model=\tuple{( W,\irel,\msuccfct),V}$ be an intuitionistic temporal model.
  For any $w,v \in W$,
  if $w \peq v$ then for any temporal formula $\varphi$, $\model, w \models \varphi$ implies $\model, v \models \varphi$.
\end{proposition}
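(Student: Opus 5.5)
The plan is to prove the statement by structural induction on $\varphi$, for all worlds simultaneously. The claim to establish is: for every $\varphi\in\langT$ and all $w,v\in W$, if $w\peq v$ and $\model,w\models\varphi$, then $\model,v\models\varphi$. First I would record an auxiliary fact: forward confluence iterates. That is, if $w\peq v$ then $\msuccfct[^k](w)\peq\msuccfct[^k](v)$ for every $k\ge 0$. This follows by a trivial induction on $k$. The case $k=0$ is the hypothesis, and the step from $k$ to $k+1$ is one application of forward confluence to the pair $\msuccfct[^k](w)\peq\msuccfct[^k](v)$. This is the only place where the frame condition linking $\peq$ and $\msuccfct$ is needed.

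The base and propositional cases are as in \I. For atoms, monotonicity of $V$ gives $V(w)\subseteq V(v)$. The case of $\bot$ is vacuous. Conjunction and disjunction follow immediately from the induction hypothesis. For implication, no induction hypothesis is needed. Suppose $\model,w\models\varphi\to\psi$ and $w\peq v$, and take any $v'\succcurlyeq v$. Transitivity of $\peq$ gives $v'\succcurlyeq w$. So if $\model,v'\models\varphi$, then $\model,v'\models\psi$, which yields $\model,v\models\varphi\to\psi$.

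For the temporal cases I would apply the iterated confluence fact and then the induction hypothesis pointwise along the two traces. For $\next\varphi$: from $\model,\msuccfct(w)\models\varphi$ and $\msuccfct(w)\peq\msuccfct(v)$, the induction hypothesis gives $\model,\msuccfct(v)\models\varphi$. For $\varphi\until\psi$: take the witness $k$ for $w$, so that $\model,\msuccfct[^k](w)\models\psi$ and $\model,\msuccfct[^i](w)\models\varphi$ for all $\rangeco{i}{0}{k}$. Since $\msuccfct[^j](w)\peq\msuccfct[^j](v)$ for each $j$, the induction hypothesis transfers each of these facts, so the same $k$ witnesses $\model,v\models\varphi\until\psi$. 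For $\varphi\release\psi$: fix an arbitrary $k\ge0$. For $w$, either $\psi$ holds at $\msuccfct[^k](w)$ or $\varphi$ holds at some $\msuccfct[^i](w)$ with $\rangeco{i}{0}{k}$. In either case the corresponding fact transfers to the trace of $v$ by the induction hypothesis, which gives the defining clause for $v$.

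I expect no genuine obstacle. The only subtle point is the temporal cases: they require comparing the entire successor traces of $w$ and $v$ rather than the single points $w$ and $v$. Hence the induction hypothesis must be stated for all pairs of related worlds, not just the fixed pair $w,v$, and the iterated form of forward confluence must be available. Once that is in place, every case is a one-line transfer. Backward confluence is not used, consistent with the remark preceding the statement.
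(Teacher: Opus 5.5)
Your proof is correct and is the standard argument. The paper gives no proof of its own, deferring to the cited source; its accompanying remark is exactly the structure of your structural induction over all pairs of related worlds. That remark says the monotone valuation handles the propositional connectives, while forward confluence, iterated along the successor traces, handles the next, until and release operators, and backward confluence is not needed.
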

Contrary to the non-temporal case (see Section~\ref{sec:il}),
consistency in $\iltl$ cannot be reduced to consistency in plain \LTL.
We provide here the counterexample proposed by~\cite{BalbianiBDF20}:
consider $\Gamma:=\lbrace \neg\next p, \neg\next\neg p \rbrace$.
In \LTL, this theory is equivalent to $\lbrace \next \neg p ,\next p\rbrace$ and it is inconsistent.
However, the \iltl\ model $\model$ shown in Figure~\ref{FigIMLA} satisfies $\Gamma$ at the world $w$
(in symbols, $\model, w \models \Gamma$).
Note that, in this case, $S$ is forward, but not backward, confluent.
Hence, the decidability of the satisfiability problem in \iltl\ is not a corollary of the \LTL\ case.
\begin{figure}[h!]
	\begin{center}
		\begin{tikzpicture}[node distance=2cm]
			\node (l) [label=below:{$V(w) := \emptyset$}]  {$w$};
			\node (r) [label=right:{$V(v) := \emptyset$},right of = l] {$v$};
			\node (u) [label=right:{$V(u) := \lbrace p \rbrace$},above of=r] {$u$};

			\draw[->] (l) edge  node[above]{$S$}(r);
			\draw[->,loop below] (r) edge node[below]{$S$} (r);
			\draw[->,loop left] (u) edge node[left]{$S$} (u);
			\draw[->] (r) edge  node[right] {$\peq$}(u);
			\end{tikzpicture}

	\end{center}
	\caption{Example of an \iltl\ model satisfying $\Gamma :=\lbrace  \neg\next p, \neg\next\neg p\rbrace$. Reflexivity and transitivity of $\peq$ are omitted for the sake of clarity.}
	\label{FigIMLA}
\end{figure}
When adding the backward confluence property, $\Gamma$ becomes inconsistent.
However, adding such condition (i.e.\ replacing \iltl\ by \itlb) does not allow us to reduce \itlb-consistency to \LTL-consistency either.
To show this, we consider the theory $\Gamma := \lbrace \alwaysF \neg \neg p,  \neg \alwaysF p \rbrace$.
In $\LTL$, $\Gamma$ is equivalent to $\lbrace \alwaysF p, \neg \alwaysF p\rbrace$ and it is clearly inconsistent.
Now, consider the model $\model = \tuple{(\mathbb{N}\times \mathbb{N},\peq,S),V}$ shown in Figure~\ref{fig:itlb:tht:counter}.
In this case, the valuation $V$ is defined as $V((i,j)) := \lbrace p \rbrace$ if $ i < j$ and $\emptyset$ otherwise.
In fact, $\model$ is a $\itlb$ model since it possesses both forward and backward confluence properties.
Moreover,
\begin{enumerate}[label=\arabic*)]
\item For every $i \ge 0$ there exist $j > 0$ such that $\model, (i,j) \models p$,
  so $\model, (i,0)\models \neg \neg p$, for every $i \ge 0$.
  Consequently, $\model, (0,0) \models\alwaysF \neg \neg p $.
\item For every $j \ge 0$ there exist $i \ge 0$ such that $\model, (i,j) \not \models p$.
  Therefore, $\model, (0,j) \not \models \alwaysF p$, for every $j \ge 0$.
  Consequently, $\model, (0,0) \models \neg \alwaysF p$.
\end{enumerate}
As a consequence, $\M, (0,0) \models \Gamma$.
We remark that the depth of $\peq$ is not finite in $\model$.
\begin{figure}[h!]\centering
	\begin{tikzpicture}[node distance=1.5cm]
		\node (v00) []  {$(0,0)$};
		\node (v10) [right of = v00]  {$(1,0)$};
		\node (v20) [right of = v10]  {$(2,0)$};
		\node (v30) [right of = v20]  {$(3,0)$};
		\node (v40) [right of = v30]  {$(4,0)$};
		\node (v50) [right of = v40]  {$(5,0)$};
		\node (v60) [right of = v50]  {$(6,0)$};
		\node (v70) [right of = v60]  {$\cdots$};

		\node (v01) [below of = v00]  {$\mathbf{(0,1)}$};
		\node (v11) [below of = v10]  {$(1,1)$};
		\node (v21) [below of = v20]  {$(2,1)$};
		\node (v31) [below of = v30]  {$(3,1)$};
		\node (v41) [below of = v40]  {$(4,1)$};
		\node (v51) [below of = v50]  {$(5,1)$};
		\node (v61) [below of = v60]  {$(6,1)$};
		\node (v71) [below of = v70]  {$\cdots$};

		\node (v02) [below of = v01]  {$\mathbf{(0,2)}$};
		\node (v12) [below of = v11]  {$\mathbf{(1,2)}$};
		\node (v22) [below of = v21]  {$(2,2)$};
		\node (v32) [below of = v31]  {$(3,2)$};
		\node (v42) [below of = v41]  {$(4,2)$};
		\node (v52) [below of = v51]  {$(5,2)$};
		\node (v62) [below of = v61]  {$(6,2)$};
		\node (v72) [below of = v71]  {$\cdots$};

		\node (v03) [below of = v02]  {$\mathbf{(0,3)}$};
		\node (v13) [below of = v12]  {$\mathbf{(1,3)}$};
		\node (v23) [below of = v22]  {$\mathbf{(2,3)}$};
		\node (v33) [below of = v32]  {$(3,3)$};
		\node (v43) [below of = v42]  {$(4,3)$};
		\node (v53) [below of = v52]  {$(5,3)$};
		\node (v63) [below of = v62]  {$(6,3)$};
		\node (v73) [below of = v72]  {$\cdots$};

		\node (v04) [below of = v03]  {$\vdots$};
		\node (v14) [below of = v13]  {$\vdots$};
		\node (v24) [below of = v23]  {$\vdots$};
		\node (v34) [below of = v33]  {$\vdots$};
		\node (v44) [below of = v43]  {$\vdots$};
		\node (v54) [below of = v53]  {$\vdots$};
		\node (v64) [below of = v63]  {$\vdots$};

		\draw[->]
		(v00) edge  node[above]{$S$}(v10)
		(v10) edge  node[above]{$S$}(v20)
		(v20) edge  node[above]{$S$}(v30)
		(v30) edge  node[above]{$S$}(v40)
		(v40) edge  node[above]{$S$}(v50)
		(v50) edge  node[above]{$S$}(v60)
		(v60) edge  node[above]{$S$}(v70)

		(v01) edge  node[above]{$S$}(v11)
		(v11) edge  node[above]{$S$}(v21)
		(v21) edge  node[above]{$S$}(v31)
		(v31) edge  node[above]{$S$}(v41)
		(v41) edge  node[above]{$S$}(v51)
		(v51) edge  node[above]{$S$}(v61)
		(v61) edge  node[above]{$S$}(v71)

		(v02) edge  node[above]{$S$}(v12)
		(v12) edge  node[above]{$S$}(v22)
		(v22) edge  node[above]{$S$}(v32)
		(v32) edge  node[above]{$S$}(v42)
		(v42) edge  node[above]{$S$}(v52)
		(v52) edge  node[above]{$S$}(v62)
		(v62) edge  node[above]{$S$}(v72)

		(v03) edge  node[above]{$S$}(v13)
		(v13) edge  node[above]{$S$}(v23)
		(v23) edge  node[above]{$S$}(v33)
		(v33) edge  node[above]{$S$}(v43)
		(v43) edge  node[above]{$S$}(v53)
		(v53) edge  node[above]{$S$}(v63)
		(v63) edge  node[above]{$S$}(v73)
		;

		\draw[->]
		(v00) edge  node[left]{$\peq$}(v01)
		(v01) edge  node[left]{$\peq$}(v02)
		(v02) edge  node[left]{$\peq$}(v03)
		(v03) edge  node[left]{$\peq$}(v04)

		(v10) edge  node[left]{$\peq$}(v11)
		(v11) edge  node[left]{$\peq$}(v12)
		(v12) edge  node[left]{$\peq$}(v13)
		(v13) edge  node[left]{$\peq$}(v14)

		(v20) edge  node[left]{$\peq$}(v21)
		(v21) edge  node[left]{$\peq$}(v22)
		(v22) edge  node[left]{$\peq$}(v23)
		(v23) edge  node[left]{$\peq$}(v24)

		(v30) edge  node[left]{$\peq$}(v31)
		(v31) edge  node[left]{$\peq$}(v32)
		(v32) edge  node[left]{$\peq$}(v33)
		(v33) edge  node[left]{$\peq$}(v34)

		(v40) edge  node[left]{$\peq$}(v41)
		(v41) edge  node[left]{$\peq$}(v42)
		(v42) edge  node[left]{$\peq$}(v43)
		(v43) edge  node[left]{$\peq$}(v44)

		(v50) edge  node[left]{$\peq$}(v51)
		(v51) edge  node[left]{$\peq$}(v52)
		(v52) edge  node[left]{$\peq$}(v53)
		(v53) edge  node[left]{$\peq$}(v54)

		(v60) edge  node[left]{$\peq$}(v61)
		(v61) edge  node[left]{$\peq$}(v62)
		(v62) edge  node[left]{$\peq$}(v63)
		(v63) edge  node[left]{$\peq$}(v64)
		;

	\end{tikzpicture}
	\caption{A \itlb\ model $\model$ satisfying $\Gamma:=\lbrace \lbrace \alwaysF \neg \neg p, \neg \alwaysF p \rbrace \rbrace$ at $(0,0)$.
          The proposition $p$ is true in the worlds displayed boldface while false in those that are not.
          Reflexivity and transitivity of $\peq$ are not represented for the sake of readability.}
	\label{fig:itlb:tht:counter}
\end{figure}

Such a counterexample leads to the following result.
\begin{proposition}
$\itlb$-consistency cannot be reduced to $\LTL$-consistency.
\end{proposition}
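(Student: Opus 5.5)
The statement says that the propositional argument fails: there is a temporal theory that is \LTL-inconsistent but \itlb-consistent. My plan is to use the theory $\Gamma := \lbrace \alwaysF \neg \neg p, \neg \alwaysF p \rbrace$ and the model $\model$ of Figure~\ref{fig:itlb:tht:counter} as the witness. The proof has three parts: show $\Gamma$ is \LTL-inconsistent, show $\model$ is a persistent intuitionistic temporal model, and show $\model, (0,0) \models \Gamma$.

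\textbf{\LTL-inconsistency.} An \LTL\ model is the one-world-per-instant special case in which $\peq$ is the identity. There, $\neg\neg\varphi$ is equivalent to $\varphi$, so $\alwaysF\neg\neg p$ forces $\alwaysF p$. This contradicts $\neg\alwaysF p$, so no \LTL\ model satisfies $\Gamma$.

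\textbf{$\model$ is an \itlb\ model.} I fix the frame explicitly: $W = \mathbb{N}\times\mathbb{N}$, with $(i,j) \peq (i',j')$ iff $i=i'$ and $j\le j'$, and $S(i,j) = (i+1,j)$. Clearly $\peq$ is a partial order.
\begin{itemize}
\item Forward confluence: if $(i,j)\peq(i,j')$ then $(i+1,j)\peq(i+1,j')$.
\item Backward confluence: if $S(i,j)=(i+1,j)\peq(i+1,j')$, take $t=(i,j')$. Then $(i,j)\peq t$ and $S(t)=(i+1,j')$.
\item Monotonicity of $V$: $p$ holds at $(i,j)$ iff $i<j$, and this condition is preserved when $j$ increases.
\end{itemize}

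\textbf{$\model,(0,0)\models\Gamma$.} The key observation is that $(i,j) \models \neg\neg p$ iff every $\peq$-successor $(i,j')$ has a further successor satisfying $p$. Since $(i,j'') \models p$ whenever $j''>\max(i,j')$, this holds at every world. Hence $\neg\neg p$ is true everywhere, and in particular $(0,0)\models\alwaysF\neg\neg p$.

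For $\neg\alwaysF p$, take any $(0,j)$ with $(0,0) \peq (0,j)$. Its $S$-path is $(k,j)$ for $k\ge 0$, and $p$ fails at $(j,j)$. So $(0,j)\not\models\alwaysF p$. As this holds for every $\peq$-successor of $(0,0)$, we get $(0,0)\models\neg\alwaysF p$.

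Therefore $\Gamma$ is \itlb-consistent but \LTL-inconsistent. Consistency transfers between the two logics in only one direction: every \LTL\ model is an \itlb\ model, so \LTL-consistency implies \itlb-consistency, but not conversely. This is exactly the failure of Lemma~\ref{lem:consistency} in the temporal setting.

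The only real care point is unpacking the semantics of $\neg$ and $\alwaysF=\bot\release\varphi$ on $\peq$- and $S$-successors in the last step. It is also worth remarking why the classical argument breaks: it would pick a maximal world, and $\model$ has infinite depth, so no maximal world exists. For this reason no finite-depth variant of the construction would serve as the counterexample.
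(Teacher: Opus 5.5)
Your proposal is correct and is essentially the paper's argument: the same witness theory $\lbrace \alwaysF \neg\neg p, \neg\alwaysF p\rbrace$ and the same $\mathbb{N}\times\mathbb{N}$ model with $p$ true at $(i,j)$ iff $i<j$, checked in the same steps. You are in fact a bit more careful than the paper: you verify forward and backward confluence and monotonicity explicitly, and you justify $\neg\neg p$ at every $\peq$-successor, where the paper only exhibits one $p$-world above $(i,0)$.
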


\subsection{Intermediate Temporal Logics}

As we have shown in the previous section,
intermediate temporal logics defined extending $\itlb$ with new axioms may not preserve consistency.
There is one extra condition that we need to impose on the intuitionistic temporal frames: \emph{the intuitionistic
  depth} ($\peq$-depth, for short) \emph{must be finite}.
Finite depth can be achieved by forcing \itlb\ to validate the schema $\bd{n}$, for some $n\ge 1$, as shown in
Lemma~\ref{depth:axiom}.
We define the family $\itlbd{n}$ of intuitionistic temporal logics as
\begin{equation*}
\itlbd{n} := \lbrace  \varphi\in \langT \mid  \FName{D} \models \varphi \rbrace,
\end{equation*}
where $\FName{D}=(W,\peq,S)$ is a $\itlb$ frame with $\depth{(W,\peq)}\le n$.
\begin{observation}\label{observation:ltl}
  $\itlbd{1}$ corresponds to \LTL, that is, $\itlbd{1}$ frames $\FName{D}=(W,\peq,S)$ such that $\depth{(W,\peq)}= 1$.\footnote{Note that the axiom schema $\bd{1}$ corresponds to the excluded middle axiom, which is added to \I\ to get classical logic and to \iltl\ to obtain \LTL.}
\end{observation}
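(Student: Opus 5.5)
The plan is to show that, once the $\peq$-depth is at most $1$, an $\itlbd{1}$ model is just a deterministic transition system with a classical valuation. Each world then determines an \LTL\ trace, and the two satisfaction relations agree along that trace. Validity in $\itlbd{1}$ and validity in \LTL\ will then coincide in both directions.

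First, by Theorem~\ref{depth:axiom}, a frame $(W,\peq)$ has depth $\le 1$ iff no two distinct worlds are $\peq$-related. Since $\peq$ is a partial order, this forces ${\peq} = \lbrace (w,w)\mid w\in W\rbrace$. So an $\itlbd{1}$ frame is simply $(W,=,\sfun)$ for an arbitrary function $\sfun$. Forward and backward confluence hold trivially, because every $\peq$-step is a reflexive step. Monotonicity of $V$ is likewise automatic. Conversely, every triple $(W,=,\sfun)$ is a persistent intuitionistic temporal frame of depth $1$.

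Second, the clause for implication at $w$ quantifies only over $v\succcurlyeq w$, i.e.\ over $v=w$. It therefore becomes the classical clause: $\model,w\models\varphi\to\psi$ iff $\model,w\not\models\varphi$ or $\model,w\models\psi$. In particular negation is classical, and $\bd{1}=p\vee\neg p$ is valid, as expected.

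Third, for each $w\in W$ define the trace $\sigma_w := V(w),V(\sfun(w)),V(\sfun^2(w)),\ldots$. I will prove by structural induction on $\varphi\in\langT$ the following claim: for all $w$ and all $k\ge 0$,
\[
\model,\sfun^k(w)\models\varphi \quad\text{iff}\quad \sigma_w,k\models_{\LTL}\varphi .
\]
The atomic and $\bot$ cases are immediate. The Boolean cases follow from the second step. The cases for $\next$, $\until$ and $\release$ use the identity $\sfun^i(\sfun^k(w))=\sfun^{k+i}(w)$, which matches the semantic clauses for these operators index by index with the \LTL\ clauses at positions $k+i$. This induction is the only real work. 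The point to handle carefully is to state the claim uniformly for all positions $k$, rather than only for $k=0$. This is needed because $\sfun$ may be eventually periodic, or non-injective, and then the same world reappears at several positions. Since $\sigma_w$ is defined by iterating $\sfun$, the induction does not depend on whether the $\sfun$-orbit repeats.

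Finally, I conclude the two inclusions.
\begin{itemize}
\item If $\varphi$ is \LTL-valid, the claim at $k=0$ shows that $\varphi$ holds at every world of every $\itlbd{1}$ model.
\item If $\varphi\in\itlbd{1}$ and $\sigma$ is any \LTL\ trace, take the frame $(\mathbb{N},=,n\mapsto n+1)$ with $V(n):=\sigma(n)$. Then $\sigma_0=\sigma$, and the claim gives $\sigma,0\models_{\LTL}\varphi$.
\end{itemize}
Hence $\itlbd{1}=\LTL$.
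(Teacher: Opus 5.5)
Your proof is correct, but it does more than the paper, which gives no argument for this observation. The paper treats it as definitional: \LTL\ models are identified with the depth-one frames $(W,\peq,S)$, in which $\peq$ is the identity, and the footnote notes that $\bd{1}$ is excluded middle. That identification is then used without comment in the proof of Lemma~\ref{lem:itlbn2ltl-consistency}, where the orbit $\lbrace S^n(v)\mid n\ge 0\rbrace$ with the identity order is declared ``clearly'' an \LTL\ model.

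You instead connect depth-one frames to the standard trace semantics. Depth one forces $\peq$ to be the identity, which makes confluence and monotonicity automatic and $\to$ classical. Each world unravels into the trace $\sigma_w$ along its $S$-orbit, and an induction on formulas shows the two satisfaction relations agree. This yields $\itlbd{1}=\LTL$ as sets of formulas in both directions.

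The paper's view is shorter, and it is all its later lemmas formally rely on. Your argument is what actually justifies calling such models \LTL\ models when \LTL\ is read over traces. This matters in particular in Lemma~\ref{lem:itlbn2ltl-consistency}, where the orbit of $v$ may be finite or cyclic and so is not literally a trace.

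Two minor points. First, Theorem~\ref{depth:axiom} is not needed in your first step: ``depth $\le 1$ implies $\peq$ is the identity'' follows directly from the definition of depth, and the paper defines $\itlbd{n}$ semantically anyway. Second, the uniform-in-$k$ formulation is convenient but not forced by non-injectivity of $S$. The $k=0$ version for all worlds works equally well, using that $\sigma_{S(w)}$ is the one-step suffix of $\sigma_w$ together with the usual \LTL\ suffix property. Repeated worlds are harmless in an induction on formulas either way.
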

\begin{proposition}\label{prop:maximal:next}
	For all \itlbd{n} frame $\FName{D}=(W,\peq,S)$ and for all $w \in W$, if $w$ is maximal w.r.t.\ $\peq$ then $S(w)$ is maximal.
\end{proposition}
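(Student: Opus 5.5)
The plan is a direct argument from backward confluence, which every \itlbd{n} frame has because it is in particular an \itlb\ frame. The finite-depth bound is not needed here.

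Let $\FName{D}=(W,\peq,S)$ be an \itlbd{n} frame and let $w\in W$ be $\peq$-maximal. To show that $S(w)$ is maximal, take any $u\in W$ with $S(w)\peq u$; it suffices to prove $u=S(w)$.

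Apply backward confluence to the configuration $S(w)=v$ with $v\peq u$. It yields some $t\in W$ with $w\peq t$ and $S(t)=u$. Since $w$ is maximal and $w\peq t$, we get $t=w$. Therefore $u=S(t)=S(w)$.

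Hence no world strictly above $S(w)$ exists, so $S(w)$ is maximal. The only point needing care is stating clearly that the \itlbd{n} frames are defined as \itlb\ frames, so backward confluence applies. Forward confluence plays no role.
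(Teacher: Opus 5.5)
Your proof is correct and follows essentially the same route as the paper: both pull a world above $S(w)$ back along backward confluence to a world above $w$, which maximality forces to be $w$ itself, so functionality of $S$ gives $u=S(w)$. The paper phrases this as a contradiction with a strict successor $S(w)\prec v$, while you argue directly that every $u\succcurlyeq S(w)$ equals $S(w)$; like yours, the paper's proof does not use the depth bound.
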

Note that the proposition above can be proved only if the backward-confluence property holds.
In addition, using the result above, we can prove that $\itlbd{n}$-consistency can be reduced to \LTL-consistency.
\begin{lemma}\label{lem:itlbn2ltl-consistency}
Any temporal theory $\Gamma$ is $\itlbd{n}$-consistent iff $\Gamma$ is \LTL-consistent.
\end{lemma}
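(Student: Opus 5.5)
The proof has two directions, and the right-to-left direction is immediate.

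\emph{From \LTL\ to $\itlbd{n}$.} Suppose $\Gamma$ is \LTL-consistent. By Observation~\ref{observation:ltl}, an \LTL\ model is simply an $\itlbd{1}$ model: $\peq$ is the identity, and both confluence conditions hold trivially. Since $\depth{(W,\peq)}=1\le n$, this model is also an $\itlbd{n}$ model satisfying $\Gamma$. Hence $\Gamma$ is $\itlbd{n}$-consistent.

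\emph{From $\itlbd{n}$ to \LTL.} Let $\model=\tuple{(W,\peq,S),V}$ be an $\itlbd{n}$ model with $\model,w\models\Gamma$.

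\textbf{Step 1: reach a maximal world.} Because the depth of $\peq$ is at most $n$, every ascending $\peq$-chain from $w$ is finite. So there is a $\peq$-maximal world $u$ with $w\peq u$. By Proposition~\ref{PropIntCond} (monotonicity), $\model,u\models\Gamma$.

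\textbf{Step 2: build a classical trace.} Consider the $S$-orbit $u,S(u),S^2(u),\dots$. By Proposition~\ref{prop:maximal:next} and induction on $k$, every $S^k(u)$ is maximal. Let $C=\{S^k(u)\mid k\ge0\}$, and let $\model_C$ be the restriction of $\model$ to $C$, with $\peq$ restricted to the identity on $C$ and $S$ restricted to $C$. Then $\model_C$ is a depth-1 model, i.e.\ an \LTL\ model.

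\textbf{Step 3: $u$ and its copy in $\model_C$ satisfy the same formulas.} The key claim is that for every $x\in C$ and every formula $\varphi$,
\[
\model,x\models\varphi \quad\text{iff}\quad \model_C,x\models\varphi .
\]
The cleanest route is to show that the identity relation on $C$ is a bisimulation for intuitionistic temporal models in the sense of \citep{BalbianiBDF20}, and then invoke their bisimulation invariance lemma. If that lemma is not available in the needed form, I will prove the claim directly by structural induction on $\varphi$:
\begin{itemize}
\item Atoms, $\bot$, $\wedge$ and $\vee$ are immediate.
\item For $\to$: since $x$ is maximal, the only $\peq$-successor of $x$ in $\model$ is $x$ itself, which is exactly the clause in $\model_C$.
\item For $\next$, $\until$ and $\release$: these clauses only inspect the worlds $S^k(x)$, which all lie in $C$, so the induction hypothesis applies.
\end{itemize}

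Hence $\model_C,u\models\Gamma$, so $\Gamma$ is \LTL-consistent.

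The main obstacle is Step 2. Maximality must be preserved along $S$, since otherwise the implication clause would break at later time points. This is exactly what Proposition~\ref{prop:maximal:next} provides, using backward confluence, and it is why the counterexamples for \iltl\ and for \itlb\ with infinite depth do not apply here. Finite depth is used only in Step 1, to guarantee that a maximal world above $w$ exists.
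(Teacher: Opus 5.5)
Your proof is correct and follows the paper's argument. The right-to-left direction goes through Observation~\ref{observation:ltl}. The left-to-right direction moves to a $\peq$-maximal world above $w$ (by finite depth and monotonicity), then restricts to its $S$-orbit, which stays maximal by Proposition~\ref{prop:maximal:next}, with the identity as order. Your Step~3 also supplies the truth-preservation argument (structural induction, or the identity bisimulation), which the paper dismisses with ``clearly''.
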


To the best of our knowledge, the family of intermediate temporal logics has not been defined in the literature.
We define those logics as extensions of \itlbd{n}.\footnote{In other words, we want $\itlbd{n}$ to play the same role as \I\ does in the propositional case.}
\begin{definition}[Intermediate temporal logic]
  An \emph{intermediate temporal logic} in the language \langT\ is any set of formulas $\logic{X}$
  satisfying the following conditions:
  \begin{enumerate}
  \item $\itlbd{n} \subseteq \logic{X} \subseteq \LTL$
  \item $\logic{X}$ is closed under \emph{modus ponens},
    i.e., $\varphi, \varphi \to \psi \in \logic{X}$ implies $\psi \in \logic{X}$
  \item $\logic{X}$ is closed under \emph{necessitation},
  i.e., $\psi \in \logic{X}$ implies $\next \psi \in \logic{X}$ and $\alwaysF \psi \in \logic{X}$
  \item $\logic{X}$ is closed under \emph{uniform substitution},
    i.e., $\varphi \in \logic{X}$ implies $\varphi\mathbf{s} \in \logic{X}$ for any $\varphi \in \langT$ and a substitution $\mathbf{s}$
  \end{enumerate}
\end{definition}
As in the propositional case,
an intermediate temporal logic is said to be \emph{proper}
if it is different from $\LTL$.
Assuming that any intermediate temporal logic extends \itlbd{n},
the following result directly follows from Lemma~\ref{lem:itlbn2ltl-consistency}.
\begin{corollary}\label{cor:intermediate2ltl:consistency}
  Let $\logic{X}$ be any intermediate temporal logic.
  Any temporal theory $\Gamma$ is $\logic{X}$-consistent iff $\Gamma$ is \LTL-{consistent}.
\end{corollary}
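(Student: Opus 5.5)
The corollary follows from Lemma~\ref{lem:itlbn2ltl-consistency} once we show that $\logic{X}$-consistency sits between $\itlbd{n}$-consistency and \LTL-consistency. We fix an intermediate temporal logic $\logic{X}$ with $\itlbd{n}\subseteq\logic{X}\subseteq\LTL$ for some $n\ge 1$. The plan is to sandwich $\logic{X}$-consistency between these two notions and then close the loop with the lemma.

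For the direction ``$\Gamma$ \LTL-consistent implies $\Gamma$ $\logic{X}$-consistent'', take an \LTL\ model of $\Gamma$. By Observation~\ref{observation:ltl}, it can be seen as an $\itlbd{1}$ frame, i.e.\ a persistent intuitionistic temporal frame of depth $1$ with the trivial order. Such a frame validates every formula of \LTL, so in particular it validates every formula of $\logic{X}\subseteq\LTL$. It is therefore an $\logic{X}$-frame, and it satisfies $\Gamma$ at some world. Hence $\Gamma$ is $\logic{X}$-consistent.

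For the converse, suppose $\Gamma$ is satisfied at a world of a model whose frame validates $\logic{X}$. Since $\itlbd{n}\subseteq\logic{X}$, this frame is an $\itlbd{n}$ frame, or at least a frame on which all of $\itlbd{n}$ is valid. Then $\Gamma$ is $\itlbd{n}$-consistent, and Lemma~\ref{lem:itlbn2ltl-consistency} gives that $\Gamma$ is \LTL-consistent.

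The only delicate step is this last one: what ``$\logic{X}$-frame'' means for an arbitrary axiomatically given $\logic{X}$. We must make sure that a frame validating $\logic{X}$ actually has persistence and depth $\le n$, so that Lemma~\ref{lem:itlbn2ltl-consistency} applies. We would settle this by taking the class of $\logic{X}$-frames to be the persistent frames of depth $\le n$ that validate $\logic{X}$, mirroring the propositional definition. With that reading, the depth bound that $\bd{n}$ induces by Theorem~\ref{depth:axiom} is built in, and Proposition~\ref{prop:maximal:next}, which is used inside the lemma, remains available. Everything else is immediate.
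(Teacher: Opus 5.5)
Your proposal is correct and follows the paper's proof. The right-to-left direction goes through Observation~\ref{observation:ltl}: an \LTL\ model is a depth-$1$ persistent frame, and since $\logic{X}\subseteq\LTL$ it counts as an $\logic{X}$-model. The left-to-right direction observes that every $\logic{X}$-model is an $\itlbd{n}$ model and then applies Lemma~\ref{lem:itlbn2ltl-consistency}. The paper simply asserts that every $\logic{X}$-model is an $\itlbd{n}$ model, so your reading of $\logic{X}$-frames as persistent frames of depth at most $n$ that validate $\logic{X}$ just makes explicit the convention the paper uses implicitly.
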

The following proposition shows that the depth of a generated (intuitionistic) subframe does not increase between two worlds $x$ and $S(x)$.
\begin{proposition}\label{prop:depth:next}
  For any intuitionistic temporal frame $\FName{D} = (W,\peq,S)$ and for an $w\in W$ and for any $n\ge1 $,
  if $\depth{((W,\peq),w)}\le n$ then $\depth{((W,\peq),S(w))}\le n$.
\end{proposition}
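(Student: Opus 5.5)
The plan is to lift every chain above $S(w)$ to a chain above $w$ of the same length, using backward confluence. First a caveat on the hypotheses. Forward confluence alone cannot suffice: in the model of Figure~\ref{FigIMLA}, $w$ is $\peq$-maximal, so $\depth{((W,\peq),w)}=1$, while $S(w)=v\peq u$ with $v\neq u$, so $\depth{((W,\peq),S(w))}=2$. I will therefore read the statement as concerning persistent frames (those of \itlb, and hence of \itlbd{n}), as in Proposition~\ref{prop:maximal:next}, and use backward confluence essentially. This is the main point to settle; the rest is routine.

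Assume $\depth{((W,\peq),w)}\le n$. Suppose, for contradiction, that the subframe generated by $S(w)$ contains a chain of $n+1$ pairwise distinct worlds. Since that subframe is an upset with least element $S(w)$, we may assume the chain starts at $S(w)$ (prepend $S(w)$ and drop the top element if needed). So write it as $S(w)=v_0\peq v_1\peq\cdots\peq v_n$ with all $v_i$ distinct.

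Next, build $t_0,\dots,t_n$ by induction with $t_0:=w$, $t_i\peq t_{i+1}$ and $S(t_i)=v_i$. The base case is $S(t_0)=v_0$. For the step, given $S(t_i)=v_i\peq v_{i+1}$, backward confluence yields $t_{i+1}$ with $t_i\peq t_{i+1}$ and $S(t_{i+1})=v_{i+1}$.

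The $t_i$ are pairwise distinct, since $t_i=t_j$ would give $v_i=S(t_i)=S(t_j)=v_j$. By transitivity, all $t_i$ lie in the subframe generated by $w$. Hence that subframe contains a chain of $n+1$ distinct worlds, contradicting $\depth{((W,\peq),w)}\le n$ via condition~\eqref{eq:int:depth}. Therefore $\depth{((W,\peq),S(w))}\le n$.
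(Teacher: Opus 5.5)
Your proof is correct and is essentially the paper's argument: you lift a chain of $n+1$ distinct worlds starting at $S(w)$ to a chain above $w$ by repeated backward confluence, and the lifted worlds are distinct because $S$ is a function (you are also more careful than the paper in justifying that the chain may be taken to start at $S(w)$). Your caveat is right too: the paper's own proof uses backward confluence, so the statement should be read for persistent frames, and the model of Figure~\ref{FigIMLA} shows that forward confluence alone does not suffice.
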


Similar to the propositional case, intermediate temporal logics defined as extensions of $\itlbd{n}$ satisfy the following proposition.
\begin{proposition}\label{prop:consistency:itl}
Let $\logic{X}$ and $\logic{Y}$ be two intermediate temporal logics satisfying $\logic{X} \subseteq \logic{Y}$. A theory $\Gamma$ is $\logic{X}$-consistent iff $\Gamma$ is $\logic{Y}$-consistent.
\end{proposition}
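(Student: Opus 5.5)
The plan is to route both directions through \LTL, using Corollary~\ref{cor:intermediate2ltl:consistency} as the main tool. Both $\logic{X}$ and $\logic{Y}$ are intermediate temporal logics. By definition this means each satisfies $\itlbd{n}\subseteq \logic{X}\subseteq\logic{Y}\subseteq\LTL$ for some $n$, together with the closure conditions. So Corollary~\ref{cor:intermediate2ltl:consistency} applies to each of them separately.

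The argument itself is a short chain of equivalences. A theory $\Gamma$ is $\logic{X}$-consistent iff it is \LTL-consistent, by Corollary~\ref{cor:intermediate2ltl:consistency} applied to $\logic{X}$. It is \LTL-consistent iff it is $\logic{Y}$-consistent, by the same corollary applied to $\logic{Y}$. Chaining these gives the claim. Note that the hypothesis $\logic{X}\subseteq\logic{Y}$ is not actually needed for this route; it only ensures the statement mirrors Lemma~\ref{lem:consistency}.

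If the argument has to be made more self-contained, I would unpack the corollary through Lemma~\ref{lem:itlbn2ltl-consistency}, proving each direction semantically.

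From \LTL\ to the intermediate logic: an \LTL\ model is an $\itlbd{1}$ model, that is, a frame of depth $1$. Such a frame trivially satisfies both confluence conditions and has depth at most $n$. It should also be a frame of any intermediate temporal logic, since that logic lies below \LTL\ and so its frames include the classical one-world-per-instant frames.

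From the intermediate logic to \LTL: take a model with depth at most $n$ and a world $w$ satisfying $\Gamma$. Since the depth is finite, we can pick a $\peq$-maximal $u\succcurlyeq w$. By monotonicity (Proposition~\ref{PropIntCond}), $u$ satisfies $\Gamma$. By Proposition~\ref{prop:maximal:next} and induction, every $S^k(u)$ is maximal. Hence the $S$-orbit of $u$ forms a classical trace on which the intuitionistic clauses coincide with the \LTL\ ones, so it gives an \LTL\ model of $\Gamma$.

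The main obstacle is the delicate point that frames of an arbitrary $\logic{X}$ must contain these depth-one frames and must have bounded depth. That is exactly what the corollary packages, so I would simply invoke it rather than reprove it.
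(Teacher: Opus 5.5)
Your proof is correct and is essentially the paper's own argument: both directions pass through \LTL-consistency, applying Corollary~\ref{cor:intermediate2ltl:consistency} once to $\logic{X}$ and once to $\logic{Y}$. You are also right that the inclusion $\logic{X}\subseteq\logic{Y}$ plays no role in that argument.
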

Furthermore, semantic entailment is preserved when strengthening the logic.
\begin{proposition}\label{prop:entailment:itl}
 Let $\logic{X}$ and $\logic{Y}$ be two intermediate temporal logics satisfying $\logic{X} \subseteq \logic{Y}$.
 For any theories $\Gamma$ and $\Delta$,
 \begin{displaymath}
 	\Gamma \Inf{\logic{X}} \Delta \hbox{ implies } \Gamma \Inf{\logic{Y}} \Delta.
 \end{displaymath}
\end{proposition}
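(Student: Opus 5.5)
The argument is a direct unfolding of the definitions, mirroring Proposition~\ref{prop:consequence:intermediate}. The key observation is that a larger logic corresponds to a smaller class of frames. If $\logic{X} \subseteq \logic{Y}$, then every $\logic{Y}$-frame is an $\logic{X}$-frame, and hence every $\logic{Y}$-model is an $\logic{X}$-model.

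To establish this, I take each intermediate temporal logic $\logic{Z}$ to be characterised semantically, as in the propositional setting: it is the set of formulas valid on its class of $\logic{Z}$-frames. These are the $\itlb$ frames of depth at most $n$ that validate every formula of $\logic{Z}$. Let $\FName{D}$ be a $\logic{Y}$-frame. Then $\FName{D}$ validates every formula of $\logic{Y}$. Since $\logic{X} \subseteq \logic{Y}$, it validates every formula of $\logic{X}$ as well. So $\FName{D}$ belongs to the class of $\logic{X}$-frames, and any model $\tuple{\FName{D},V}$ is an $\logic{X}$-model.

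With this inclusion in hand, the entailment follows. Assume $\Gamma \Inf{\logic{X}} \Delta$. Fix a $\logic{Y}$-model $\model$ and a world $w$ with $\model, w \models \varphi$ for all $\varphi \in \Gamma$. By the previous step, $\model$ is also an $\logic{X}$-model. The satisfaction relation depends only on the model and not on the logic, so the hypothesis applies directly and gives $\model, w \models \psi$ for every $\psi \in \Delta$. Because $\model$ and $w$ were arbitrary, $\Gamma \Inf{\logic{Y}} \Delta$.

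The main obstacle is purely foundational. The paper defines intermediate temporal logics syntactically, through closure conditions, rather than by frame classes, so the frame-class inclusion has to be justified. I would handle this by adopting the frame-based reading of $\Inf{\logic{X}}$, namely quantification over frames validating $\logic{X}$, as the paper does implicitly by analogy with Definition~\ref{def:sem-consequence}. Under that reading the inclusion is immediate, since validating more formulas can only shrink the class. No Kripke completeness of $\logic{X}$ or $\logic{Y}$ is needed, because both entailments are defined semantically over these classes.
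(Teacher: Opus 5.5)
Your proof is correct and follows essentially the same route as the paper, which argues by contradiction that a $\logic{Y}$-countermodel to $\Gamma \Inf{\logic{Y}} \Delta$ is also an $\logic{X}$-model (``$\logic{X}$ is weaker than $\logic{Y}$''), and hence a countermodel to $\Gamma \Inf{\logic{X}} \Delta$. You argue directly rather than by contradiction, and you spell out the frame-class inclusion: every $\logic{Y}$-frame validates $\logic{Y} \supseteq \logic{X}$ and is therefore an $\logic{X}$-frame, a step the paper only asserts.
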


The strongest proper intermediate temporal logic is the logic of here-and-there (\THT; \citealp{baldie16}).
In our setting, this logic is obtained by adding Axiom~\eqref{eq:hosoi} to $\itlbd{2}$.\footnote{It can also be obtained by adding Axiom~\eqref{eq:hosoi} to \itlb. However, in this paper,
  we take \itlbd{n} as the root of our constructions.}
\THT\ frames are of the form $(W,\peq,S)$ where
\begin{eqnarray*}
	W &=& \mathbb{N}\times \lbrace 0,1\rbrace,\\
	\peq &=& \lbrace ((i,h),(i,t)) \mid ((i,h),(i,t)) \in W \times W \hbox{ such that } h \le t \rbrace  \hbox{ and }\\
	S &=& \lbrace ((i,k),(i+1,k)) \mid ((i,k),(i+1,k)) \in W \times W\rbrace.
\end{eqnarray*}
In the definition above,
pairs of the form $(i,0)$ (resp.\ $(i,1)$) represent the ``here'' (resp.\ ``there'') world at each time point $i$.

\subsection{Bisimulations for Intuitionistic Temporal Logics}

The notion of intuitionistic temporal bisimulation was introduced in~\citep{BalbianiBDF20} and we use it in what follows
to contract intuitionistic temporal models into \THT.
Given two intuitionistic temporal models $\model_1=\tuple{(W_1,\peq_1,S_1),V_1}$ and $\model_2=\tuple{(W_2,\peq_2,S_2),V_2}$,
a relation $\bisim \subseteq W_1\times W_2$ is an intuitionistic temporal bisimulation,
if it satisfies Conditions~\ref{cond:atoms}-\ref{cond:back:imp} together with the following ones:
\begin{enumerate}[label=\textbf{C\arabic*},start=5]
\item \label{cond:next} If $w_1 \bisim w_2$ then $S(w_1) \bisim S(w_2)$;
\item \label{cond:until:forth} If $w_1 \bisim w_2$ then for all $k_1\ge 0$ there exists $k_2\ge 0$ and $(v_1,v_2) \in W_1 \times W_2$ such that
  \begin{enumerate}
  \item $v_2 \peq S^{k_2}(w_2)$, $S^{k_1}(w_1)\peq v_1$ and $v_1 \bisim v_2$ and
  \item for all $\rangeco{j_2}{0}{k_2}$ there exists $\rangeco{j_1}{0}{k_1}$ and $(u_1,u_2)\in W_1\times W_2$ such that $S^{j_1}(w_1) \peq u_1$, $u_2 \peq S^{j_2}(w_2)$ and $u_1 \bisim u_2$.
  \end{enumerate}
\item \label{cond:until:back} If $w_1 \bisim w_2$ then for all $k_2\ge 0$ there exists $k_1\ge 0$ and $(v_1,v_2) \in W_1 \times W_2$ such that
  \begin{enumerate}
  \item $v_1 \peq S^{k_1}(w_1)$, $S^{k_2}(w_2)\peq v_2$ and $v_1 \bisim v_2$ and
  \item for all $\rangeco{j_1}{0}{k_1}$ there exists $\rangeco{j_2}{0}{k_2}$ and $(u_1,u_2)\in W_1\times W_2$ such that $S^{j_2}(w_2)\peq u_2$, $u_1 \peq S^{j_1}(w_1)$ and $u_1 \bisim u_2$.
  \end{enumerate}
\item \label{cond:release:forth} If $w_1 \bisim w_2$ then for all $k_2\ge 0$ there exists $k_1\ge 0$ and $(v_1,v_2) \in W_1 \times W_2$ such that
  \begin{enumerate}
  \item $v_2 \peq S^{k_2}(w_2)$, $S^{k_1}(w_1)\peq v_1$ and $v_1 \bisim v_2$ and
  \item for all $\rangeco{j_1}{0}{k_1}$ there exists $\rangeco{j_2}{0}{k_2}$ and $(u_1,u_2)\in W_1\times W_2$ such that $S^{j_1}(w_1) \peq u_1$, $u_2 \peq S^{j_2}(w_2)$ and $u_1 \bisim u_2$.
  \end{enumerate}
\item \label{cond:release:back} If $w_1 \bisim w_2$ then for all $k_1\ge 0$ there exists $k_2\ge 0$ and $(v_1,v_2) \in W_1 \times W_2$ such that
  \begin{enumerate}
  \item $v_1 \peq S^{k_1}(w_1)$, $S^{k_2}(w_2)\peq v_2$ and $v_1 \bisim v_2$ and
  \item for all $\rangeco{j_2}{0}{k_2}$ there exists $\rangeco{j_1}{0}{k_1}$ and $(u_1,u_2)\in W_1\times W_2$ such that  $S^{j_2}(w_2) \peq u_2$, $u_1 \peq S^{j_1}(w_1)$,  and $u_1 \bisim u_2$.
  \end{enumerate}
\end{enumerate}

Conditions~\ref{cond:next}-\ref{cond:release:back} play the same role as
conditions~\ref{cond:forth:imp}-~\ref{cond:back:imp} in the case of \I\ but they affect the temporal modalities.
More precisely,
Conditions~\ref{cond:until:forth} and~\ref{cond:until:back} (resp.\ conditions~\ref{cond:release:forth} and~\ref{cond:release:back}) simulate the behavior of the until (resp.\ release) operator.
Note that, due confluence of both $\peq$ and $S$, the forth and back conditions for the binary temporal modalities involve both relations.
Condition~\ref{cond:next}, which is used to simulate the next modality, is not divided into two conditions because the next operator is interpreted in terms of a function.

Given two intuitionistic temporal models
$\model_1 = \tuple{(W_1,\peq_1, S_1),V_1}$ and $\model_2= \tuple{(W_2,\peq_2, S_2),V_2}$ with $w_1\in W_1$ and $w_2 \in W_2$,
we say that $\model_1$ and $\model_2$ are bisimilar,
if there exists an intuitionistic temporal bisimulation $\bisim$ between $W_1$ and $W_2$ such that $w_1 \bisim w_2$.
The following lemma states that two bisimilar Kripke worlds satisfy the same temporal formulas.
\begin{lemma}[\citealp{BalbianiBDF20}]\label{lem:bisim:itl}
  Given two models
  $\model_1 = \tuple{(W_1,\peq_1,S_1),V_1}$ and
  $\model_2= \tuple{(W_2,\peq_2,S_2),V_2}$ and
  a bisimulation $\bisim$ on $W_1\times W_2$,
  we have for all $w_1\in W_1$ and for all $w_2 \in W_2$,
  if $w_1\bisim w_2$ then for all $\varphi \in \langT$, $\model_1,w_1 \models \varphi$ iff $\model_2, w_2 \models \varphi$.
\end{lemma}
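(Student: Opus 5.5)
The proof will be a structural induction on $\varphi \in \langT$. We prove, simultaneously for all pairs, that $w_1 \bisim w_2$ implies $\model_1,w_1\models\varphi$ iff $\model_2,w_2\models\varphi$.

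\emph{Propositional cases.} Atoms follow from \ref{cond:atoms}, and $\bot$ is trivial. Conjunction and disjunction follow directly from the induction hypothesis. For implication we argue as in Lemma~\ref{lem:bisim:il}. Suppose $\model_1,w_1\models\varphi\to\psi$ and let $w_2\peq_2 v_2$ with $\model_2,v_2\models\varphi$. By \ref{cond:back:imp} there is $v_1\succcurlyeq_1 w_1$ with $v_1\bisim v_2$. The induction hypothesis transfers $\varphi$ to $v_1$, hence $\psi$ holds at $v_1$, and the induction hypothesis again gives $\psi$ at $v_2$. The converse direction uses \ref{cond:forth:imp}.

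\emph{Next.} The case $\next\varphi$ is immediate from \ref{cond:next}: the pair $(S(w_1),S(w_2))$ is again in $\bisim$, and we apply the induction hypothesis to $\varphi$.

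\emph{Until and release.} These are the main work, and here we combine the induction hypothesis with monotonicity (Proposition~\ref{PropIntCond}). Conditions \ref{cond:until:forth}--\ref{cond:release:back} only give bisimilar pairs $(v_1,v_2)$ that are $\peq$-related to the relevant $S$-iterates, not the iterates themselves. Truth can therefore only be pushed upward in one model and pulled down from above in the other, and the inequalities in each condition must point in the direction that makes this work. For example, take $\varphi\until\psi$ and transfer it from $w_2$ to $w_1$. Let $k_2$ witness it, so $\psi$ holds at $S^{k_2}(w_2)$ and $\varphi$ holds at $S^{j}(w_2)$ for all $j<k_2$. Apply \ref{cond:until:back} to obtain $k_1$ and a pair $(v_1,v_2)$ with $v_1\peq S^{k_1}(w_1)$, $S^{k_2}(w_2)\peq v_2$ and $v_1\bisim v_2$. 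By monotonicity $\psi$ holds at $v_2$, by the induction hypothesis $\psi$ holds at $v_1$, and by monotonicity again $\psi$ holds at $S^{k_1}(w_1)$. For each $j_1<k_1$, clause (b) supplies $j_2<k_2$ and $u_2\succcurlyeq S^{j_2}(w_2)$ bisimilar to some $u_1\peq S^{j_1}(w_1)$, so $\varphi$ transfers in the same way. The other direction uses \ref{cond:until:forth} symmetrically.

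For release we proceed contrapositively, since $\varphi\release\psi$ fails exactly when there is a $k$ at which $\psi$ fails and $\varphi$ fails at every $i<k$. Falsity is preserved downward, i.e. it transfers along $\succcurlyeq$-direction reversed. So to transfer failure of $\varphi\release\psi$ from $w_2$ to $w_1$ we use \ref{cond:release:forth}. From $v_2\peq S^{k_2}(w_2)$ and the failure of $\psi$ at $S^{k_2}(w_2)$ we get that $\psi$ fails at $v_2$. The induction hypothesis gives failure at $v_1$, and since $S^{k_1}(w_1)\peq v_1$, monotonicity gives failure at $S^{k_1}(w_1)$. Clause (b) handles $\varphi$ at each $j_1<k_1$ likewise. \ref{cond:release:back} gives the converse.

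The main obstacle is only bookkeeping. For each of the four binary-modality directions we must check that the quantifier order and the $\peq$-orientations in the corresponding condition match the positive (until) or negative (release) witness. No confluence of the frames is needed beyond what monotonicity already uses.
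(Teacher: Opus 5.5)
Your proposal is correct and follows the same route as the paper. The paper only sketches this proof: a structural induction using \ref{cond:atoms} for atoms, \ref{cond:forth:imp}--\ref{cond:back:imp} for implication, \ref{cond:next} for $\next$, \ref{cond:until:forth}--\ref{cond:until:back} for $\until$ and \ref{cond:release:forth}--\ref{cond:release:back} for $\release$. Your explicit checks of the $\peq$-orientations supply the details it leaves to the reader, including the contrapositive treatment of $\release$ via downward preservation of falsity under monotonicity, and all of them go through as stated.
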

In the proof of the previous lemma, Condition~\ref{cond:atoms} is used to establish the case of propositional variables.
Conditions~\ref{cond:back:imp} and~\ref{cond:forth:imp} are used to handle the case of implication.
Conditions~\ref{cond:until:back} and~\ref{cond:until:forth} are employed for the $\until$ operator, while
Conditions~\ref{cond:release:back} and~\ref{cond:release:forth} are used for the $\release$ operator.
Finally, Condition~\ref{cond:next} is used in the proof of the case of the $\next$ connective.

In~\citep{BalbianiBDF20}, where \iltl\ and \itlb\ are studied in detail,
the authors proved that both satisfiability (resp.\ validity) on arbitrary \iltl\ models is equivalent to satisfiability (resp.\ validity) on the so-called \emph{expanding models}, defined in the following theorem.

\begin{theorem}[\citealp{BalbianiBDF20,BoudouDF17}]\label{thm:itl}
Every \iltl\ model $\model$ can be unfolded into an expanding model $\model' = \tuple{(W',\peq',S'),V'}$,
which satisfies the following properties:
\begin{enumerate}
	\item For all $i \ge 0$ and $w'\in W'$, the intuitionistic subframe generated by each $S'^i(w')$ is a tree,
	\item the sequence of trees induced by $S'^i(w')$ is a sequence of disjoint trees, and
	\item for all $v' \in W'$ and for all $i\ge 0$, if $S'^i(w') \peq v'$ then $S'^{i+1}(w') \peq S'(v')$, that is, $S'(v')$ falls in the tree generated by $S'^{i+1}(w')$.
\end{enumerate}
\end{theorem}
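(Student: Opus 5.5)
The plan is to carry out the standard unravelling construction for intuitionistic temporal models, as in \citep{BalbianiBDF20,BoudouDF17}, and to check that the resulting structure is an intuitionistic temporal model bisimilar to the original one. Let $\model=\tuple{(W,\peq,S),V}$ be given. I take $W'$ to be the set of finite \emph{paths} built from two kinds of steps, $\peq$-steps and $S$-steps, starting at a world of $W$. Concretely, an element of $W'$ is a pair $(i,\sigma)$, where $i\ge 0$ is a time index and $\sigma=(w_0,w_1,\dots,w_m)$ is a sequence of worlds of $W$. Each step $w_k\to w_{k+1}$ is labelled either as a strict $\peq$-step ($w_k\peq w_{k+1}$, $w_k\neq w_{k+1}$) or as an $S$-step ($w_{k+1}=S(w_k)$), and $i$ records the number of $S$-steps. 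The order $\peq'$ is the reflexive–transitive closure of ``extend $\sigma$ by one $\peq$-step''. The valuation is $V'(\sigma):=V(\mathrm{last}(\sigma))$, which is monotone because $V$ is.

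The delicate point is the definition of $S'$, since it must be a total function satisfying forward confluence and property 3. I define $S'(\sigma)$ by induction on the length of $\sigma$. For a root $\sigma=(w_0)$, I set $S'(\sigma):=(w_0,S(w_0))$, which opens a new root of the next tree. If $\sigma=\tau\cdot v$ ends with a $\peq$-step from $\mathrm{last}(\tau)$ to $v$, then I set $S'(\sigma):=S'(\tau)\cdot S(v)$. This is a $\peq$-step when $S(\mathrm{last}(\tau))\neq S(v)$; otherwise I identify it with $S'(\tau)$. The step is legitimate because forward confluence of $S$ gives $S(\mathrm{last}(\tau))\peq S(v)$. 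Paths ending with an $S$-step are handled the same way: they lie in some tree at level $i$, and $S'$ pushes them to level $i+1$.

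With this definition, an induction on the length of $\sigma$ shows three things. First, $\mathrm{last}(S'(\sigma))=S(\mathrm{last}(\sigma))$. Second, $S'$ is monotone, hence forward confluent. Third, for $\sigma$ in the tree rooted at $S'^i(w')$, the path $S'(\sigma)$ lies in the tree rooted at $S'^{i+1}(w')$, which is property 3. Property 1 holds because each level consists of sequences ordered by prefix extension through $\peq$-steps only, and prefix orders are trees. Property 2 holds because the time index and the root of each tree are encoded in the path, so distinct levels are disjoint.

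It remains to show that the map $\sigma\mapsto\mathrm{last}(\sigma)$ gives a bisimulation $\bisim=\{(\mathrm{last}(\sigma),\sigma)\}$, which yields equivalence of satisfaction by Lemma~\ref{lem:bisim:itl}. I verify the conditions as follows:
\begin{itemize}
\item \ref{cond:atoms} holds by the definition of $V'$.
\item \ref{cond:forth:imp} holds because any $v$ with $\mathrm{last}(\sigma)\peq v$ can be matched by the extension $\sigma\cdot v$.
\item \ref{cond:back:imp} holds because every $\peq'$-successor of $\sigma$ ends in a $\peq$-successor of $\mathrm{last}(\sigma)$.
\item \ref{cond:next} follows from $\mathrm{last}(S'(\sigma))=S(\mathrm{last}(\sigma))$.
\end{itemize}
For the until and release conditions \ref{cond:until:forth}--\ref{cond:release:back}, I take $k_1=k_2$ and $v_1=S^{k}(w_1)$, $v_2=S'^{k}(w_2)$. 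Then $v_1\bisim v_2$ by iterating \ref{cond:next}, and for each $j$ the same choice $u_1=S^j(w_1)$, $u_2=S'^j(w_2)$ works.

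The main obstacle is the well-definedness of $S'$. One must make sure that collapsing a $\peq$-step to an equality does not break the tree structure. One must also make sure that two different paths in the same tree are never mapped by $S'$ into different trees. I would handle both by fixing the canonical-form convention (no trivial $\peq$-steps) from the start and checking injectivity issues only where trees are required, namely at the roots.
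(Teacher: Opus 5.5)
Your proposal is correct. The paper gives no argument of its own for this theorem: it imports it from \citep{BalbianiBDF20,BoudouDF17} and points to the former for the details of the unfolding. So your path unravelling, together with the bounded morphism $\sigma\mapsto\mathrm{last}(\sigma)$, is a self-contained proof rather than an alternative route to one in the text.

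The invariant you isolate, $\mathrm{last}(S'(\sigma))=S(\mathrm{last}(\sigma))$, is what makes the appended step in $S'(\tau)\cdot S(v)$ a legitimate strict $\peq$-step, by forward confluence. From it, \ref{cond:atoms}--\ref{cond:next} are immediate. You are also right that \ref{cond:until:forth}--\ref{cond:release:back} are trivial once \ref{cond:next} holds, by taking $k_1=k_2$ and $j_1=j_2$. Properties~1--3 are then cheap: property~3 is just monotonicity of $S'$, and property~2 holds because $S'$ adds exactly one $S$-step while $\peq'$ adds none.

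Two remarks. First, the ``main obstacle'' in your last paragraph is not a real one.
\begin{itemize}
\item $S'$ never needs to be injective; collapsing is harmless, since the tree structure is carried by $\peq'$ alone.
\item That each tree is sent into a single tree follows directly from $S'(\tau\cdot v)\in\{S'(\tau)\cdot S(v),\,S'(\tau)\}$.
\item What you should make explicit instead is the clause for paths ending in an $S$-step: they are roots, and $S'$ appends one more $S$-step.
\item The extra trees rooted at paths such as $\rho\cdot v\cdot S(v)$ with $v\neq\mathrm{last}(\rho)$ are harmless. Restricting to the part generated by $(w)$ leaves exactly the sequence of disjoint trees rooted at the $S'^i((w))$.
\end{itemize}

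Second, right after the theorem the paper applies it to \itlbd{n} models. That use needs the unfolding to preserve finite depth and backward confluence. Your construction does preserve both, and it is worth saying so.
\begin{itemize}
\item Depth: a strict $\peq'$-chain projects to a strict $\peq$-chain, so depth does not grow.
\item Backward confluence: any $z$ with $S'(\sigma)\peq' z$ has the form $S'(\sigma)\cdot x_1\cdots x_m$. Put $t_0:=\mathrm{last}(\sigma)$ and $x_0:=S(t_0)$. Repeated backward confluence in $\model$ gives $t_{j-1}\prec t_j$ with $S(t_j)=x_j$. The inequality is strict because $x_{j-1}\neq x_j$. Then $t:=\sigma\cdot t_1\cdots t_m$ satisfies $\sigma\peq' t$ and $S'(t)=z$.
\end{itemize}
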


We refer the reader to~\citep{BalbianiBDF20} for more details about such unfolding.
Since $\itlbd{n}$ is contained in $\iltl$ then the construction above can be also applied to models of finite $\peq$-depth.
From now on, when displaying the \itlbd{n} models, we consider expanding models.

Bisimulations for intuitionistic temporal logics allow us to extend lemmas~\ref{lem:bisim:topwidth1} and~\ref{lem:contraction:il} to the temporal case.

\begin{lemma}\label{lem:bisim:itl:topwidth1}
	Let $\model=\tuple{(W,\peq,S),V}$ be an \itlbd{n} model and let $w \in W$ be such that $\model, w \models \lbrace \alwaysF(\neg p \vee \neg \neg p) \mid p \in \PV \rbrace$.
	Then, there exists an \itlbd{n} model $\model'=\tuple{(W',\peq',S'),V'}$, $w'\in W'$ and a bisimulation $\bisim\subseteq W \times W'$ such that, for all $i\ge 0$ both $S^i(w) \bisim S'^i(w')$ and each (intuitionistic) subframe generated by each $S'^i(w')$ has an unique maximal world, $u_i$, w.r.t. $\peq'$.
\end{lemma}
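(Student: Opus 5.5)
We lift the construction of Lemma~\ref{lem:bisim:topwidth1} time point by time point. First, by Theorem~\ref{thm:itl} we may assume $\model$ is expanding. Put $w_i := S^i(w)$ and let $W_i := {\peq}(w_i)$ be the tree generated by $w_i$; these trees are disjoint and $S$ maps $W_i$ into $W_{i+1}$. Since $\model, w \models \alwaysF(\neg p \vee \neg\neg p)$, we get $\model, w_i \models \neg p \vee \neg\neg p$ for every $i$ and $p$. Proposition~\ref{prop:weakexmiddle} then shows that all $\peq$-maximal worlds of $W_i$ carry the same valuation, say $T_i$. Finite depth ($\depth{}\le n$) guarantees that every world of $W_i$ lies below some maximal world of $W_i$.

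Next, define $\model'$. Its worlds are the non-maximal worlds of each $W_i$ together with one fresh world $u_i$ for each $i$, and $w' := w$. Put $v \peq' v'$ iff either $v,v'$ are non-maximal in the same $W_i$ and $v\peq v'$, or $v\in W_i$ and $v'=u_i$. Put $V'(u_i) := T_i$ and keep $V$ elsewhere. For $S'$, take $S'(u_i) := u_{i+1}$. For non-maximal $v\in W_i$, let $S'(v) := S(v)$ if $S(v)$ is non-maximal, and $S'(v) := u_{i+1}$ otherwise. The bisimulation $\bisim$ relates each non-maximal $v\in W_i$ to itself and each maximal world of $W_i$ to $u_i$. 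Conditions \ref{cond:atoms}--\ref{cond:back:imp} are checked exactly as in Lemma~\ref{lem:bisim:topwidth1}. Condition \ref{cond:next} holds because, by Proposition~\ref{prop:maximal:next}, $S$ maps maximal worlds to maximal worlds, hence $S(x)\bisim u_{i+1}$ whenever $x \bisim u_i$. Clearly $S^i(w)\bisim S'^i(w')$, and $u_i$ is the unique maximal world above $S'^i(w')$.

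We must also check that $\model'$ is an $\itlbd{n}$ model. Monotonicity of $V'$ holds because $T_i \supseteq V(v)$ for every $v\in W_i$. Depth does not increase, since chains in $\model'$ correspond to chains in $\model$ with the maximal endpoint replaced by $u_i$. Forward confluence is inherited, using that everything in $W_{i+1}$ is below $u_{i+1}$. Backward confluence needs care. Suppose $S'(v)=x$ and $x\peq' y$. If $y$ is non-maximal we use backward confluence in $\model$, noting that the preimage $t$ obtained is non-maximal: if $t$ were maximal, $S(t)$ would be maximal by Proposition~\ref{prop:maximal:next}. If $y=u_{i+1}$ we choose $t=u_i$.

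The main obstacle is the temporal conditions \ref{cond:until:forth}--\ref{cond:release:back}. The plan is to discharge them uniformly using the fact that $\bisim$ is \emph{$S$-compatible along each fixed index}: whenever $x\bisim x'$, also $S^k(x)\bisim S'^k(x')$ for all $k$. For each condition we take $k_2=k_1$ (respectively $k_1=k_2$). As witnesses we take $v_1 = S^{k}(w_1)$, $v_2=S'^{k}(w_2)$, and likewise for every $j<k$, $u_1=S^j(w_1)$, $u_2=S'^j(w_2)$. The required $\peq$-inequalities then hold by reflexivity, so each clause is satisfied trivially once iterated \ref{cond:next} is established. Lemma~\ref{lem:bisim:itl} then gives that $\bisim$ preserves all temporal formulas.
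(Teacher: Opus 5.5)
Your proposal is correct and follows the paper's construction essentially verbatim. You keep the non-maximal worlds of each ${\peq}(S^i(w))$, collapse the maximal ones (which agree on atoms by Proposition~\ref{prop:weakexmiddle}) into a fresh $u_i$, redirect $S$ to $u_{i+1}$ whenever it lands on a maximal world, and check backward confluence via Proposition~\ref{prop:maximal:next}. Your explicit appeal to disjoint expanding trees, and your observation that Conditions~\ref{cond:until:forth}--\ref{cond:release:back} follow from iterated~\ref{cond:next} by taking $k_1=k_2$ with witnesses along $S$, make precise what the paper leaves as ``it can be checked''; the only slip is that when $w$ is itself maximal you must take $w':=u_0$ rather than $w':=w$.
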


\begin{proof} Since $\model, w \models   \lbrace \alwaysF(\neg p \vee \neg \neg p) \mid p \in \PV \rbrace$ implies $\model, S^i(w) \models \lbrace \neg p \vee \neg \neg p \mid p \in \PV \rbrace$, for all $i \ge 0$.
Since $\model$ is an \itlbd{n} model, the intuitionistic depth of $(W,\peq)$ is finite and,
because of Proposition~\ref{prop:weakexmiddle},
all maximal worlds in $\peq(S^i(w))$
satisfy the same propositional variables.

Let us define the \itlbd{n} model $\model' := \tuple{(W',\peq',S'),V'}$, where its corresponding frame is defined as:
		\begin{align*}
			W' & :=  \lbrace v, u_i \mid i \ge 0 \hbox{, } v \hbox{ is } \peq-\hbox{maximal in} \peq(S^i(w)) \hbox{ and } u_i \not \in W \hbox{ is a fresh world}\rbrace; \\
\peq' &: =  \lbrace (u_i,u_i), (v,u_i) \mid i \ge 0,\; v, u_i \in W' \hbox{ and } S^i(w) \peq v \rbrace\\
		 	   & \cup \lbrace (v,v') \mid v,v' \in W'\hbox{ and }  v \peq v'\rbrace;\\
			S' & :=  \lbrace (u_i,u_{i+1}), (v,u_{i+1}) \mid i \ge 0,\; S^i(w)\peq v \hbox{ and } S(v)=x \hbox{, with } x \peq-\hbox{maximal}\rbrace\\
			   & \cup  \lbrace (v,v') \mid (v,v')\in W' \times W' \hbox{ and }  v S v' \rbrace. \\
\end{align*}
We verify that it $(W',\peq',S')$ is an \itlbd{n} frame.
Since $\peq$ is a partial order relation,  by construction, $\peq'$ is.
Since $S$ is a function, it can be verified that $S'$ is a function as well.
Also by construction, $(W',\peq')$ is of finite depth.
We readily check that $S'$ and $\peq'$ are forward and backward confluent. For the forward confluence let us consider
$v,v'\in W$ satisfying $v\peq'v'$.
In order to prove that $S'(v) \peq' S'(v')$ we need to consider several cases.

\begin{enumerate}
	\item If $v = u_i$, for some $i \ge 0$, by construction, $v'=u_i$ too.
	Therefore, $u_{i+1} = S'(v) \peq' S'(v') = u_{i+1}$.
	\item If $v \not = u_i$ but  $v'=u_i$ then $S'(v') = u_{i+1}$ and $v \in W$.
	If $S(v)$ is maximal then $u_{i+1}=S'(v)\peq S'(v') =u_{i+1}$ by definition.
	If not, take $S'(v) := S(v)\in W'$. By construction, $S'(v)\peq u_{i+1}$.
	\item If $v \not = u_i$ and  $v'\not =u_i$ it follows that $v \peq v'$.
	Since $\peq$ and $S$ are forward confluent then $S(v) \peq S(v')$.
	If either $S(v)$ or $S(v')$ are maximal w.r.t. $\peq$, we follow a similar reasoning as in the previous two items in order to check $S'(v) \peq' S'(v')$.
	If neither $S(v)$ nor $S(v')$ are maximal then $S'(v)=S(v)$ and $S'(v') = S(v')$ so $S'(v) \peq' S'(v')$ by construction.
\end{enumerate}

In any case we conclude that, if $v \peq' v'$ then $S'(v) \peq S'(v')$ as requested.

For the backward confluence, let us take three arbitrary worlds $v, y,z \in W'$ satisfying $S'(v) = y \peq' z$.
We show that there exists $t \in  W'$ such $v \peq' t$ and $S'(t) = z$.
As for the forward case, we proceed by cases.

\begin{enumerate}
	\item If $ y = u_{i+1}$ then $z = u_{i+1}$ by construction.
	Since $S'(v)= u_{i+1}$ then take $t:=u_i$. By definition, $v\peq' u_i$ and $S'(u_i) = u_{i+1} = z$.
	\item If $ y \not= u_{i+1}$ and $z=u_{i+1}$ then take $t:= u_i$.
	$v\peq' t$ by construction and $S'(u_i) = u_{i+1} = z$.
	\item  If $ y \not= u_{i+1}$ and $z\not =u_{i+1}$ then $y,z \in W$, $y \peq z$ and $z$ is not maximal w.r.t. $\peq$.
			Since $y \not= u_{i+1}$ then $S'(v) = y\not = u_{i+1}$.
			Therefore, $v\in W$.
			By construction, $S(v) = y$.
			Because of the backward confluence property, there exist $x \in W$ such that
			$v\peq x$ and $S(x) = z$.
			The world $x$ cannot be maximal, otherwise $z$ would be maximal in view of Proposition~\ref{prop:maximal:next} and it would not belong to $W'$.
			By construction, $x \in W'$ and $S'(x) = z$.
			Set $t:=x$ so we would get $v \peq' x$ and $S'(x)= z$.
\end{enumerate}

As a consequence, $(W',\peq',S)$ is an \itlbd{n} frame.

	\begin{figure}[h!]\centering

		\resizebox{\textwidth}{!}{
			\begin{tikzpicture}
				[level distance=1.5cm,
				level 1/.style={sibling distance=1.5cm},
				level 2/.style={sibling distance=0.8cm}]
				\node (w1) {$w$}
				child[->] {node (v1) {$v_{0,0}$}
					child[->] {node (u1) {$u_{0,0}$} edge from parent node[right] {$\peq$}}
					child[->] {node (u2) {$u_{0,1}$} edge from parent node[right] {$\peq$}}
					edge from parent node[left,pos=.2] {$\peq$}
				}
				child[->] {node (v2) {$v_{0,1}$}
					child[->] {node (u3) {$u_{0,2}$} edge from parent node[left] {$\peq$}}
					edge from parent node[right,pos=.2] {$\peq$}
				};

				\node[right of=w1, node distance=3.5cm] (ww1) {$S^1(w)$}
				child[->] {node (vv1) {$v_{1,0}$}
					child[->] {node (uu1) {$u_{1,0}$} edge from parent node[left] {$\peq$}}
					edge from parent node[left,pos=.2] {$\peq$}
				}
				child[->] {node (vv2) {$v_{1,1}$}
					child[->] {node (uu2) {$u_{1,2}$} edge from parent node[right] {$\peq$}}
					edge from parent node[right,pos=.2] {$\peq$}
				};

				\node[right of=ww1, node distance=3.5cm] (www1) {$S^2(w)$}
				child[->] {node (vvv1) {$v_{2,0}$}
					child[->] {node (uuu1) {$u_{2,0}$} edge from parent node[left] {$\peq$}}
					edge from parent node[left,pos=.2] {$\peq$}
				}
				child[->] {node (vvv2) {$v_{2,1}$}
					child[->] {node (uuu2) {$u_{2,1}$} edge from parent node[right] {$\peq$}}
					edge from parent node[right,pos=.2] {$\peq$}
				};

				\node[right of=www1, node distance=3.5cm] (wwww1) {$\cdots$}
				child[->] {node (vvvv1) {$\cdots$} edge from parent[draw=none]
					child[->] {node (uuuu1) {$\cdots$} edge from parent[draw=none]}
				};

				\path[->]
				(w1)   edge  node[above] {$S$} (ww1)
				(ww1)  edge  node[above] {$S$} (www1)
				(www1) edge  node[above] {$S$} (wwww1)

				(v1)  edge[bend left]  node[above] {$S$} (vv1)
				(v2)  edge[bend left]  node[above] {$S$} (vv2)

				(vv1)  edge[bend left]  node[above] {$S$} (vvv2)
				(vv2)  edge[bend left]  node[above] {$S$} (vvv1)

				(vvv1)  edge[bend left]  node[above] {$S$} (vvvv1)
				(vvv2)  edge[bend left]  node[above] {$S$} (vvvv1)

				(u1)  edge[bend right]  node[below] {$S$} (uu1)
				(u2)  edge[bend right]  node[below] {$S$} (uu2)
				(u3)  edge[]  node[above] {$S$} (uu1)

				(uu1)  edge[bend right]  node[below] {$S$} (uuu1)
				(uu2)  edge[bend right]  node[below] {$S$} (uuu2)

				(uuu1)  edge[bend right]  node[below] {$S$} (uuuu1)
				(uuu2)  edge[bend right]  node[below] {$S$} (uuuu1)
				;

				\node[below of=w1, node distance=6cm] (wp1) {$w$}
				child[->,sibling distance=.8cm] { node (vp1) {$v_{0,0}$}  edge from parent node[left,pos=.2] {$\peq'$}}
				child[->]{ child {node (up1) {$u_0$} edge from parent[draw=none]}edge from parent[draw=none]}
				child[->,sibling distance=.8cm] { node (vp2) {$v_{0,1}$}  edge from parent node[right,pos=.2] {$\peq'$}};
				\draw[->] (vp1) --  node[left,pos=.8] {$\peq'$}(up1);
				\draw[->] (vp2) --  node[right,pos=.8] {$\peq'$}(up1);

				\node[right of=wp1, node distance=3.5cm] (wwp1) {$S^1(w)$}
				child[->,sibling distance=.8cm] { node (vvp1) {$v_{1,0}$}
					edge from parent node[left,pos=.2] {$\peq'$}
				}
				child[->]{ child {node (uup1) {$u_1$} edge from parent[draw=none]}edge from parent[draw=none]}
				child [->,sibling distance=.8cm] {node (vvp2) {$v_{1,1}$}
					edge from parent node[right,pos=.2] {$\peq'$}
				};
				\draw[->] (vvp1) -- node[left,pos=.8] {$\peq'$}(uup1);
				\draw[->] (vvp2) -- node[right,pos=.8] {$\peq'$}(uup1);

				\node[right of=wwp1, node distance=3.5cm] (wwwp1) {$S^1(w)$}
				child[->,sibling distance=.8cm] {node (vvvp1) {$v_{2,0}$}
					edge from parent node[left,pos=.2] {$\peq'$}
				}
				child[->]{ child[->] {node (uuup1) {$u_2$} edge from parent[draw=none]}edge from parent[draw=none]}
				child[->,sibling distance=.8cm] {node (vvvp2) {$v_{2,1}$}
					edge from parent node[right,pos=.2] {$\peq'$}
				};
				\draw[->] (vvvp1) -- node[left,pos=.8] {$\peq'$}(uuup1);
				\draw[->] (vvvp2) -- node[right,pos=.8] {$\peq'$}(uuup1);

				\node[right of=wwwp1, node distance=3.5cm] (wwwwp1) {$\cdots$}
				child {node (vvvvp1) {$\cdots$} edge from parent[draw=none]
					child {node (uuuup1) {$\cdots$} edge from parent[draw=none]}
				};

				\node[right of=vvvv1]{$\model$};
				\node[right of=vvvvp1]{$\model'$};

				\path[->]
				(wp1) edge[] node[above]{$S'$}(wwp1)
				(wwp1) edge[] node[above]{$S'$}(wwwp1)
				(wwwp1) edge[] node[above]{$S'$}(wwwwp1)

				(vp1) edge[bend left] node[above]{$S'$}(vvp1)
				(vvp1) edge[bend left] node[above]{$S'$}(vvvp1)
				(vvvp1) edge[bend left] node[above]{$S'$}(vvvvp1)

				(vp2) edge[bend right] node[above]{$S'$}(vvp2)
				(vvp2) edge[bend right] node[above]{$S'$}(vvvp2)
				(vvvp2) edge[bend right] node[above]{$S'$}(vvvvp1)

				(up1) edge[] node[above]{$S'$}(uup1)
				(uup1) edge[] node[above]{$S'$}(uuup1)
				(uuup1) edge[] node[above]{$S'$}(uuuup1)
				;

				\path[dashed,color=red]
				(w1) edge[] (wp1)
				(ww1) edge[] (wwp1)
				(www1) edge[] (wwwp1)

				(v1) edge[bend right] (vp1)
				(vv1) edge[bend right] (vvp1)
				(vvv1) edge[bend right] (vvvp1)

				(v2) edge[bend left] (vp2)
				(vv2) edge[bend left] (vvp2)
				(vvv2) edge[bend left] (vvvp2)

				(u1) edge[] (up1)
				(u2) edge[] (up1)
				(u3) edge[] (up1)
				(uu1) edge[] (uup1)
				(uu2) edge[] (uup1)
				(uuu1) edge[] (uuup1)
				(uuu2) edge[] (uuup1)
				;
			\end{tikzpicture}
		}
		\caption{Two $\itlbd{n}$ models $\model=\tuple{(W,\peq,S),V}$ and $\model'=\tuple{(W',\peq',S'),V'}$. Under the assumption that $\model, w \models \lbrace \alwaysF(\neg p \vee \neg \neg p) \mid p \in \PV \rbrace$,
		for all $i\ge 0$, all maximal worlds in $\peq(S^i(w))$ satisfy the same atoms.
		By setting $V'(v):=V(v)$ for every world $v\in W\cap W'$ and,  for all $i \ge 0$, $V'(u_i):=V(x)$, with $x$ a maximal world in $\peq(S^i(w))$, it can be verified that the relation $\bisim$ displayed in terms of red dashed lines is a bisimulation between $\model$ and $\model'$. The reflexivity and transitivity of $\peq$ and $\peq'$ is not represented for the sake of readability.}
		\label{fig:same:maximal}
	\end{figure}

Let us define $V'$ as $V'(v):= V(v)$ if $v \not = u_i$ and $V'(u_i) := V(x)$ with $x$ being any maximal world in $\peq(S^i(w))$\footnote{Note that, since $\model$ is a \itlbd{n}  model, such $x$ always exists. Moreover, since by assumption $\model, S^i(w) \models \lbrace \neg p \vee \neg \neg p \mid p \in \PV \rbrace$, all $\peq$-maximal worlds in $\peq(S^i(w))$ satisfy the same propositions.}, for all $i \ge 0$.
We define the relation $\bisim\subseteq W\times W'$ as follows:
\begin{eqnarray*}
	\bisim & := & \lbrace (v,v) \mid v\in W \hbox{ and } v \in W'\rbrace \\
	&& \cup \lbrace (v,u_i) \mid v \in W, S^i(w)\peq v \hbox{ and } v \hbox{ is maximal w.r.t.} \peq\rbrace.
\end{eqnarray*}
\noindent It can be checked that $\bisim$ is a bisimulation.
Figure~\ref{fig:same:maximal} provides an example of an \itlbd{n} model $\model$ and its bisimilar model $\model'$ owning an unique $\peq'$-maximal world $u_i$ per time instant.
\end{proof}

\begin{lemma}[Contraction lemma for \itlbd{n}]\label{lem:contraction:itl}Let $\model = \tuple{(W,\peq, S),V}$ be an \itlbd{n} model and let $w\in W$.
	If for all $i\ge0$
	\begin{enumerate}
		\item \label{contraction:itl:cond0} there exists an unique maximal world, denoted by $u_i$, in $\peq(S^i(w))$ and
		\item \label{contraction:itl:cond1} $V(v) = T_i \subseteq \PV$ for all $S^i(w) \prec v$
	\end{enumerate}
        then there exists a \THT\ model
	$\model'=\tuple{(\mathbb{N}\times\lbrace 0,1\rbrace,\peq',S'),V'}$ and a bisimulation $\bisim \subseteq W \times (\mathbb{N}\times\lbrace 0,1\rbrace)$
	such that, for all $i\ge 0$, $S^i(w) \bisim S'^i((0,0))$ and $S^i(u_0) \bisim S'^i((0,1))$.
\end{lemma}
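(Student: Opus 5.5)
The plan is to mimic the proof of Lemma~\ref{lem:contraction:il} time point by time point. First define the \THT\ valuation by $V'((i,0)) := V(S^i(w))$ and $V'((i,1)) := V(u_i)$. Monotonicity of $V'$ is immediate because $S^i(w) \peq u_i$.

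Before defining the bisimulation, record how the hypotheses interact with $S$.
\begin{itemize}
\item By forward confluence, $S^i(w) \peq u_i$ gives $S^{i+1}(w) \peq S(u_i)$.
\item By Proposition~\ref{prop:maximal:next}, $S(u_i)$ is maximal. Uniqueness of the maximal world in $\peq(S^{i+1}(w))$ then yields $S(u_i) = u_{i+1}$, so $S^i(u_0) = u_i$. This already matches the required pairing $S^i(u_0) \bisim S'^i((0,1)) = (i,1)$.
\item Every $v$ with $S^i(w) \prec v$ satisfies $S^{i+1}(w) \peq S(v)$ by forward confluence.
\item Condition~\ref{contraction:itl:cond1} together with $u_i \in \prec(S^i(w))$ (when $u_i \neq S^i(w)$) gives $V(v) = T_i = V(u_i)$. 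If $S^i(w)$ is itself maximal, then $\prec(S^i(w))=\emptyset$ and $S^i(w) = u_i$.
\end{itemize}

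Next define
\[
\bisim := \lbrace (S^i(w),(i,0)) \mid i \ge 0\rbrace \cup \lbrace (v,(i,1)) \mid i\ge 0,\ S^i(w) \prec v \rbrace \cup \lbrace (u_i,(i,1)) \mid i \ge 0 \rbrace .
\]
The properties are checked as follows.
\begin{itemize}
\item \ref{cond:atoms} follows from the observation above.
\item \ref{cond:forth:imp} and~\ref{cond:back:imp} hold exactly as in Lemma~\ref{lem:contraction:il}: a successor of $S^i(w)$ is either itself or in $\prec(S^i(w))$, and a successor of $v \in \prec(S^i(w))$ is again in $\prec(S^i(w))$.
\item For~\ref{cond:next}: $S(S^i(w)) = S^{i+1}(w)$ is related to $(i+1,0)$. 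For $v \succ S^i(w)$, the world $S(v)$ lies in $\peq(S^{i+1}(w))$. If it equals $S^{i+1}(w)$, that world is related to $(i+1,1)$ only if $S^{i+1}(w)=u_{i+1}$ or its valuation coincides.
\end{itemize}
This last point is the delicate part. When $S(v) = S^{i+1}(w) \neq u_{i+1}$, I would try to use backward confluence to show this case cannot arise, or else enlarge $\bisim$ with pairs $(S^{i+1}(w),(i+1,1))$ whenever some $v \succ S^i(w)$ maps onto it. The latter must then be checked not to break~\ref{cond:back:imp}, which is harmless since $(i+1,1)$ only sees itself, and~\ref{cond:forth:imp}, which holds since every world above $S^{i+1}(w)$ then has valuation $T_{i+1}$. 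The valuation agreement also needs checking, and I would derive it from monotonicity: $V(S^{i+1}(w)) \supseteq V(S(v))$ trivially holds here since the two worlds are equal.

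The main obstacle is verifying the until/release conditions~\ref{cond:until:forth}--\ref{cond:release:back}. Here the strategy is to use the regular shape of the \THT\ side. For~\ref{cond:until:forth} from a pair $(x,(i,h))$ with $x = S^i(w)$ or $x\succ S^i(w)$, given $k_1$, take $k_2 := k_1$. For the witness $v_1$, take $S^{k_1}(x)$ and pair it with $(i+k_1,h')$, where $h'$ is $0$ if $S^{k_1}(x)=S^{i+k_1}(w)$ and $1$ otherwise. The intermediate indices $j_2 < k_2$ are handled with $j_1 = j_2$ and the same choice.

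The key invariant, proved by induction on $k$ using~\ref{cond:next}, is that $S^k(x)$ is always $\bisim$-related to $(i+k,0)$ or $(i+k,1)$. This makes the ``sloppy'' $\peq$-conditions satisfiable with trivial reflexive steps. The back conditions use the same pairing in reverse, choosing $k_1 = k_2$ and taking $v_1 := S^{k_2}(x)$ together with the reflexive $\peq$. So once the \ref{cond:next}-invariant is established, all four binary-modality conditions reduce to it. The proof finishes by noting $S^i(w)\bisim(i,0)=S'^i((0,0))$ and $u_i = S^i(u_0) \bisim (i,1)$.
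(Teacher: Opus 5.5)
There is a genuine gap, and it is the case of \ref{cond:next} you flagged yourself: some $v \succ S^i(w)$ with $S(v) = S^{i+1}(w) \neq u_{i+1}$. Backward confluence does not rule this case out. Adding $(S^{i+1}(w),(i+1,1))$ to $\bisim$ requires $V(S^{i+1}(w)) = V(u_{i+1})$ by \ref{cond:atoms}, and via \ref{cond:next} also $V(S^k(w)) = V(u_k)$ for every $k > i$. Your monotonicity argument compares $S^{i+1}(w)$ with itself and yields nothing. Hypothesis 2 only constrains worlds strictly above $S^{i+1}(w)$, and the hypothesis on $v$ constrains $V(v)$, not $V(S(v))$.

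No argument can close this gap, because the lemma is false as stated. Take worlds $w \prec v \prec u_0$ and $x_k \prec y_k$ for $k \ge 1$, with
\begin{itemize}
\item $S(w) = S(v) = x_1$, $S(u_0) = y_1$, $S(x_k) = x_{k+1}$, $S(y_k) = y_{k+1}$;
\item $V(w) = V(x_k) = \emptyset$, $V(v) = V(u_0) = \lbrace q \rbrace$, $V(y_k) = \lbrace p \rbrace$.
\end{itemize}
Both confluence conditions hold and the depth is $3$, so this is an \itlbd{3} model. Both hypotheses hold at $w$, with $T_0 = \lbrace q \rbrace$ and $T_k = \lbrace p \rbrace$ for $k \ge 1$. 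Yet any bisimulation with $w \bisim (0,0)$ must relate both $v$ and $u_0$ to $(0,1)$, by \ref{cond:atoms} and \ref{cond:forth:imp}, since their valuation differs from $V(w)$. Then \ref{cond:next} gives $x_1 \bisim (1,1)$ and $y_1 \bisim (1,1)$, forcing $\emptyset = \lbrace p \rbrace$. Equivalently, $w$ refutes the \THT-valid formula $(\neg\neg q \wedge \neg\neg\next p) \to (q \vee (q \to \next p))$.

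The paper's proof uses your relation without the pairs $(u_i,(i,1))$ and says only that it ``can be checked'', so it has the same hole. You are right to add those pairs: without them \ref{cond:back:imp} fails whenever $S^i(w)$ is maximal.

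The repair is to strengthen hypothesis 2 to: $V(S^k(x)) = T_k$ for all $x \succ w$ and all $k \ge 0$. This is what actually holds where the lemma is used, in Lemma~\ref{prop:itl:intermediate2}. There every $x \succ v'$ satisfies $T$ by the induction hypothesis, and satisfies $\lbrace \next^k \neg p \mid \next^k p \notin T \rbrace$ by monotonicity.

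Under the stronger hypothesis, define $U_i := \lbrace S^i(x) \mid x \succ w \rbrace \cup \lbrace u_i \rbrace$. This set has the properties you need:
\begin{itemize}
\item it contains $\prec(S^i(w))$, by iterated backward confluence;
\item it is upward closed;
\item $S$ maps it into $U_{i+1}$, using your observation $S(u_i) = u_{i+1}$;
\item its valuation is constantly $V(u_i)$.
\end{itemize}
Then $\lbrace (S^i(w),(i,0)) \rbrace \cup \lbrace (z,(i,1)) \mid z \in U_i \rbrace$ satisfies \ref{cond:atoms}--\ref{cond:back:imp} and \ref{cond:next}. Your reduction of \ref{cond:until:forth}--\ref{cond:release:back} to iterated \ref{cond:next} then goes through, with one correction. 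The invariant must be $S^k(x) \bisim S'^k((i,h)) = (i+k,h)$ with $h$ preserved, not merely ``related to $(i+k,0)$ or $(i+k,1)$''. The back conditions force $v_2 \succcurlyeq (i+k,h)$.
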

\begin{proof}
	Let us define the \THT\ model $\model'=\tuple{(\mathbb{N}\times\lbrace 0,1\rbrace,\peq',S'),V'}$
	where $V'$ is defined as $V'((i,0)) := V(S^i(w))$ and $V'((i,1)) := V(u_i)$, for all $i \ge 0$\footnote{All $u_i$s exist because of Condition~\ref{contraction:itl:cond0}.}.
	Let us define now the relation $\bisim\subseteq W \times (\mathbb{N}\times\lbrace 0,1\rbrace)$ as
		\begin{equation}
			\bisim := \lbrace (S^i(w),(i,0)), (v, (i,1)) \mid i \ge 0 \hbox{ and }  v \in \prec(S^i(w)) \rbrace.
		\end{equation}

\begin{figure}[h!]\centering
	\resizebox{\textwidth}{!}{
		\begin{tikzpicture}
			[level distance=1.5cm,
			level 1/.style={sibling distance=1.5cm},
			level 2/.style={sibling distance=0.8cm}]

			\node (wp1) {$w_{0}$}
			child[->,sibling distance=.8cm] { node (vp1) {$v_{0,0}$}  edge from parent node[left,pos=.2] {$\peq$}}
			child[->]{ child {node (up1) {$u_0$} edge from parent[draw=none]}edge from parent[draw=none]}
			child[->,sibling distance=.8cm] { node (vp2) {$v_{0,1}$}  edge from parent node[right,pos=.2] {$\peq$}};
			\draw[->] (vp1) --  node[left,pos=.8] {$\peq$}(up1);
			\draw[->] (vp2) --  node[right,pos=.8] {$\peq$}(up1);

			\node[right of=wp1, node distance=3.5cm] (wwp1) {$w_{1}$}
			child[->,sibling distance=.8cm] { node (vvp1) {$v_{1,0}$}
				edge from parent node[left,pos=.2] {$\peq$}
			}
			child[->]{ child {node (uup1) {$u_1$} edge from parent[draw=none]}edge from parent[draw=none]}
			child [->,sibling distance=.8cm] {node (vvp2) {$v_{1,1}$}
				edge from parent node[right,pos=.2] {$\peq$}
			};
			\draw[->] (vvp1) -- node[left,pos=.8] {$\peq$}(uup1);
			\draw[->] (vvp2) -- node[right,pos=.8] {$\peq$}(uup1);

			\node[right of=wwp1, node distance=3.5cm] (wwwp1) {$w_{2}$}
			child[->,sibling distance=.8cm] {node (vvvp1) {$v_{2,0}$}
				edge from parent node[left,pos=.2] {$\peq$}
			}
			child[->]{ child[->] {node (uuup1) {$u_2$} edge from parent[draw=none]}edge from parent[draw=none]}
			child[->,sibling distance=.8cm] {node (vvvp2) {$v_{2,1}$}
				edge from parent node[right,pos=.2] {$\peq$}
			};
			\draw[->] (vvvp1) -- node[left,pos=.8] {$\peq$}(uuup1);
			\draw[->] (vvvp2) -- node[right,pos=.8] {$\peq$}(uuup1);

			\node[right of=wwwp1, node distance=3.5cm] (wwwwp1) {$\cdots$}
			child {node (vvvvp1) {$\cdots$} edge from parent[draw=none]
				child {node (uuuup1) {$\cdots$} edge from parent[draw=none]}
			};

			\path[->]
			(wp1) edge[] node[above]{$S$}(wwp1)
			(wwp1) edge[] node[above]{$S$}(wwwp1)
			(wwwp1) edge[] node[above]{$S$}(wwwwp1)

			(vp1) edge[bend left] node[above]{$S$}(vvp1)
			(vvp1) edge[bend left] node[above]{$S$}(vvvp1)
			(vvvp1) edge[bend left] node[above]{$S$}(vvvvp1)

			(vp2) edge[bend right] node[above]{$S$}(vvp2)
			(vvp2) edge[bend right] node[above]{$S$}(vvvp2)
			(vvvp2) edge[bend right] node[above]{$S$}(vvvvp1)

			(up1) edge[] node[above]{$S$}(uup1)
			(uup1) edge[] node[above]{$S$}(uuup1)
			(uuup1) edge[] node[above]{$S$}(uuuup1)
			;

			\node[below of=up1,node distance=3cm] (wq1) {$(0,0)$}
				child[->]{node[] (uq1) {$(0,1)$} 	edge from parent node[left,pos=.2] {$\peq'$}};

			\node[right of=wq1, node distance=3.5cm] (wwq1) {$(1,0)$}
				child[->] {node (uuq1) {$(1,1)$} 	edge from parent node[left,pos=.2] {$\peq'$}};

			\node[right of=wwq1, node distance=3.5cm] (wwwq1) {$(2,0)$}

			child[->] {node (uuuq1) {$(2,1)$} 	edge from parent node[left,pos=.2] {$\peq'$}};

			\node[right of=wwwq1, node distance=3.5cm] (wwwwq1) {$\cdots$}
			child {node (uuuuq1) {$\cdots$} edge from parent[draw=none]};

			\node[right of=vvvvp1]{$\model$};
			\node[right of=wwwwq1]{$\model'$};

			\path[->]
				(wq1) edge[] node[above]{$S'$}(wwq1)
				(wwq1) edge[] node[above]{$S'$}(wwwq1)
				(wwwq1) edge[] node[above]{$S'$}(wwwwq1)

				(uq1) edge[] node[above]{$S'$}(uuq1)
				(uuq1) edge[] node[above]{$S'$}(uuuq1)
				(uuuq1) edge[] node[above]{$S'$}(uuuuq1);

			\path[dashed,color=red]
			(wq1)   edge[] (wp1)
			(wwq1)  edge[] (wwp1)
			(wwwq1) edge[] (wwwp1)

			(uq1)   edge[bend left] (vp1)
			(uuq1)  edge[bend left] (vvp1)
			(uuuq1) edge[bend left] (vvvp1)

			(uq1)   edge[bend right] (vp2)
			(uuq1)  edge[bend right] (vvp2)
			(uuuq1) edge[bend right] (vvvp2)

			(uq1) edge[bend right] (up1)
			(uuq1) edge[bend right] (uup1)
			(uuuq1) edge[bend right] (uuup1);
		\end{tikzpicture}
	}
	\caption{Example of model contraction. $\model=\tuple{(W,\peq,S),V}$ is an \itlbd{n} model and
		$\model'=\tuple{(\mathbb{N}\times\lbrace 0,1\rbrace,\peq',S'),V'}$ is a \THT\ model.
		Under the assumption that every world $v\in \prec(S^i(w))$ satisfies exactly the same set of propositional variables,
		we can set $V'((i,0)):= V(w_i)$ and $V'((i,1)) := V(u_i)$, for all $i \ge 0$.
		The relation $\bisim$, displayed in red dashed lines, is a bisimulation between $\model$ and $\model'$.
		The reflexivity and transitivity of $\peq'$ and $\peq$ is not represented for the sake of readability.}
	\label{fig:contraction:itl}
\end{figure}

It can be checked that $\bisim$ is an intuitionistic temporal bisimulation between $\model$ and $\model'$.
The condition for the propositional variables is satisfied because of Condition~\ref{contraction:itl:cond1}.
The other conditions can be easily checked.
Figure~\ref{fig:contraction:itl} shows an example of how a bisimulation between an \itlbd{n} model $\model$ and a \THT\ model $\model'$, which can be constructed whenever $\model$ satisfies the preconditions~\ref{contraction:itl:cond0} and~\ref{contraction:itl:cond1} stated in this lemma.
\end{proof}

\subsection{Temporal Equilibrium Logic}\label{sec:tel}

Given two \THT\ models $\model' = \tuple{(W,\peq,S),V'}$ and $\model = \tuple{(W,\peq,S),V}$,
we define $\model' \le \model$ if $V'((i,1)) = V((i,1))$ and $V'((i,0)) \subseteq V((i,0))$ for all $i \ge 0$,
and $\model' = \model$ if $V'((i,x)) = V((i,x))$ for all $i\ge 0$ and $x \in \lbrace 0,1\rbrace$.
Strict inequality $\model' < \model$ is defined as $\model' \le \model$ and $\model' \neq \model$.
Finally, we also say that $\model$ is \emph{total} if $V((i,0))=V((i,1))$, for all $i \ge 0$.

The following result is a corollary of Proposition~\ref{PropIntCond}.
\begin{corollary}[Satisfaction of negation]\label{cor:negation}
For any \THT\ model $\model = \tuple{(W,\peq,S),V}$, for any $i \ge 0$ and for all $\varphi \in \langT$,  $\model,(i,0) \models \neg \varphi$ iff $\model, (i,1) \not \models \varphi$
\end{corollary}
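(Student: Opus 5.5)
We unfold the definition of $\neg\varphi$ as $\varphi\to\bot$ and evaluate it at the world $(i,0)$ of a \THT\ model. In a \THT\ frame, ${\peq}((i,0)) = \lbrace (i,0),(i,1)\rbrace$. By the clause for implication, $\model,(i,0)\models\varphi\to\bot$ holds iff neither $(i,0)$ nor $(i,1)$ satisfies $\varphi$, because no world satisfies $\bot$. So the statement reduces to showing that this condition is equivalent to $\model,(i,1)\not\models\varphi$.

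For the left-to-right direction there is nothing beyond this unfolding. If $\model,(i,0)\models\neg\varphi$, then in particular $(i,1)$, being a $\peq$-successor of $(i,0)$, does not satisfy $\varphi$.

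For the right-to-left direction, assume $\model,(i,1)\not\models\varphi$. We must still show $\model,(i,0)\not\models\varphi$, and this is where Proposition~\ref{PropIntCond} (monotonicity) is invoked. A \THT\ frame is an intuitionistic temporal frame: $S$ maps $(i,h)$ to $(i+1,h)$, and forward confluence holds since $(i,0)\peq(i,1)$ implies $(i+1,0)\peq(i+1,1)$. Suppose for contradiction that $\model,(i,0)\models\varphi$. Since $(i,0)\peq(i,1)$, monotonicity gives $\model,(i,1)\models\varphi$, a contradiction. Hence neither world in ${\peq}((i,0))$ satisfies $\varphi$, and therefore $\model,(i,0)\models\neg\varphi$.

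The only step needing care is confirming that Proposition~\ref{PropIntCond} applies, i.e., that \THT\ frames satisfy forward confluence so that monotonicity holds for arbitrary temporal formulas, including $\until$ and $\release$. We checked this directly above; everything else is immediate from the semantics.
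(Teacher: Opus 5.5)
Your proof is correct and takes the paper's route: the paper gives no separate argument and presents the result as a direct corollary of the monotonicity Proposition~\ref{PropIntCond}. Your argument makes that derivation explicit, unfolding $\neg\varphi$ over ${\peq}((i,0))=\lbrace (i,0),(i,1)\rbrace$ and using monotonicity to exclude $\model,(i,0)\models\varphi$.
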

\begin{definition}
  We say that a total \THT\ model $\model=\tuple{(W,\peq,S),V}$ is an equilibrium logic of a temporal formula $\varphi$ if
  \begin{enumerate}
  \item $\model, (0,0) \models \varphi$ and
  \item there is no \THT\ model $\model'$ such that $\model' < \model$ and $\model', (0,0) \models \varphi$.
  \end{enumerate}
\end{definition}
\emph{Temporal Equilibrium Logic} (\TEL\ for short) is the nonmonotonic logic induced by the temporal equilibrium models.
 \section{Two Fixpoint Characterisations of Temporal Equilibrium Logic}\label{sec:main-contribution}
In this section,
we extend the fixpoint characterizations presented in Section~\ref{sec:fixpoint} to the temporal case.
In order to extend Pearce's characterization to the \TEL\ case, we need to reformulate some of his definitions.
In this section, given a \THT\ model $\model$, we \emph{redefine}
\begin{displaymath}
  \Th{\model} := \lbrace \varphi \mid  \model, (0,0) \Inf{\THT} \varphi \rbrace.
\end{displaymath}

\begin{proposition}\label{prop:total}
  Let $\model= \tuple{(W,\peq, S), V}$ be a temporal equilibrium model of $\Gamma$.
  For every \THT\  model $\model'=\tuple{(W,\peq,S),V'}$,
  if $\model', (0,0) \models \Gamma \cup \lbrace \neg \varphi \mid \varphi \not \in \Th{\model}\rbrace$,
  then $V((i,1))=V'((i,1))=V'((i,0))$ for all $i \ge 0$.
\end{proposition}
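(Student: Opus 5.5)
The proof splits into two parts: first show that the ``there'' worlds of $\model'$ coincide with those of $\model$, and then use minimality of $\model$ to collapse the ``here'' worlds onto them.

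\textbf{Step 1: $V'((i,1)) = V((i,1))$.} Since $\model$ is total, $V((i,0)) = V((i,1))$ for all $i$. Fix $i \ge 0$ and an atom $p$, and consider the formula $\next^i p$.

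If $p \notin V((i,1))$, then $\model,(0,0) \not\models \next^i p$. Hence $\next^i p \notin \Th{\model}$, so $\neg\next^i p$ is among the hypotheses satisfied by $\model'$ at $(0,0)$. Corollary~\ref{cor:negation} then yields $\model',(0,1) \not\models \next^i p$, that is, $p \notin V'((i,1))$.

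If $p \in V((i,1))$, then $\next^i p \in \Th{\model}$. Here a negation of a non-member is needed instead: take $\varphi := \neg\next^i p$. Because $\model$ is total, we have $\model,(0,0) \not\models \neg \next^i p$, so $\neg\next^i p \notin \Th{\model}$. Hence $\model',(0,0) \models \neg\neg\next^i p$, and Corollary~\ref{cor:negation} (applied twice, using that $\model',(0,1)$ sees only itself and its successors) gives $\model',(0,1) \models \next^i p$, i.e.\ $p \in V'((i,1))$.

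Together these give $V'((i,1)) = V((i,1))$ for all $i$.

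\textbf{Step 2: totality of $\model'$.} Consider the model $\model'' := \tuple{(W,\peq,S),V''}$ with $V''((i,1)) := V((i,1))$ and $V''((i,0)) := V'((i,0)) \cap V((i,0))$. By Step 1 this agrees with $V'$ on the there-worlds, and monotonicity gives $V'((i,0)) \subseteq V'((i,1)) = V((i,1)) = V((i,0))$. So in fact $V'' = V'$, and therefore $\model' \le \model$.

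Since $\model',(0,0) \models \Gamma$, minimality of $\model$ forbids $\model' < \model$. Hence $\model' = \model$, which gives $V'((i,0)) = V((i,0)) = V((i,1)) = V'((i,1))$, as required.

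\textbf{Expected difficulty.} The main care point is the positive case of Step 1. One must check that the satisfaction of $\neg\neg\next^i p$ at $(0,0)$ really transfers to $p \in V'((i,1))$. This uses the structure of \THT\ frames: $(0,0) \peq (0,1)$, $S'^i((0,1)) = (i,1)$, and the only $\peq$-successor of $(0,1)$ is itself. With that, $\model',(0,1) \models \neg\neg\next^i p$ reduces classically to $\model',(0,1) \models \next^i p$.
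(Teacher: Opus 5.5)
Your proof is correct and follows essentially the paper's argument. Both establish the two inclusions between $V'((i,1))$ and $V((i,1))$ via the formulas $\neg\next^i p$ and $\neg\neg\next^i p$ (which, by totality of $\model$, lie in $\{\neg\varphi \mid \varphi\notin\Th{\model}\}$ in the appropriate cases), and then use $\le$-minimality of $\model$ to force $V'((i,0)) = V'((i,1))$; the paper merely phrases it by contradiction. The auxiliary model $\model''$ in your Step 2 is superfluous, since, as you observe, it coincides with $\model'$, so you can conclude $\model' \le \model$ directly.
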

\begin{proof} Assume towards a contradiction that $\model', (0,0) \models \Gamma \cup \lbrace \neg \varphi \mid \varphi \not \in \Th{\model}\rbrace$ but there exists $i\ge 0$ such that not $V((i,1))=V'((i,1))=V'((i,0))$.
	We first consider the case where $V((i,1)) \not= V'((i,1))$. There are two cases:

	\begin{itemize}
		\item If  $V((i,1)) \not \subseteq V'((i,1))$, there exists some $p \in V((i,1))$ such that $p \not \in V'((i,1))$.
		Since $p \in V((i,1))$, then $\model, (0,1)\models \next^i p$. Since $\model$ is a total model, it follows that  $\model, (0,0)\models \next^i p $ and $\model, (0,0)\not \models \neg \next^i p $.
		Therefore, $\neg \next^i p \not \in \Th{\model}$.
		Since $\model', (0,0) \models \Gamma \cup \lbrace \neg \varphi \mid \varphi \not \in \Th{\model}\rbrace$, then $\model',(0,0)\models \neg \neg \next^i p$. By Proposition~\ref{PropIntCond}, $\model', (0,1) \models \neg \neg \next^i p$. Since the world $(0,1)$ is a classical world, $\model', (0,1) \models \next^i p$ so $p \in V'((i,1))$: a contradiction.

		\item If $V((i,1)) \not\supseteq V'((i,1))$, there exists some $p \in V'((i,1))$ such that $p \not \in V((i,1))$.
		Since $p \not \in V((i,1))$, then $\model, (0,1) \not \models \next^i p$.
		By Proposition~\ref{PropIntCond}, $\model,(0,0) \not \models \next^i p$.
		Therefore, $\next^i p \not \in \Th{\model}$.
		Since $\model', (0,0) \models \Gamma \cup \lbrace \neg \varphi \mid \varphi \not \in \Th{\model} \rbrace$ then $\model',(0,0)\models \neg \next^i p$.
		By the satisfaction relation it follows that $\model', (0,1) \not \models \next^i p$, so $p \not \in V'((i,1))$: a contradiction.
	\end{itemize}

	Therefore, we can assume that $V((i,1)) = V'((i,1))$, for all $i \ge 0$.
	For the case, $V'((i,0)) \not = V'((i,1))$, we can conclude that $V'((i,0)) \subset V'((i,1))$.
	Therefore, $\model' < \model$. Since $\model$ is a temporal equilibrium model of $\Gamma$ then, $\model', (0,0)\not \models \Gamma$, so $\model',(0,0) \not \models  \Gamma \cup \lbrace \neg \varphi \mid \varphi \not \in \Th{\model}\rbrace$: a contradiction.
\end{proof}

In the temporal case,
we can obtain the same result by replacing \HT\ for \THT\ as underlying logic, as stated in the following proposition.
\begin{lemma}\label{lemma:fixpoint}
  For any theory $\Gamma$ and any total \THT\ model $\model$, the following items are equivalent:
  \begin{enumerate}[label=\arabic*)]
  \item\label{lemma:fixpoint:1} $\model$ is a temporal equilibrium model of $\Gamma$
  \item\label{lemma:fixpoint:2} $\Gamma \cup \lbrace \neg \varphi \mid \varphi \not \in \Th{\model}\rbrace \Inf{\THT} \varphi $ iff $\varphi \in \Th{\model}$ for all $\varphi \in \langT$.
  \end{enumerate}
\end{lemma}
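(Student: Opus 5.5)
We prove the two directions separately. Throughout, write $\Gamma^* := \Gamma \cup \lbrace \neg \varphi \mid \varphi \notin \Th{\model}\rbrace$ and $\model = \tuple{(W,\peq,S),V}$ with $\model$ total.

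\emph{From \ref{lemma:fixpoint:1} to \ref{lemma:fixpoint:2}.} Assume $\model$ is a temporal equilibrium model of $\Gamma$.

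For the right-to-left inclusion, take $\varphi \in \Th{\model}$, an arbitrary \THT\ model $\model'$, and a world $(j,x)$ with $\model',(j,x) \models \Gamma^*$. We must show $\model',(j,x)\models\varphi$.

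\begin{itemize}
\item The case of interest is $(j,x)=(0,0)$. There Proposition~\ref{prop:total} yields $V'((i,0)) = V'((i,1)) = V((i,1)) = V((i,0))$ for all $i$, using totality of $\model$. Hence $V' = V$, so $\model' = \model$ and $\varphi$ holds at $(0,0)$.
\item For other worlds the claim needs care, because local consequence quantifies over all worlds. The intended reading of $\Inf{\THT}$ here is evaluation at $(0,0)$, matching $\Th{\model}$. If the truly local reading is required, I would shift $\model'$ to start at time $j$ and, when $x=1$, collapse it to its total "there" part. Proposition~\ref{prop:total} applies because that argument only uses the evaluation point. I expect this bookkeeping to be the main technical nuisance.
\end{itemize}

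For the left-to-right inclusion, suppose $\varphi \notin \Th{\model}$. Then $\neg\varphi \in \Gamma^*$. Also $\model,(0,0)\models\Gamma^*$, since $\model \models \Gamma$ and, for each $\psi \notin \Th{\model}$, totality together with Corollary~\ref{cor:negation} gives $\model,(0,0)\models\neg\psi$. As $\model,(0,0)\not\models\varphi$, $\model$ is a countermodel, so $\Gamma^* \not\Inf{\THT}\varphi$.

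\emph{From \ref{lemma:fixpoint:2} to \ref{lemma:fixpoint:1}.}

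\begin{itemize}
\item \emph{$\model$ satisfies $\Gamma$.} Every $\gamma\in\Gamma$ is trivially entailed by $\Gamma^*$, so by \ref{lemma:fixpoint:2} we get $\gamma\in\Th{\model}$. Hence $\model,(0,0)\models\Gamma$.
\item \emph{Minimality.} Suppose $\model' < \model$ with $\model',(0,0)\models\Gamma$. I claim $\model',(0,0)\models\Gamma^*$. Take $\psi\notin\Th{\model}$; since $\model$ is total, $\model,(0,1)\not\models\psi$. The \emph{there} worlds of $\model'$ carry the same valuation as those of $\model$, and they form a classical LTL trace on their own, so $\model',(0,1)\not\models\psi$. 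By Corollary~\ref{cor:negation}, $\model',(0,0)\models\neg\psi$.
\item \emph{Contradiction.} Since $\model' \neq \model$, there is some $i$ and atom $p$ with $p\in V((i,0))\setminus V'((i,0))$. Then $\next^i p\in\Th{\model}$, so by \ref{lemma:fixpoint:2} $\Gamma^*\Inf{\THT}\next^i p$. But $\model',(0,0)\models\Gamma^*$ while $\model',(0,0)\not\models\next^i p$, a contradiction.
\end{itemize}

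The main obstacle is the first direction's right-to-left inclusion, specifically making Proposition~\ref{prop:total} cover entailment at arbitrary worlds, which is handled by the shifting argument above.
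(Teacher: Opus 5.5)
Your proof is correct and follows essentially the same route as the paper. Proposition~\ref{prop:total} forces $V'=V$ for the entailment direction, $\model$ itself is the countermodel for the converse, and minimality is refuted through some $\next^i p$ once the negations in $\Gamma^*$ are transferred to $\model'$ via the shared \emph{there} worlds. Your shifting/collapsing remark for countermodels at worlds other than $(0,0)$ makes explicit a step the paper leaves implicit, and deriving $\model,(0,0)\models\Gamma$ directly from the trivial entailment $\Gamma^*\Inf{\THT}\gamma$ is a cleaner version of the paper's argument.
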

\begin{proof}
	To prove that Item~\ref{lemma:fixpoint:1} implies Item~\ref{lemma:fixpoint:2} we assume that Item~\ref{lemma:fixpoint:1} holds but~\ref{lemma:fixpoint:2} does not.
	Then, $\model$ is a temporal equilibrium model of $\Gamma$ but there exists a formula $\varphi \in \langT$ for which one of the following two cases hold:

	\begin{itemize}
		\item $\Gamma \cup \lbrace \neg \varphi \mid \varphi \not \in \Th{\model}\rbrace \Inf{\THT} \varphi$ but $\varphi \not \in \Th{\model}$: in this case, since $\model$ is a temporal equilibrium model of $\Gamma$ then $\model$ is total and, in addition, $\model,(0,0) \models \Gamma$. We can easily check that $\model, (0,0) \models \lbrace \neg \varphi \mid \varphi \not \in \Th{\model}\rbrace$.
		Therefore, $\model,(0,0) \models \varphi$ which contradicts  $\varphi \not \in \Th{\model}$.

		\item $\varphi \in \Th{\model}$ but $\Gamma \cup \lbrace \neg \varphi \mid \varphi \not \in \Th{\model}\rbrace \not \Inf{\THT} \varphi$: in this case, there exists $\model' = \tuple{(W,\peq,S),V'}$ such that $\model', (0,0) \models \Gamma \cup \lbrace \neg \varphi \mid \varphi \not \in \Th{\model}\rbrace$ but $\model', (0,0) \not \models \varphi$.
		From  $\model', (0,0) \models \Gamma \cup \lbrace \neg \varphi \mid \varphi \not \in \Th{\model}\rbrace$ and Proposition~\ref{prop:total} it follows $V' = V$. Therefore, $\model, (0,0) \not \models \varphi$, which means that $\varphi \not \in \Th{\model}$: a contradiction.
  \end{itemize}

	For the converse direction, let us assume towards a contradiction that  $\model$ is not an equilibrium model of $\Gamma$.
	We assume without loss of generality that $\model$ is total but one of the following conditions fails.

	\begin{itemize}
		\item $\model, (0,0) \not \models \Gamma$. Assume that $\Gamma \not = \emptyset$ so there exists $\varphi \in \Gamma$ such that $\model , (0,0) \not \models \varphi$.
		This means that $\varphi \not \in \Th{\model}$. Since item~\ref{lemma:fixpoint:2} holds, $\Gamma \cup \lbrace \neg \varphi \mid \varphi \not \in \Th{\model}\rbrace \not\Inf{\THT} \varphi$.
		It follows that there exists
		$\model'=\tuple{(W,\peq,S),V'}$ such that $\model',(0,0) \models \Gamma \cup \lbrace \neg \varphi \mid \varphi \not \in \Th{\model}\rbrace$ but $\model', (0,0) \not \models \varphi$.
		Since $\model', (0,0) \models \Gamma$ and $\varphi \in \Gamma$ then $\model',(0,0) \models \varphi$.
		Since, $\varphi\not \in \Th{\model}$ then $\model', (0,0)\models \neg \varphi$.
		From the two previous points we conclude that $\model', (0,0)\models \bot$: a contradiction.

		\item $\model, (0,0) \models \Gamma$ but there exists $\model' = \tuple{(W,\peq,S),V'}$ such that $\model' < \model$ and $\model', (0,0) \models \Gamma$. From $\model' < \model$ follows that there exists $i \ge 0$ and $\next^i p \in \langT$ such that $\model', (0,0) \not \models \next^i p$, but $\model, (0,0) \models \next^i p$.

		Since $\next^i p \in \Th{\model}$ then
		$\Gamma \cup \lbrace \neg \varphi \mid \varphi \not \in \Th{\model}\rbrace \Inf{\THT} \next^i p$.
		It can be checked that $\model',(0,0) \models \lbrace \neg \varphi \mid \varphi \not \in \Th{\model}\rbrace$. Therefore, $\model', (0,0) \models \next^i p$, a contradiction.
	\end{itemize}
\end{proof}

\subsection{Temporal safe beliefs}

For extending Definition~\ref{def:safe-belief} to the temporal case,
we need some extra definitions.
Since in the temporal case the truth of an atom may vary along time,
we define the so-called set of \emph{temporal atoms} associated with a signature $\PV$ (in symbols, $\TPV$) as
\begin{displaymath}
  \TPV := \lbrace \next^i p \mid p \in \PV \hbox{ and } i \ge 0 \rbrace.
\end{displaymath}
Clearly, for any $p \in \PV$, $\next^0 p := p$, so $\PV\subseteq \TPV$.
\begin{definition}[$\itlbd{n}$-temporal safe belief]\label{def:temporal:safe-belief}
  Let $\Gamma$ be a temporal theory.
  The set $T \subseteq \TPV$ is said to be a $\itlbd{n}$-\emph{temporal safe belief set} with respect to $\Gamma$ if
  \begin{enumerate}
  \item $\Gamma \cup \lbrace \next^i \neg \neg  p \mid \next^i p \in T \rbrace \cup \lbrace \next^i \neg  p \mid \next^i p \not \in T\rbrace$ is $\itlbd{n}$-consistent  and
  \item\label{safe:belief:cond:2} $\Gamma \cup \lbrace  \next^i \neg \neg p \mid \next^i p \in T \rbrace \cup \lbrace \next^i \neg  p \mid  \next^i p  \not \in T \rbrace \Inf{\itlbd{n}} T$.\qed
  \end{enumerate}
\end{definition}
In the definition above,
$\itlbd{n}$ can be exchanged by any other proper intermediate temporal logic $\logic{X}$.
In the particular case of \THT,
we can prove a correspondence between \THT-temporal safe beliefs and temporal equilibrium models.
\begin{definition}\label{def:correspondence}
  Given a total \THT\ model $\model=\tuple{(W,\peq,S),V}$ we define
  \begin{displaymath}
    T := \lbrace \next^i p \mid p \in V((i,0)) \hbox{ and } i \ge 0 \rbrace.
  \end{displaymath}
  Clearly, $T \subseteq \TPV$.
  Conversely, given $T$ we retrieve $\model=\tuple{(W,\peq,S),V}$ by setting
  \begin{displaymath}
    V((i,0)),V((i,1)):=\lbrace p \mid \next^i p \in T\rbrace,
    \text{ for all } i \ge 0.
  \end{displaymath}
\end{definition}
\begin{proposition}
  For any temporal theory $\Gamma$,
  any total \THT\ model $\model= \tuple{(W,\peq,S),V}$ and
  set $T \subseteq \TPV$
  related as described in Definition~\ref{def:correspondence},
  the following items are equivalent:
  \begin{enumerate}[label=\arabic*)]
  \item\label{prop:a} $\model$ is a temporal equilibrium model of $\Gamma$
  \item\label{prop:b} $T$ is a \THT-temporal safe belief of $\Gamma$
  \end{enumerate}
\end{proposition}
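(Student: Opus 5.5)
Write $\Delta_T := \lbrace \next^i \neg\neg p \mid \next^i p \in T\rbrace \cup \lbrace \next^i \neg p \mid \next^i p \notin T\rbrace$. The plan is to prove both directions directly from the \THT\ semantics, using Corollary~\ref{cor:negation}. The key observation is the following characterization:
\begin{equation*}
\model', (0,0) \models \Delta_T \iff V'((i,1)) = \lbrace p \mid \next^i p \in T\rbrace \text{ for all } i\ge 0 .
\end{equation*}
This holds for any \THT\ model $\model'$ on the standard frame. By Corollary~\ref{cor:negation}, $\model',(0,0)\models \next^i\neg p$ iff $\model',(i,0)\models\neg p$ iff $p\notin V'((i,1))$. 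Similarly, $\next^i\neg\neg p$ holds at $(0,0)$ iff $p\in V'((i,1))$, since the world $(i,1)$ is classical. Thus $\Delta_T$ fixes exactly the ``there'' component to be the total model $\model$ associated with $T$ via Definition~\ref{def:correspondence}. Moreover, $\model',(0,0)\models T$ iff $V'((i,0))\supseteq \lbrace p\mid \next^i p\in T\rbrace$ for all $i$.

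For \ref{prop:a}$\Rightarrow$\ref{prop:b}, assume $\model$ is a temporal equilibrium model. Then $\model$ is total and $\model,(0,0)\models\Gamma$, and by the observation also $\model,(0,0)\models\Delta_T$. This gives consistency. For entailment, take any \THT\ model $\model'$ with $\model',(0,0)\models\Gamma\cup\Delta_T$. Its there-valuation coincides with that of $\model$, and its here-valuation is contained in it by persistence. So $\model'\le\model$. If $\model'<\model$, minimality would be contradicted; hence $\model'=\model$ and $\model',(0,0)\models T$. Alternatively, one may invoke Proposition~\ref{prop:total} after noting that every element of $\Delta_T$ is of the form $\neg\varphi$ with $\varphi\notin\Th{\model}$, or is implied by such formulas. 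The direct argument seems cleaner.

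For \ref{prop:b}$\Rightarrow$\ref{prop:a}, consistency of $\Gamma\cup\Delta_T$ gives some model of it. A subtle point is that consistency is defined over arbitrary \THT\ models and worlds, possibly at a world $(j,x)$ rather than $(0,0)$. I would handle this by shifting the time origin, giving a model $\model''$ with $\model'',(0,0)\models\Gamma\cup\Delta_T$. By entailment, $\model'',(0,0)\models T$. Hence $V''((i,0))\supseteq\lbrace p\mid\next^i p\in T\rbrace = V''((i,1))$, so $\model''$ is total and equal to $\model$. Therefore $\model,(0,0)\models\Gamma$. For minimality, let $\model'<\model$ with $\model',(0,0)\models\Gamma$. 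Since $\model'$ shares the there-valuation of $\model$, it satisfies $\Delta_T$ by the observation. Entailment then forces $V'((i,0))\supseteq V((i,0))$, contradicting $\model'<\model$.

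I expect the main obstacle to be the consistency step just described. Consistency must be related to satisfaction at $(0,0)$, which requires care if $\model'',(j,1)\models\ldots$ is the witness. In that case, both here and there at time $j$ can be taken to be the there-world, which gives a total shifted model. The bookkeeping that $\Delta_T$ pins down the there-component exactly is routine once Corollary~\ref{cor:negation} is applied.
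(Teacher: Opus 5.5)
Your proposal is correct and follows essentially the paper's argument. In both directions the key point is that $\Delta_T$ at $(0,0)$ fixes the there-valuation to that of $\model$, so the models of $\Gamma\cup\Delta_T$ at $(0,0)$ are exactly the models $\le\model$ that satisfy $\Gamma$, and entailment of $T$ then amounts to $\le$-minimality. The only deviations are minor: in 2)$\Rightarrow$1) you get $\model,(0,0)\models\Gamma$ by using entailment to force the consistency witness to coincide with $\model$, whereas the paper uses persistence to $(0,1)$ together with the totality of $\model$; and your time-shift remark covers a point the paper passes over silently (it also handles worlds other than $(0,0)$ in the entailment check of 1)$\Rightarrow$2)).
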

\begin{proof}

	To prove that~\ref{prop:a} implies~\ref{prop:b}, let us assume that $T$ is not a \THT-temporal safe belief of $\Gamma$.
	Let us assume that
	\begin{equation*}
	\Gamma \cup \lbrace \neg \neg \next^i p \mid \next^i p \in T \rbrace \cup \lbrace \neg \next^i p \mid \next^i p \not \in T  \rbrace
	\end{equation*}
        is consistent but
	\begin{equation*}
	\Gamma \cup \lbrace \neg \neg \next^i p \mid \next^i p \in T\rbrace \cup \lbrace \neg \next^i p \mid \next^i p \not \in T \rbrace\not \Inf{\THT} T.
	\end{equation*}
	This means that there exists a \THT~model $\model' = \tuple{(W, \peq ,S), V'}$ such that $\model', (0,0) \models \Gamma$, $\model', (0,0) \models \lbrace \neg \neg \next^i p \mid \next^i p \in T\rbrace$, $\model', (0,0) \models \lbrace \neg \next^i p \mid \next^i p \not \in T\rbrace$ but $\model', (0,0) \not \models T$.
	From $\model', (0,0) \models \lbrace \neg \neg \next^i p \mid \next^i p \in T\rbrace$ and  $\model', (0,0) \models \lbrace \neg \next^i p \mid \next^i p \not \in T\rbrace$ we can conclude that
	$V((i,1)) = V'((i,1))$, for all $i \ge 0$.

	From $\model', (0,0) \not \models T$ it follows that $\model', (0,0)\not \models \next^i p$ for some $\next^i p \in T$, with $i \ge 0$.
	This means that $\model', (i,0) \not \models p$.
	From $\model', (0,0) \models \lbrace \neg \neg \next^i p \mid \next^i p \in T\rbrace$ we conclude that $\model', (0,0) \models \neg \neg \next^i p$, so $\model', (i,1) \models p$.
	Therefore, $\model' < \model$.
	Since $\model', (0,0) \models \Gamma$, $\model$ is not an equilibrium model of $\Gamma$: a contradiction.

	Conversely, let us assume towards a contradiction that $T$ is a \THT-temporal safe belief of $\Gamma$ but $\model = \tuple{(W,\peq,S), V}$ is not a temporal equilibrium model of $\Gamma$.
	Assume, without loss a contradiction that $\model$ is total.
	Since $T$ is a \THT-temporal safe belief of $\Gamma$ then $\Gamma \cup \lbrace \neg \neg \next^i p \mid \next^i p \in T \rbrace \cup \lbrace \neg \next^i p \mid \next^i p \not \in T  \rbrace$ is consistent.
	Let $\model'=\tuple{(W,\peq,S),V'}$ be such that $\model', (0,0) \models \Gamma \cup \lbrace \neg \neg \next^i p \mid \next^i p \in T \rbrace \cup \lbrace \neg \next^i p \mid \next^i p \not \in T  \rbrace$.
	Since $\model', (0,0) \models \lbrace \neg \neg \next^i p \mid \next^i p \in T \rbrace \cup \lbrace \neg \next^i p \mid \next^i p \not \in T\rbrace$ then $V'(i,1)=V(i,1)$, for all $i \ge 0$.
	Since $\model', (0,0) \models \Gamma$ then $\model',(0,1)\models \Gamma$.
	Since $\model$ is total and $V'(i,1)=V(i,1)$, for all $i \ge 0$ then $\model, (0,0)\models \Gamma$.
	Since $\model$ is not an equilibrium model of $\Gamma$, there exists $\model''=\tuple{(W,\peq,S),V''}$ such that $\model''< \model$ and $\model'',(0,0)\models \Gamma$.
	However, this contradicts  Condition~\ref{safe:belief:cond:2} of Definition~\ref{def:temporal:safe-belief}.
\end{proof}

To conclude this section,
we show that temporal safe belief sets are preserved when changing the underlying logic.
In this case, we extend the results shown in Section~\ref{sec:fixpoint}.
\begin{lemma}\label{prop:itl:intermediate1}
  Let us consider $T\subseteq \TPV$ and
  let $\logic{X}$ and $\logic{Y}$ be two proper intermediate temporal logics satisfying $\logic{X}\subseteq \logic{Y}$.
  For any temporal theory $\Gamma$,
  if $T$ is a $\logic{X}$-temporal safe belief of $\Gamma$, then $T$ is a $\logic{Y}$-temporal safe belief of $\Gamma$.
\end{lemma}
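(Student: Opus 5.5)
The argument mirrors the propositional Lemma~\ref{lem:il:safe-beliefs:auxiliary1}: both clauses of Definition~\ref{def:temporal:safe-belief} transfer directly from $\logic{X}$ to $\logic{Y}$ via the two preservation results already stated for intermediate temporal logics. Throughout, write
\begin{displaymath}
\Delta_T := \Gamma \cup \lbrace \next^i \neg \neg p \mid \next^i p \in T \rbrace \cup \lbrace \next^i \neg p \mid \next^i p \not \in T\rbrace .
\end{displaymath}
Assume $T$ is an $\logic{X}$-temporal safe belief set of $\Gamma$. Then (a) $\Delta_T$ is $\logic{X}$-consistent, and (b) $\Delta_T \Inf{\logic{X}} T$. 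We must show that both statements hold with $\logic{Y}$ in place of $\logic{X}$.

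For the consistency clause, we apply Proposition~\ref{prop:consistency:itl} to the pair $\logic{X}\subseteq\logic{Y}$. It states that consistency is the same in both logics, so $\Delta_T$ is $\logic{Y}$-consistent. Alternatively, by Corollary~\ref{cor:intermediate2ltl:consistency}, $\logic{X}$-consistency and $\logic{Y}$-consistency are each equivalent to \LTL-consistency.

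For the entailment clause, we apply Proposition~\ref{prop:entailment:itl} with the theory $\Delta_T$ in place of $\Gamma$ and the theory $T$ in place of $\Delta$. From $\Delta_T \Inf{\logic{X}} T$ it gives $\Delta_T \Inf{\logic{Y}} T$. The underlying semantic reason is that every $\logic{Y}$-model is an $\logic{X}$-model, since the larger logic is characterised by a smaller class of frames. Hence every $\logic{Y}$-model world satisfying $\Delta_T$ also satisfies $T$. Together the two clauses show that $T$ is a $\logic{Y}$-temporal safe belief set of $\Gamma$.

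There is no real obstacle here; the only point needing care is the frame-class reading behind Proposition~\ref{prop:entailment:itl}. The definition of intermediate temporal logic is syntactic (containment of formula sets), whereas entailment is defined over model classes. So if one does not simply invoke that proposition, the step to justify is that $\logic{X}\subseteq\logic{Y}$ implies that every $\logic{Y}$-frame validates $\logic{X}$, and hence is an $\logic{X}$-frame. This is immediate when each logic is characterised as the set of formulas valid on its frame class. I would rely on Proposition~\ref{prop:entailment:itl} as stated rather than reprove this correspondence.
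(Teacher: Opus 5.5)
Your proposal is correct and follows the paper's proof: consistency transfers via Proposition~\ref{prop:consistency:itl}, and the entailment clause transfers via Proposition~\ref{prop:entailment:itl}, both applied to the pair $\logic{X}\subseteq\logic{Y}$. Your remark about reading $\logic{X}\subseteq\logic{Y}$ as a reverse inclusion of model classes is the same reasoning the paper uses in its appendix proof of Proposition~\ref{prop:entailment:itl}.
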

\begin{proof}

		If $T$ is a $\logic{X}$-temporal safe belief of $\Gamma$, it follows that
		\begin{enumerate}
			\item $\Gamma \cup \lbrace  \next^i \neg \neg p \mid \next^i p \in T \rbrace \cup \lbrace \next^i \neg  p \mid \next^i p \not \in T\rbrace$ is $\logic{X}$-consistent and
			\item $\Gamma \cup \lbrace  \next^i \neg \neg p \mid \next^i p \in T \rbrace \cup \lbrace \next^i \neg  p \mid \next^i p \not \in T\rbrace\Inf{\logic{X}} T$
		\end{enumerate}
		From the first item and Proposition~\ref{prop:consistency:itl} it follows that
		\begin{displaymath}
		\Gamma \cup \lbrace  \next^i \neg \neg p \mid \next^i p \in T \rbrace \cup \lbrace \next^i \neg p \mid \next^i p \not \in T\rbrace
		\end{displaymath}
                is $\logic{Y}$-consistent. From the second item and Proposition~\ref{prop:entailment:itl} we conclude that
		\begin{displaymath}
			\Gamma \cup \lbrace \next^i \neg \neg p \mid \next^i p \in T \rbrace \cup \lbrace \next^i \neg p \mid \next^i p \not \in T\rbrace\Inf{\logic{Y}} T.
		\end{displaymath}
		As a consequence, $T$ is a $\logic{Y}$-safe belief of $\Gamma$.
\end{proof}
We prove the converse of Lemma~\ref{prop:itl:intermediate1} below.
\begin{lemma}\label{prop:itl:intermediate2}
  Let us consider $T\subseteq \TPV$ and
  let $\logic{X}$ and $\logic{Y}$ be two proper intermediate temporal logics satisfying $\logic{X}\subseteq\logic{Y}$.
  For any temporal theory $\Gamma$,
  if $T$ is a $\logic{Y}$-temporal safe belief set of $\Gamma$,
  then $T$ is a $\logic{X}$-temporal safe belief set of $\Gamma$.
\end{lemma}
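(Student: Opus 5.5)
The plan is to mirror the proof of Lemma~\ref{prop:intermediate2}, replacing the propositional tools by their temporal counterparts: Lemma~\ref{lem:bisim:itl:topwidth1}, Lemma~\ref{lem:contraction:itl} and Lemma~\ref{lem:bisim:itl}. Write $\Delta_T$ for $\Gamma \cup \lbrace \next^i\neg\neg p \mid \next^i p \in T\rbrace \cup \lbrace \next^i\neg p \mid \next^i p\not\in T\rbrace$ and assume $T$ is a $\logic{Y}$-temporal safe belief set of $\Gamma$.

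\emph{Consistency and reduction to \THT.} Consistency transfers immediately, since by Proposition~\ref{prop:consistency:itl} (or Corollary~\ref{cor:intermediate2ltl:consistency}) $\Delta_T$ is $\logic{X}$-consistent. For entailment, $\logic{Y}\subseteq\THT$ and Proposition~\ref{prop:entailment:itl} give $\Delta_T \Inf{\THT} T$. Since $\logic{X}$ extends $\itlbd{n}$ for some $n$, Proposition~\ref{prop:entailment:itl} reduces the goal $\Delta_T\Inf{\logic{X}}T$ to proving $\Delta_T\Inf{\itlbd{n}} T$. So fix an $\itlbd{n}$ model $\model$ and a world $w$ with $\model,w\models\Delta_T$.

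\emph{Normalising the model.} The formulas $\next^i\neg\neg p$ and $\next^i\neg p$ say that $\model,S^i(w)\models \neg\neg p$ or $\neg p$ for each $i$ and $p$. Hence $\model,w\models \alwaysF(\neg p\vee\neg\neg p)$ for all $p$. Lemma~\ref{lem:bisim:itl:topwidth1} then yields a bisimilar $\model'$, $w'$ such that each $\peq'(S'^i(w'))$ has a unique maximal world $u_i$. By Lemma~\ref{lem:bisim:itl}, $\model',w'\models\Delta_T$. Since the worlds $u_i$ are maximal, hence classical, monotonicity (Proposition~\ref{PropIntCond}) applied along $S'^i$ gives $V'(u_i)=\lbrace p\mid \next^i p\in T\rbrace =: T_i$.

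\emph{Main step: induction on depth.} I then show, by induction on the depth, that every $v$ with $S'^j(w')\peq' v$ satisfies $\next^{k}p$ at $v$ for every $\next^{j+k}p\in T$. Equivalently, $S'^k(v)$ has valuation containing $T_{j+k}$, and exactly $T_{j+k}$ by the $\next^i\neg p$ constraints and monotonicity. Depth is finite and non-increasing along $S'$ (Proposition~\ref{prop:depth:next}), so I induct on $\depth{((W',\peq'),v)}$ simultaneously for all time shifts. Maximal $v$ are handled by the previous step, using Proposition~\ref{prop:maximal:next} to keep maximality along $S'$. For non-maximal $v$, the induction hypothesis gives that every strict successor of every $S'^k(v)$ has valuation $T_{j+k}$. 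Here I will need backward confluence: every $\peq'$-successor of $S'^k(v)$ is $S'^k$ of some $\peq'$-successor of $v$, and if that successor is strictly above $v$ its depth is smaller. The case where it equals $v$ has to be handled separately and is the delicate point.

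\emph{Contraction and conclusion.} With the hypotheses of Lemma~\ref{lem:contraction:itl} established at $v$, I obtain a \THT\ model bisimilar with $v\bisim(0,0)$, which satisfies $\Delta_T$ shifted to time $j$. Here $\Gamma$ itself is only known at $w'$, so I will run the contraction at $v$ only for $v\in\peq'(w')$, that is $j=0$. Handling the later time points then relies on the contraction lemma treating the whole time line of $v$ at once, which is exactly what the uniform hypothesis over all $S'^k(v)$ provides. Using $\Delta_T\Inf{\THT}T$ and transferring back through the bisimulation gives $\model',v\models T$, hence $\model',w'\models T$, and finally $\model,w\models T$ by Lemma~\ref{lem:bisim:itl}.

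I expect the main obstacle to be the temporal induction. The contraction lemma requires every strict successor of every $S'^k(v)$ to carry exactly $T_k$, not just those of $v$. Securing this needs backward confluence to pull $\peq'$-successors of $S'^k(v)$ back to successors of $v$, together with a careful treatment of successors that collapse onto $S'^k(v)$ itself.
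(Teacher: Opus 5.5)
Your proposal is correct and is essentially the paper's proof. It follows the same sequence: reduction to $\Inf{\THT}$ via Proposition~\ref{prop:entailment:itl}, normalisation with Lemma~\ref{lem:bisim:itl:topwidth1}, induction on $\peq'$-depth over $\peq'(w')$ (only the shift $j=0$ is actually used, as you end up noting), contraction at $v$ with Lemma~\ref{lem:contraction:itl}, and transfer back with Lemma~\ref{lem:bisim:itl}; you are also more explicit than the paper about where backward confluence enters. The case you flag as delicate is vacuous, since if iterated backward confluence pulls a strict successor $y$ of $S'^k(v)$ back to $v$ itself then $y=S'^k(v)$, and the only genuine collapse, some $x$ with $v\prec x$ and $S'^k(x)=S'^k(v)$, is harmless because the induction hypothesis at $x$ then forces $V'(S'^k(v))=T_k$, which is what lets $S'^k(v)$ also be related to $(k,1)$ (a pair the contraction bisimulation tacitly needs to be closed under $S'$).
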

\begin{proof}

	Let us assume that $T$ is a $\logic{Y}$-temporal safe belief of $\Gamma$. It holds that
	\begin{enumerate}[label=(\alph*)]
		\item $\Gamma \cup \lbrace  \next^i \neg \neg p \mid \next^i p \in T \rbrace \cup \lbrace \next^i \neg p \mid \next^i p \not \in T\rbrace$ is $\logic{Y}$-consistent and
		\item $\Gamma \cup \lbrace  \next^i \neg \neg p \mid \next^i p \in T \rbrace \cup \lbrace  \next^i \neg p \mid \next^i p \not \in T\rbrace\Inf{\logic{Y}} T$.
	\end{enumerate}

        From item (a) and Proposition~\ref{prop:consistency:itl} it follows that
	\begin{displaymath}
	\Gamma \cup \lbrace \next^i \neg \neg p \mid p \in T \rbrace \cup \lbrace \next^i \neg p \mid p \not \in T\rbrace
	\end{displaymath}
	is both $\logic{X}$-consistent and $\itlbd{n}$-consistent.
	From $\logic{Y}\subseteq \logic{\THT}$, the second item and Proposition~\ref{prop:entailment:itl}
	\begin{equation}
		\Gamma \cup \lbrace  \next^i \neg \neg p \mid \next^i p \in T \rbrace \cup \lbrace  \next^i \neg p \mid \next^i p \not \in T\rbrace\Inf{\THT} T. \label{tht:cons}
	\end{equation}
	Let $\model=\tuple{(W,\peq,S),V}$ be an $\itlbd{n}$ model and $w \in W$ satisfying
	\begin{enumerate}[label=(\alph*),start=3]
		\item\label{prop:itl:intermediate2:c1} $\model, w \Inf{\itlbd{n}} \Gamma$,
		\item\label{prop:itl:intermediate2:c2} $\model, w \Inf{\itlbd{n}} \lbrace  \next^i \neg \neg p \mid \next^i p \in T \rbrace $ and
		\item\label{prop:itl:intermediate2:c3} $\model, w \Inf{\itlbd{n}} \lbrace \next^i \neg p \mid \next^i p \not \in T\rbrace $.
\end{enumerate}
	From items~\ref{prop:itl:intermediate2:c2} and~\ref{prop:itl:intermediate2:c3} it follows that $\model, w \models \lbrace \alwaysF(\neg \neg p \vee \neg p) \mid p \in \PV \rbrace$.

	In view of Lemma~\ref{lem:bisim:itl:topwidth1}, there exists an \itlbd{n} model $\model' = \tuple{(W',\peq',S'),V'}$, $w' \in W'$ and a bisimulation $\bisim \subseteq W \times W'$ such that, $w\bisim w'$ and for each $i \ge 0$, the intuitionistic subframe generated by $S'^i(w')$ contains an unique maximal point that we name $u'_i$.
By Lemma~\ref{lem:bisim:itl}, $\model', w' \Inf{\itlbd{n}} \Gamma$, $\model', w' \Inf{\itlbd{n}} \lbrace  \next^i \neg \neg p \mid \next^i p \in T \rbrace $ and $\model', w' \Inf{\itlbd{n}} \lbrace \next^i \neg p \mid \next^i p \not \in T\rbrace $.
We prove the following claim:
	\begin{equation}
		\hbox{For all $v' \in W'$, if $w'\peq v'$ then }	\model', v' \models  T. \label{prop:itl:intermediate2:claim}
	\end{equation}
	The proof is done by induction in $\depth{((W',\peq'),v')}$.

	\begin{enumerate}
		\item If $\depth{((W',\peq'),v')}=1$ then $v'$ is maximal, so $v'=u'_0$.
		By monotonicity,
\begin{displaymath}
			\model', u'_0 \models  \lbrace \next^i \neg \neg p \mid \next^i p \in T \hbox{ and } i\ge 0 \rbrace \cup \lbrace \next^i \neg p \mid \next^i p  \not \in T \hbox{ and } i\ge0\rbrace.
		\end{displaymath}
		Because of Proposition~\ref{prop:maximal:next}, $S^i(u'_0)$ is $\peq$-maximal, for all $i \ge 0$.
		Therefore,
		\begin{displaymath}
			\model', S'^i(u'_0) \models \lbrace  p \mid \next^i p \in T \rbrace  \cup \lbrace  \neg p \mid \next^i p \not \in T\rbrace,
		\end{displaymath}
		for all $i\ge 0$. Hence, $V'(S'^i(u'_0)) = \lbrace p \mid \next^i p \in T\rbrace$, for all $i\ge 0$ so $\model', u'_0 \models T$.

		\item For the inductive step, let us assume that $\depth{((W,\peq'),v')} = n+1$ and the claim holds for every $x \in W'$ satisfying $v'\peq x$ and $\depth{((W,\peq'),x)} \le n$ (so $v' \prec x$).
		By Proposition~\ref{prop:depth:next}, $\depth{((W,\peq),S'^i(v'))} \le n+1$, for all $i\ge 0$.

		By induction hypothesis, for all $x \in \prec'(v')$, $\model', x \models T$.  Moreover, by monotonicity, for all $x \in \peq'(v')$ (including $v'$ itself), $\model', x \models \lbrace \next^i \neg p \mid \next^i p \not \in T\rbrace$.

		By the semantics, for all $i\ge 0$ and for all $y \peq'(S'^i(v'))$, $\model', y \models \lbrace \neg p \mid \next^i p \not \in T\rbrace$.

		By the semantics, for all $i\ge 0$ and for all $y \prec'(S'^i(v'))$, $\model', y \models \lbrace p \mid \next^i p  \in T\rbrace$.
		From the two previous results, it follows that for all $i\ge 0$ and for all $y \in \prec'(S'^i(v'))$, $V'(y) = \lbrace p \mid \next^i p \in T \rbrace$.
		Therefore, every point $y \in \prec'(S'^i(v'))$ satisfies exactly the set $\lbrace p \mid \next^i p \in T\rbrace$.
		By Lemma~\ref{lem:contraction:itl} there exists a \THT\ model $\model''=\tuple{(\mathbb{N}\times \lbrace 0,1\rbrace,\peq'',S''),V''}$ and a bisimulation $\bisim'\subseteq W'\times (\mathbb{N}\times \lbrace 0,1\rbrace)$ such that $v' \bisim' (0,0)$.
		Because of items~\ref{prop:itl:intermediate2:c1}-\ref{prop:itl:intermediate2:c3}, the monotonicity property and Lemma~\ref{lem:bisim:itl},
		\begin{displaymath}
			\model'', (0,0) \Inf{\THT} \Gamma \cup \lbrace  \next^i \neg \neg p \mid \next^i p \in T \rbrace\cup \lbrace \next^i \neg p \mid \next^i p \not \in T\rbrace.
		\end{displaymath}
		Since $\model''$ is a \THT\ model, $\model'',(0,0) \Inf{\THT} T$ because of~\eqref{tht:cons}.
		From $v' \bisim' (0,0)$ and Lemma~\ref{lem:bisim:itl} it follows $\model', v' \Inf{\itlbd{n}} T$.
\end{enumerate}

	As a consequence, $\model', w' \Inf{\itlbd{n}} T$. From $w  \bisim w'$ and Lemma~\ref{lem:bisim:itl}, $\model,w \Inf{\itlbd{n}} T$.
	Since $\model$ was chosen arbitrary it follows that
	\begin{equation*}
		\Gamma \cup \lbrace  \next^i \neg \neg p \mid \next^i p \in T \rbrace \cup \lbrace  \next^i \neg p \mid \next^i p \not \in T\rbrace\Inf{\itlbd{n}} T,
	\end{equation*}
	so $T$ is a \itlbd{n}-safe belief of $\Gamma$. By Lemma~\ref{prop:itl:intermediate1} and the fact that $\itlbd{n}\subseteq \logic{X}$,
	\begin{equation*}
		\Gamma \cup \lbrace  \next^i \neg \neg p \mid \next^i p \in T \rbrace \cup \lbrace  \next^i \neg p \mid \next^i p \not \in T\rbrace\Inf{\logic{X}} T.
	\end{equation*}
\end{proof}

Lemmas~\ref{prop:itl:intermediate1} and~\ref{prop:itl:intermediate2} state that \THT\ can be replaced by any proper intermediate temporal logic extending \itlbd{n} without affecting the resulting safe beliefs.
\begin{theorem}\label{thm:main}
For any intermediate temporal logic $\logic{X}$ satisfying $\itlbd{n} \subseteq \logic{X} \subseteq \THT$ and for any theory $\Gamma$, the set of $\logic{X}$-temporal safe beliefs of $\Gamma$ coincide.
\end{theorem}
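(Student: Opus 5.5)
The plan is to derive the theorem directly from Lemmas~\ref{prop:itl:intermediate1} and~\ref{prop:itl:intermediate2}, using \THT\ as a common reference point. Fix $\Gamma$ and an intermediate temporal logic $\logic{X}$ with $\itlbd{n}\subseteq\logic{X}\subseteq\THT$. It suffices to show that for every $T\subseteq\TPV$, $T$ is an $\logic{X}$-temporal safe belief set of $\Gamma$ iff $T$ is a \THT-temporal safe belief set of $\Gamma$. Once this is established, any two logics in the range have the same safe belief sets, because each has the same safe belief sets as \THT.

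For the forward direction, I take $\logic{Y}:=\THT$ in Lemma~\ref{prop:itl:intermediate1}. This needs $\logic{X}\subseteq\THT$, which is a hypothesis, and that both logics are proper. \THT\ is proper, being the strongest proper intermediate temporal logic. Since $\logic{X}\subseteq\THT\subsetneq\LTL$, $\logic{X}$ is also proper. So any $\logic{X}$-temporal safe belief set is a \THT-temporal safe belief set.

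For the converse, I apply Lemma~\ref{prop:itl:intermediate2} with the same pair $\logic{X}\subseteq\THT$. A \THT-temporal safe belief set is then an $\logic{X}$-temporal safe belief set. The proof of that lemma actually passes through $\itlbd{n}$ and then lifts to $\logic{X}$ via Proposition~\ref{prop:entailment:itl}, so nothing further is needed.

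The only delicate point is the bookkeeping on the parameter $n$. The definition of intermediate temporal logic requires $\itlbd{n}\subseteq\logic{X}$ for some $n$, and the contraction argument in Lemma~\ref{prop:itl:intermediate2} uses models of finite depth. I will therefore note that the $n$ in the statement is the same one used in that lemma. Since \THT\ extends $\itlbd{2}$, and $\itlbd{n}\subseteq\itlbd{2}$ for $n\ge 2$, the chain $\itlbd{n}\subseteq\logic{X}\subseteq\THT$ is consistent with the lemma's hypotheses. I expect no real obstacle beyond checking these side conditions (properness and the shared $n$), since all the semantic work is already done in the two lemmas.
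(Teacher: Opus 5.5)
Your proposal is correct and matches the paper's proof, which derives the theorem directly from Lemmas~\ref{prop:itl:intermediate1} and~\ref{prop:itl:intermediate2}. Comparing each admissible $\logic{X}$ with \THT, and checking that both $\logic{X}$ and \THT\ are proper (so that the lemmas apply), is exactly the bookkeeping those two lemmas require.
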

 \section{Conclusions}\label{sec:conclusions}
In this paper,
we revisited two well-known fixpoint characterizations of propositional equilibrium logic and answer sets.
The first characterization, originally defined by \cite{pearce06a,pearce99b}
is based on the concept of theory completions,
which has also been used in autoepistemic and default logic~\citep{martru93a,besnard89}.
We extended this characterization to the case of \TEL{}.

The second characterization, introduced by \cite{osnaar05a} and known as safe belief sets,
relates the equilibrium logic of arbitrary theories to syntactic entailment in intuitionistic logic.\footnote{Similar results were also obtained by David Pearce for the case of disjunctive logic programs~\citep{pearce99a}.}
The authors proved that intuitionistic logic can be replaced by any proper intermediate logic without changing the set of safe beliefs.
Their results rely on syntactic transformations that cannot be easily reproduced in the temporal case.
Therefore, as a first contribution, we reformulated Osorio et al.'s approach in terms of semantic consequence in intuitionistic logic.

We have identified a family of intuitionistic temporal logics, $\itlbd{n}$,
for which we have defined a temporal extension of safe beliefs.
We first show that any proper intermediate temporal logic extending $\itlbd{n}$ can be used instead,
without affecting the resulting set of temporal safe beliefs.
Moreover, we show that in the case of \THT{},
temporal safe beliefs correspond to temporal equilibrium models.

We believe our results have fostered connections between temporal answer set programming and constructive modal logic,
while also enhancing the visibility of \THT\ within the field of constructive temporal logics.
In future work,
we plan to investigate intermediate logics not covered here, such as \iltl, \itlb,
and real-valued Gödel temporal logics~\citep{ADFM25}.
Since consistency is not always preserved across these logics,
it remains unclear whether the set of safe beliefs is preserved
when using one of them as the monotonic basis for temporal equilibrium logic.

\vspace{30pt}

\noindent \textbf{Competing interests declaration.} The authors declare none.


\appendix
\section{Proofs}

\begin{proofof}{Proposition~\ref{prop:consequence:intermediate}}
	Assume towards a contradiction that $\Gamma \not \Inf{\logic{X}'} \varphi$.
	This means that there exists a model $\model=\tuple{(W,\peq),V}$ and $w\in W$ such that $\model, w \Inf{\logic{X}'}
	\Gamma$ and $\model, w \not \Inf{\logic{X}'} \varphi$.
	Since $\logic{X} \subseteq \logic{X}'$, $\model$ can be reconsidered within the logic $\logic{X}$,
	meaning that $\model, w \Inf{\logic{X}} \Gamma$ and $\model, w \not \Inf{\logic{X}} \varphi$: a contradiction.
\end{proofof}

\begin{proofof}{Proposition~\ref{prop:weakexmiddle}}
	Let $\model = \tuple{(W,\peq), V}$ be an intuitionistic model and let $w \in W$	be such that $\model, w \models \lbrace \neg p \vee \neg \neg p\mid p \in \PV \rbrace$. 
	It follows that all maximal $\peq$-worlds in the subframe generated by $x$ satisfy the same propositional variables. 

	Let us assume towards a contradiction that $\model, w \models \lbrace \neg p \vee \neg \neg p\mid p \in \PV \rbrace$ but there exist two maximal worlds $u$ and $u'$ in $W$ such that 
	$w \peq u$, $w\peq u'$ but $V(u)\not = V(u')$. 
	Let us assume, without loss of generality, that $V(u)\not \subseteq V(u')$. 
	Therefore, there exists $p \in \PV$ such that $p \in V(u)$ but $p \not \in V(u')$.
	Since $u$ is a maximal world, it is classical, so $\model, u \models p \vee \neg p$, so $\model, u \not \models \neg p$.
	By monotonicity, $\model, w \not \models \neg p$.
	Since $\model, w \models \neg p \vee \neg \neg p$, $\model, w \models \neg \neg p$.
	By monotonicity, $\model, u' \models \neg \neg p$.
	Since $u'$ is also classical, it follows that $\model, u' \models p$: a contradiction.
\end{proofof} 

\begin{proofof}{Lemma~\ref{lem:il:safe-beliefs:auxiliary1}}
	 If $T$ is a $\logic{X}$-safe belief of $\Gamma$, it follows that
	\begin{enumerate}
		\item $\Gamma \cup \lbrace \neg \neg p \mid p \in T \rbrace \cup \lbrace \neg p \mid p \not \in T\rbrace$ is $\logic{X}$-consistent and
		\item $\Gamma \cup \lbrace \neg \neg p \mid p \in T \rbrace \cup \lbrace \neg p \mid p \not \in T\rbrace\Inf{\logic{X}} T$
	\end{enumerate}
	From the first item and Proposition~\ref{lem:consistency},
	it follows that
	$\Gamma \cup \lbrace \neg \neg p \mid p \in T \rbrace \cup \lbrace \neg p \mid p \not \in T\rbrace$
	is $\logic{Y}$-consistent.
	From the second item and Proposition~\ref{prop:consequence:intermediate}, we conclude that
	\begin{displaymath}
		\Gamma \cup \lbrace \neg \neg p \mid p \in T \rbrace \cup \lbrace \neg p \mid p \not \in T\rbrace\Inf{\logic{Y}} T.
	\end{displaymath}
\end{proofof}	 

\begin{proofof}{Proposition~\ref{prop:maximal:next}}
Assume towards a contradiction that $S(w)$ is not maximal w.r.t. $\peq$. 
Therefore there exists $v \in W$ such that $S(w)\prec v$.
By the backward confluence property there exists $u\in W$ such that $w \peq u$ and $S(u)=v$.
Since $w$ is maximal w.r.t. $\peq$ then $w=u$.
this would imply that $S$ is not a function: a contradiction.
\end{proofof}

\begin{proofof}{Lemma~\ref{lem:itlbn2ltl-consistency}}
From right to left, in view of Observation~\ref{observation:ltl}, \LTL{} models are an specific case of $\itlbd{n}$ models so if $\Gamma$ is \LTL{}-consistent 
then it is $\itlbd{n}$-consistent.	
From left to right, if $\Gamma$ is $\itlbd{n}$ consistent then there exists $\itlbd{n}$ model $\model = \tuple{(W,\peq,S),V}$ and $w \in W$ such that $\M, w \models \Gamma$.
Since $\M$ is of finite depth, then there exists a maximal Kripke world $v\in W$ such that $w \peq v$ and $\M, v \models \Gamma$.
Since $w \peq v$ then $S(w)\peq S(v)$ because of the forward confluence property. Moreover, in view of Proposition~\ref{prop:maximal:next}, $S(v)$ is also a maximal point.
Let us define $\model' = \tuple{(W',\peq',S'), V'}$ where $W' = \lbrace S^n(v) \mid n \ge 0 \rbrace$, $\peq' = \lbrace (x,x) \mid x \in W'\rbrace$ and $S'(x) = S(x)$, for all $x \in W'$.
Clearly, $\model'$ is an \LTL{} satisfying $\Gamma$ at $v \in W'$.
\end{proofof}

\begin{proofof}{Lemma~\ref{cor:intermediate2ltl:consistency}}
The right to left direction follows Observation~\ref{observation:ltl} as in the proof of Lemma~\ref{lem:itlbn2ltl-consistency}.
For the left to right direction, note that, since $\logic{X}$ is an intermediate logic,, any model in the logic $\logic{X}$ is also a $\itlbd{n}$ model. 
By Lemma~\ref{lem:itlbn2ltl-consistency}, if $\Gamma$  $\logic{X}$-consistent then $\Gamma$ is $\itlbd{n}$-consistent and, because of Lemma~\ref{lem:itlbn2ltl-consistency},
$\Gamma$ is $\LTL$-consistent. 	
\end{proofof}

\begin{proofof}{Proposition~\ref{prop:depth:next}}
	Assume towards a contradiction that $\depth{((W,\peq),S(w))}> n$.
	Assume without loss of generality that $\depth{((W,\peq),S(w))}= n+1$.
	By definition, there exists $w'_0, w'_1, \cdots, w'_{n+1}$ such that 
	$S(w) = w'_0$ and $w'_i \prec w'_{i+1}$\footnote{We consider $\prec$ because all those $w'_i$ must be different.}, for all $\rangeco{i}{0}{n}$.
	From $S(w) = w'_0 \prec w'_1$ and the backward confluence property it follows that there exists $w_1\in W$ such that $w \peq w_1$ and $S(w_1) = w'_1$.
	From $S(w_1) = w'_1\prec w'_2$ and the backward confluence property, it follows that there exists $w_2\in W$ such that $w_1 \prec w_2$ and $S(w_2) = w'_2$.
	By continuously applying the same reasoning we would conclude that, since $S(w_n) = w'_n\prec w'_{n+1}$, there exists $w_n+1\in W$ such that 
	$w_n\prec w_{n+1}$ and $S(w_{n+1}) = w'_{n+1}$. 
	Therefore, there exists a sequence $w_0,w_1,\cdots,w_{n+1}$ with $w=w_0$ and for all $i\ge 0$, $S(w_i) = w'_i$.
	Since $\depth{((W,\peq),w)}\le n$, $w_j=w_k$ for some $\rangecc{i,k}{0}{n+1}$. 
	Since $S$ is a function, it would mean that $S(w_j) = w'_j$ and $S(w_j) = w'_k$, so $w'_j = w'_k$: a contradiction. 
\end{proofof}

\begin{proofof}{Proposition~\ref{prop:consistency:itl}}
We use Corollary~\ref{cor:intermediate2ltl:consistency} to conclude that $\Gamma$ is $\logic{X}$-consistent iff  $\Gamma$ is $\LTL$-consistent iff $\Gamma$ is $\logic{Y}$-consistent.
\end{proofof}	

\begin{proofof}{Proposition~\ref{prop:entailment:itl}} Assume towards a contradiction that $\Gamma \not \Inf{\logic{y}} \Delta$.
	Therefore, there exists $\model=\tuple{(W,\peq,S),V}$ and $w\in W$ such that $\model, w \Inf{\logic{Y}}\Gamma$ and $\model, w \not \Inf{\logic{Y}} \Delta$.
	Since $\logic{X} \subseteq \logic{Y}$, $\logic{X}$ is weaker than $\logic{Y}$. Therefore, $\model$ is also a model within the logic $\logic{X}$.
	Therefore, $\model, w \Inf{\logic{X}}\Gamma$ and $\model, w \not \Inf{\logic{X}} \Delta$, meaning that $\Gamma \not \Inf{\logic{X}} \Delta$: a contradiction.
\end{proofof}

\begin{proofof}{Theorem~\ref{thm:main}}
	Directly from lemmas~\ref{prop:itl:intermediate1} and~\ref{prop:itl:intermediate2}.
\end{proofof}  
\end{document}